\newtheorem*{theorem*}{Theorem}
\newtheorem*{corollary*}{Corollary}
\def\maxwidth{\ifdim\Gin@nat@width>\linewidth\linewidth\else\Gin@nat@width\fi}
\def\maxheight{\ifdim\Gin@nat@height>\textheight\textheight\else\Gin@nat@height\fi}
\let\oldparagraph\paragraph
\renewcommand{\paragraph}[1]{\oldparagraph{#1}\mbox{}}
\let\oldsubparagraph\subparagraph
\renewcommand{\subparagraph}[1]{\oldsubparagraph{#1}\mbox{}}
\let\rmarkdownfootnote\footnote%
\def\footnote{\protect\rmarkdownfootnote}
\newcommand{\ee}{\end{equation}}
\newcommand{\bea}{\begin{eqnarray}}
\newcommand{\eea}{\end{eqnarray}}
\newcommand{\mbf}[1]{\mathbf{#1}}
\newcommand{\mbs}[1]{\boldsymbol{#1}}
\newcommand{\ba}{\mbf{a}}
\newcommand{\bO}{\mbf{O}}
\newcommand{\bb}{\mbf{b}}
\newcommand{\bd}{\mbf{d}}
\newcommand{\bq}{\mbf{q}}
\newcommand{\bu}{\mbf{u}}
\newcommand{\bv}{\mbf{v}}
\newcommand{\bx}{\mbf{x}}
\newcommand{\bz}{\mbf{z}}
\newcommand{\be}{\mbf{e}}
\newcommand{\bp}{\mbs{p}}
\newcommand{\bA}{\mbf{A}}
\newcommand{\bB}{\mbf{B}}
\newcommand{\bJ}{\mbf{J}}
\newcommand{\bM}{\mbf{M}}
\newcommand{\tM}{\widetilde {\bM}}
\newcommand{\tp}{\widetilde {\bp}}
\newcommand{\bU}{\mbf{U}}
\newcommand{\bV}{\mbf{V}}
\newcommand{\bD}{\mbf{D}}
\newcommand{\bg}{\mbf{g}}
\newcommand{\bzeta}{\mbs{\zeta}}
\newcommand{\bOmega}{\mbs{\Omega}}
\newcommand\numberthis{\addtocounter{equation}{1}\tag{\theequation}}
\newcommand{\balpha}{{\mbs{\alpha}}}
\newcommand{\bgamma}{{\mbs{\gamma}}}
\newcommand{\btheta}{{\mbs{\theta}}}
\newcommand{\bTheta}{{\mbs{\Theta}}}
\newcommand{\bI}{\mbf{I}}
\newcommand{\bH}{\mbf{H}}
\newcommand{\bone}{\mbs{1}}
\newcommand{\bzero}{\mbs{0}}
\newcommand{\ben}{\begin{equation*}}
\newcommand{\een}{\end{equation*}}
\newcommand{\bean}{\begin{eqnarray*}}
\newcommand{\eean}{\end{eqnarray*}}
\newcommand{\bsm}{\begin{smallmatrix}}
\newcommand{\esm}{\end{smallmatrix}}
\newcommand{\bmat}{\begin{matrix}}
\newcommand{\emat}{\end{matrix}}
\newcommand{\ind}{\overset{\mbox{ind}}{\sim}}
\newcommand{\given}{|\;}
\newcommand{\calU}{\mathcal U}
\newcommand{\calL}{\mathcal L}
\newcommand{\calS}{\mathcal S}
\newcommand{\calA}{\mathcal A}
\newcommand{\elll}{\ell_{\calL,n}}
\newcommand{\ellN}{\ell_{\calL,\xi N}}
\newcommand{\ellu}{\ell_{\calU,N}}
\newtheorem{theorem}{Theorem}
\newcommand{\eps}{\epsilon}
\newtheorem{corollary}{Corollary}
\newtheorem{lemma}{Lemma}
\DeclarePairedDelimiter{\ceil}{\lceil}{\rceil}
\newcommand{\blind}{1}
\begin{document}
\def\spacingset#1{\renewcommand{\baselinestretch}%
	{#1}\small\normalsize} \spacingset{1}

\if1\blind
{
	\title{\bf Generalized Bayes Quantification Learning under Dataset Shift}
	\author{Jacob Fiksel\thanks{
			The authors gratefully acknowledge \textit{Bill \& Melinda Gates Foundation through the grant number OPP1163221 to Johns Hopkins University for the Countrywide Mortality Surveillance for Action project in Mozambique and Epi/Biostats of Aging Training Grant, Funded by National Institute of Aging T32AG000247}}\hspace{.2cm} \\
		Department of Biostatistics Johns Hopkins University\\
		and \\
		Abhirup Datta\footnote{abhidatta@jhu.edu}\\
		Department of Biostatistics Johns Hopkins University\\		and \\
		Agbessi Amouzou\\
		Department of International Health Johns Hopkins University\\
		and \\
		Scott Zeger \\
		Department of Biostatistics Johns Hopkins University}
	\maketitle
} \fi

\if0\blind
{
	\bigskip
	\bigskip
	\bigskip
	\begin{center}
		{\LARGE\bf Generalized Bayes Quantification Learning under Dataset Shift}
	\end{center}
	\medskip
} \fi

\bigskip

\begin{abstract}
	Quantification learning is the task of prevalence estimation for a test population using predictions from a classifier trained on a different population. 
	Quantification methods assume that the sensitivities and specificities of the classifier are either perfect or transportable from the training to the test population. 
	These assumptions are inappropriate in the presence of dataset shift, when the misclassification rates in the training population are not representative of those for the test population. Quantification under dataset shift has been addressed only for single-class (categorical) predictions and assuming perfect knowledge of the true labels on a small subset of the test population. We propose generalized Bayes quantification learning (GBQL) that uses the entire compositional predictions from probabilistic classifiers and allows for uncertainty in true class labels for the limited labeled test data. Instead of positing a full model, we use a model-free Bayesian estimating equation approach to compositional data using Kullback-Leibler loss-functions based only on a first-moment assumption. The idea will be useful in Bayesian compositional data analysis in general as it is robust to different generating mechanisms for compositional data and allows 0's and 1's in the compositional outputs thereby including categorical outputs as a special case. 
	We show how our method yields existing quantification approaches as special cases. 
	Extension to an ensemble GBQL that uses predictions from multiple classifiers yielding inference robust to inclusion of a poor classifier is discussed. We outline a fast and efficient Gibbs sampler using a rounding and coarsening approximation to the loss functions. We establish posterior consistency, asymptotic normality and valid coverage of interval estimates from GBQL, which to our knowledge are the first theoretical results for a quantification approach in presence of local labeled data. We also establish finite sample posterior concentration rate. Empirical performance of GBQL is demonstrated through simulations and analysis of real data with evident dataset shift.
\end{abstract}

\noindent%
{\it Keywords:}  Bayesian, compositional data, estimating equations, machine learning, quantification.
\vfill

\newpage
\spacingset{1.45} 

\hypertarget{introduction}{%
	\section{Introduction}\label{introduction}}
Classifiers are most commonly developed with the goal of obtaining accurate predictions for individual units. 
However, in some applications, the objective is not individual level predictions, but rather to learn about population-level distributions of a given outcome. Examples include sentiment analysis for Twitter users
\citep{giachanou2016}, estimating the prevalence of chronic fatigue syndrome \citep{valdez2018estimating}, and cause
of death distribution estimation from verbal autopsies \citep{king2008, mccormick2016insilico, tariff2_0, interva4, nbc}.

The task of predicting the population distribution $p(y)$ of unobserved true outcomes (labels) $y$ based on observed (and possibly high-dimensional) covariates $\bx$ has been termed {\em quantification} \citep{forman2005, bella2010quantification, gonzalez2017review, perez2019dynamic} in the machine learning literature. Since the covariates are usually passed through a ``trained'' classification algorithm $A$ to obtain predicted labels $\ba:=\ba(\bx)$, quantification can be viewed as prevalence estimation using these predicted labels, and mathematically can be formulated as solving for $p(y)$ from the identity 
\begin{equation}\label{eq:ql}
p(\ba) = \sum_y p(\ba \given y) p(y)\;.
\end{equation} 
Here $p(\ba)$ can be estimated as the mean $\widehat{p(\ba)}$ from a representative sample of predicted labels from the population of interest. However, as both $p(\ba \given y)$ and $p(y)$ on the right hand side are unknowns, assumptions need to be made on $p(\ba \given y)$ to identify $p(y)$. 

Quantification approaches like {\em Classify and Count} (CC) \citep{forman2005} and {\em Probabilistic average} (PA) \citep{bella2010quantification} simply estimate $p(y)$ by $\widehat{p(\ba)}$. This solution is justified only under the  assumption that the {\em misclassification rates} $p(\ba \given y)$ are perfect (i.e., the classifier has $100\%$ sensitivity and specificity). As no classifier is perfect, this assumption is always violated.

{\em Adjusted Classify and Count} (ACC) or {\em adjusted Probabilistic Average} (APA)  
adjust for classifier inaccuracy \citep{forman2008, bella2010quantification}. 
They estimate the classifier's true and false positive rates (or their multi-class equivalents), i.e., $p(\ba \given y)$ from the training data and assumes that these rates are the same in the population of interest (test data).
This assumption of $p_{tr}(\ba \given y)=p_{test}(\ba \given y)$, i.e., the sensitivity and specificity of the classifier is same in the training and test dataset, can be viewed as a {\em transportability} assumption. 
This is similar to {\em transportability} of clinical trial results 
\citep{westreich2017transportability, cole2010generalizing}. 
As 
\begin{equation}\label{eq:pred}
p(\ba \given y) = \int_\bx p(\ba \given \bx) p(\bx \given y) d\bx\;,
\end{equation} 
and $p(\ba \given \bx)$, the prediction map for the trained classifier is same for a given $\bx$ irrespective of whether $\bx$ is in the training or the test populations, 
implicit in the transportability assumption  
is the assumption that $p_{tr}(\mathbf{x}|y)= p_{test}(\mathbf{x}|y)$ \citep{perez2019dynamic}. 
The marginal distributions of the outcomes $p_{tr}(y)$ and $p_{test}(y)$ are allowed to be different.

{\em Dataset shift} occurs when the classifier is trained using data or information \citep[like expert knowledge, ][]{kalter2015direct} from a population different from the test population of interest resulting in both $p_{tr}(y) \neq p_{test}(y)$ and $p_{tr}(\mathbf{x}|y) \neq  p_{test}(\mathbf{x}|y)$  \citep{moreno2012unifying} (as illustrated in Figure \ref{fig:phmrc_resp_rates} for the real application of Section \ref{phmrc-dataset-analysis}). It is evident from (\ref{eq:pred}) that under dataset shift as $p_{tr}(\mathbf{x}|y) \neq  p_{test}(\mathbf{x}|y)$, we will not generally have $p_{tr}(\ba \given y) = p_{test}(\ba \given y)$, and all the aforementioned quantification learning methods 
will be biased. 

When limited validation data with known labels is available from the test set, \cite{datta2018local} proposed population-level {\em Bayesian Transfer Learning} (BTL) -- a quantification approach for dataset shift. BTL uses this limited labeled data to estimate the misclassification rates $p(\ba \given y)$ on the test set, while using classifier predicted labels for the abundant unlabeled test data to estimate $p(\ba)$. The two estimation pieces are combined to solve for $p(y)$ from (\ref{eq:ql}) using a hierarchical Bayesian model. 
BTL only assumes 
transportability of the misclassification rates 
from the labeled test data to the unlabeled test data. Even the marginal distribution of $y$ in the labeled test set is allowed to be different from that in the unlabeled test set. 

BTL uses a multinomial model requiring a single-class (categorical) prediction for each instance. Statistical classifiers are often probabilistic \citep{mccullagh1989generalized, murphy2006naive, specht1990probabilistic} producing a compositional prediction -- a vector of prediction probabilities for every class. To apply BTL, these compositional predictions need to be transformed to categorical predictions by using the plurality rule (most probable category). 
This categorization leads to information loss and \cite{bella2010quantification} showed, in a setting without dataset shift, that quantification using the compositional class
probability estimates can outperform such a practice. 
To our knowledge, there is no quantification method for dataset shift that utilizes the compositional predictions from probabilistic classifiers.

In this manuscript, we generalize Bayesian quantification under dataset shift 
to use entire compositional prediction distributions from classifiers. Rather than positing a complete likelihood for compositional data, we use a Kullback-Leibler divergence loss equivalent to a Bayesian-style estimating equation for compositional data. The advantages of using this loss function over proper likelihoods for compositional data are many fold including robustness to model misspecification, coherence of the estimating equations for the labeled and unlabeled set, and allowing of 0's and 1's in the compositions ensuring use of the same loss-function for categorical, compositional or mixed-type predictions from the classifiers. Estimates of $p(y)$ can be obtained using generalized or {\em Gibbs posteriors} that updates prior beliefs using loss-functions without requiring full distributional specification \cite{shawe1997pac,mcallester1999some,chernozhukov2003mcmc,bissiri2016}. 

Our second innovation concerns allowing uncertainty in true labels in the labeled test set. This is not uncommon. For example, physicians may be
uncertain in the final cause of death \citep{mccormick2016insilico}, or labels may be produced by aggregating crowd sourced responses \citep{bragg2013crowdsourcing}. Existing quantification approaches do not allow for uncertainty in the true labeled test instances, as it is not clear how to define and estimate the misclassification rates where the true labels are probabilistic. 

We use belief-based mixture modeling \citep{szczurek2010introducing} to represent true-label uncertainty as a priori class probabilities, and extend the notion of misclassification rates for such compositional true labels. 
This contribution is of independent importance, as it offers a generalized Bayes estimating equation approach for regressing compositional outcome on compositional covariate without requiring full model specification or transformation of the data, and allowing 0's and 1's in both variables, and with an efficient Gibbs sampler. To our knowledge, this is novel. There has been very little work on using KLD-loss based generalized Bayes for compositional data. The few existing approaches \citep{kessler2015bayesian,yuan2007continual} only consider compositional outcome, not compositional predictors, and have not been theoretically studied. 

We refer to our method as {\em Generalized Bayes Quantification Learning (GBQL)}. We show that GBQL subsumes existing quantification approaches (CC,PA,ACC,APA,BTL) as special cases. 
Like BTL, we extend GBQL to an ensemble approach that can utilize predicted labels from multiple classifiers to produce an ensemble quantification that is robust to inclusion of poor classifiers in the group. We device an fast and efficient Gibbs sampler for GBQL, harmonizing the KL loss function with conjugate priors, and using 
a simple coarsening and rounding approximation to the likelihood. 

Our final contribution is a thorough theoretical study of GBQL. To our knowledge, there is no supporting theory about the accuracy of quantification under dataset shift. There is substantial existing large sample theory for generalized posteriors and related approaches \citep{chernozhukov2003mcmc,zhang2006,jiang2008gibbs,miller2019asymptotic,bhattacharya2019bayesian}. These results can be applied contingent upon identifiability of the parameters specifying the loss function. 
Identifiability is challenging for quantification under dataset shift, as it involves two different loss functions -- one each for the labeled and unlabeled datasets which on their own are both incapable of identifying the parameters. 
Our central result is identifiability of parameters for GBQL that uses both loss functions. Using this we prove asymptotic consistency of the Gibbs posterior, asymptotic normality of the posterior mean, and provide asymptotically well calibrated confidence intervals. 
We also prove a finite sample rate result on posterior concentration. 
All the theory only relies on a correct first-moment assumption and is thus robust to misspecification of the full model. 
We also extend the theory to accommodate the practical modifications used to implement the Gibbs sampler, and to ensemble GBQL for multiple classifiers.

The rest of the manuscript is organized as follows. 
Our method, various extensions, and connection to existing approaches are offered in 
Section \ref{method}. Theoretical properties are discussed in Section \ref{sec:th}. Bayesian implementation and computational considerations are presented in Section \ref{a-gibbs-sampler-for-the-posterior-belief-distribution}.
We show the robustness of our method through simulations in Section \ref{simulations}, and in Section \ref{phmrc-dataset-analysis} we 
demonstrate its performance on the problem of deriving the cause-specific rates of children deaths using the PHMRC dataset.

\section{Method}\label{method}
Let $\calU$ denote an unlabeled dataset of \(N\) instances  randomly sampled from our test population of interest. 
These data do not come with the true class labels $y_r \in \{1,\ldots, C\}$ where $C$ is the total number of categories, but using some pre-trained classifier algorithm $A$, one can predict labels $\ba_r=\ba(\bx_r)$ for $r = 1, \ldots,N$. We do not assume that the training data for the algorithm is available, nor do we assume the knowledge of the covariates $\bx$ for the test set, as long as $\ba(\bx)$ is available to us. 
. 
Our target of interest is $\mathbf{p} = p_{\mathcal{U}}(y) = (p_1, \ldots, p_C)'$, the distribution of the outcome $y$ in our population of interest $\calU$, i.e,   $p_{i} = p(y_{r} = i | r \in \mathcal{U})$.

We further assume availability of a dataset $\calL$ of size $n$  
from our population of interest with both true labels $y_r$ and predicted labels $\ba_r$. 
Because true labels are potentially expensive to obtain, we assume $n \ll N$. 
We do not assume the distribution of $y$ in $\calL$ to be representative of our whole population as true labels may only be available for a convenient sample, i.e., $p_\calL(y) \neq p_\calU(y)$. 
We only assume transportability of the conditional distribution  $p(\bx \given y)$ from $\calL$ to $\calU$. 
This transportability assumption for $p_\calL(\bx \given y)=p_\calU(\bx \given y)$  
is more likely to hold even if the marginal distributions of $y$ are different between $\calL$ and $\calU$.
For example, even if the marginal cause of death distributions are different for hospital and community deaths, given a cause $y$, the symptoms $\bx$ observed in the patient are likely to have similar distribution in both settings. The transportability assumption implies from (\ref{eq:pred}) that $p(\ba \given y)$ is also same for $\calL$ and $\calU$ as the prediction $p(\ba \given \bx)$ from a trained classifier remains the same given $\bx$ irrespective of the population $\bx$ is drawn from. 

Bayesian Transfer Learning (BTL) \citep{datta2018local} assumes that the predictions are deterministic, i.e., $\ba_r$'s are categorical. 
Under transportability, if $\bM=(M_{ij})=(p(\ba_r = j \given y_r=i, r \in \calU \cup \calL))$ is the misclassification matrix of the classifier on the test population, then marginal distribution of $\ba_r$, $r \in \calU$ will be given by $\bM'\bp$. BTL essentially uses the labeled data $\calL$ to estimate $\bM$, the unlabeled data $\calU$ to estimate $\bM'\bp$ and uses the two pieces to solve for $\bp$. 
This is done in using the data model:
\begin{equation}\label{eq:btl}
\begin{array}{cc}
&  \sum_{r \in \calU} \ba_r \sim Multinomial(N,\bM'\bp) \\
& \ba_r \given y_r=i \ind Multinomial(1,\bM_{i*}) \mbox{ for } r \in \calL, i=1,\ldots,C.,
\end{array}
\end{equation}
with $\bM_{i*}$ denoting the $i^{th}$ row of $\bM$. A Bayesian framework is used with priors for $\bp$ and $\bM$. 

\subsection{Bayesian estimating equations for compositional data}\label{sec:main}
The major limitation of BTL is its reliance on multinomial distributions for modeling the data in (\ref{eq:btl}). This  restricts its use to cases where the predicted labels $\ba_r$ are categorical. Classifiers are often probabilistic producing a compositional prediction $\ba_r=(a_{r1},\ldots,a_{rC})'$ with $0 \leq a_{rj}$ and $\sum_j a_{rj}=1$. To use BTL for such probabilistic outputs, one would require unnecessary categorization. 
Instead, we will generalize the model based BTL to a Bayesian estimation equation based quantification method for compositional labels. 

Central to BTL's estimation of population class probabilities (``quantification'')
\begin{equation}
p(y_r=i) = p_i,\; \forall r \in \calU \label{eq:marg}
\end{equation}
is the assumption of transportability of conditional distribution between $\calL$ and $\calU$, i.e.,

\begin{align}
p(\ba_r \given y_r=i) = \bM_{i*} \; \forall r \in \calU \cup \calL . \label{eq:cond}
\end{align} 
The distributional assumption (\ref{eq:cond}) can also be viewed as a first-moment assumption
\begin{equation}
E(\ba_r \given y_r=i) = \bM_{i*}    \; \forall r \in \calU \cup \calL . \label{eq:condmean}
\end{equation}
The two viewpoints are equivalent for categorical $\ba_r$ used in BTL, but (\ref{eq:condmean}) is more general as it is no longer restricted to categorical data. 
For compositional $\ba_r$, rather than specifying $p(\mathbf{a}_{r} | y_{r} = i)$, we only make the general first moment assumption (\ref{eq:condmean}). This is similar to the first-moment assumption in the PA and APA approaches. The challenge is of course how to do 
Bayesian estimation 
without a full model specification.  

First focusing on labeled instances \(r \in \mathcal{L}\), we consider 
the following loss function to connect the parameter \(\mathbf{M}\) to
our data \(\mathbf{a}_{r}, y\)
\begin{align}\label{eq:lossl}
\ell_\calL(\bM \given \{\ba_r,y_r\}_{r \in \calL}) =  \sum_{r \in \mathcal{L}}D_{KL}(\mathbf{a}_{r} || \sum_{i=1}^C \bM_{i*}I(y_r=i)) 
\end{align}
where \(D_{KL}(\bp || \bq)\) is the Kullback--Leibler divergence (KLD) between two distributions $\bp$ and $\bq$.
There are several reasons to choose the KLD loss functions. First, if (\ref{eq:condmean}) is true for some $\bM=\bM^0$, then 
\begin{equation}\label{eq:eel}
E_{\bM^0} \left(\frac {d \ell_\calL}{d \bM} \right) = 0\;.
\end{equation} 
To see this, observe that $-d\ell_\calL/d\bM$ is the derivative of a multinomial log-likelihood. Hence, $E_{\bM^0} (d\ell_\calL/d\bM)=0$ when $\ba_r$ are categorical. However, this derivative is only a linear function of $\ba_r$ and hence the expectation remains unchanged when we switch to compositional $\ba_r$ with the same conditional mean. So, the loss function $\ell_\calL$ leads to a set of unbiased estimating equations \citep{liang1986longitudinal} for compositional data. The second advantage of using KLD is that, as $x \log x = 0$, it seamlessly accommodates instances $0$'s and $1$'s in $\ba_r$. 
Finally, minimizing (\ref{eq:lossl}) is equivalent to
maximizing
\[
\prod_{r \in \mathcal{L}} \prod_{j=1}^{C}\left(\sum_{i}I(y_{r} =i)M_{ij} \right)^{a_{rj}}
\]
which is the exact form of the multinomial quasi-likelihood (MQL). So, when $\ba_r$ are all categorical, this reduces to the likelihood from the second row of (\ref{eq:btl}). 

If only inference on $\bM$ was of interest, frequentist optimization on (\ref{eq:lossl}) or GEE using its derivative can be executed. Using the rich theory of estimating equations, the estimate \(\widehat{\mathbf{M}}\) has been shown to be a consistent estimator for \(\mathbf{M}\) \citep{papke1996econometric, mullahy2015multivariate}, and such frequentist approaches have been commonly used in the econometrics literature for regression with a compositional outcome. 

However, the primary interest in quantification learning is estimation of $\bp$ and accurate estimation of the nuisance parameter $\bM$ is only an important intermediate step. The unlabeled dataset $\calU$ is the only one informing estimation of $\bp$, and using (\ref{eq:marg}) and (\ref{eq:condmean}), the marginal first-moment condition for $\ba_r$ in $\calU$ is given by:
\begin{align}\label{eq:margmean}
E[\ba_{r}] &= E[E[\ba_{r} | y_{r}]] 
= \sum_{i}p_{i}E[\ba_{r} | y_{r} = i] = \bM'\bp, \forall r \in \calU.
\end{align}
This harmonizes with the loss-function 
\begin{align}\label{eq:lossu}
\ell_\calU(\bp, \bM \given \{\ba_r\}_{r \in \calU}) &=  \sum_{r \in \mathcal{L}}D_{KL}(\mathbf{a}_{r} || \bM'\bp) \;.
\end{align}

The loss function $\ell_\calU$ for the marginal distribution of the predicted labels is coherent with the loss-function $\ell_\calL$ for their conditional distribution, as they are based off of coherent moment conditions (\ref{eq:condmean}) and (\ref{eq:margmean}). Assuming (\ref{eq:marg}) and (\ref{eq:condmean}) holds for some true $\bp^0$ and $\bM^0$, following the same logic used in (\ref{eq:eel}), we can show
\begin{equation}\label{eq:eeu}
E_{\bM^0,\bp^0}\left(\frac{d \ell_\calU }{ d(\bM,\bp)}\right)=0, 
\end{equation} i.e., the derivative is once again an estimating equation. However, if we only considered $\ell_\calU$ without bringing in $\ell_\calL$, $\bM$ and $\bp$ cannot be identified. For example, $\ell_\calU(\bM,\bp) = \ell_\calU(\bI,\bM'\bp)$. Hence, we will consider the joint loss-function $\ell = \ell_\calL + \ell_\calU$ as adding $\ell_\calL$ helps to identify $\bM$ which in turns makes $\bp$ identifiable. 
 
Bayesian inference using only loss functions, without full model specification, is now well-established. 
For any reasonable choice of a loss-function $\ell(\btheta \given data)$ and prior $\Pi(\btheta)$, a  {\em Gibbs posterior} is defined as the distribution 
\begin{equation}\label{eq:gen}
\Pi (\btheta \given data) \propto \exp\left(-\alpha\ell(\btheta \given data)\right) \Pi(\btheta). 
\end{equation} 
for some $\alpha > 0$, provided the normalizing constant exists. The idea of updating prior beliefs through loss functions via (\ref{eq:gen}) has developed independently in multiple fields, dating back atleast to \cite{vovk1990aggregating}. This posterior is interpreted as the distribution $\nu$ for $\btheta$ minimizing the loss function $\alpha E_{\nu}(\ell(\btheta \given data)) + D_{KL}(\nu,\Pi)$.  Gibbs posteriors (also known as pseudo- or generalized posteriors) have been widely used to derive generalization errors in the PAC-Bayesian framework \citep{shawe1997pac,mcallester1999some,catoni2003pac}. Functionals of the posterior in (\ref{eq:gen}) has been referred to as Laplace-type estimators (LTE) or quasi-Bayesian estimators (QBE) in \cite{chernozhukov2003mcmc}. \cite{jiang2008gibbs} used Gibbs posteriors for high-dimensional variable selection. The case where the loss-function $\exp(-\ell)$ is a fractional likelihood has received extra attention with the literature demonstrating the utility of fractional posteriors over full posteriors especially under model misspecification \citep{zhang2006,walker2001bayesian,bhattacharya2019bayesian}. \cite{bissiri2016} showed that, given the loss and the prior, (\ref{eq:gen}) is the unique update that is invariant to sequentially updating with each additional data point or joint updating using all data points. The parameter $\alpha$ is related to calibration of credible intervals based on Gibbs posteriors and its choice will be discussed in Section \ref{sec:cp}. The problem of quantification learning under dataset shift using compositional predicted labels have not been studied using a Bayesian or generalized Bayes framework. 

We use $\ba^\calL$ and $\ba^\calU$ to respectively denote $\{\ba_r\}_{r \in \calL}$ and $\{\ba_r\}_{r \in \calU}$, and similar notations for collections of the other variables. The two loss functions $\ell_L$ and $\ell_U$  have same functional form leading to the Gibbs posterior:
\begin{align*}
\Pi(\bp, \bM \given \mathbf{a}^{\mathcal{U}}, \mathbf{a}^{\mathcal{L}},y^{\mathcal{L}}) & \propto \exp\left( - \alpha \sum_{r \in \mathcal{U}}D_{KL}(\mathbf{a}_{r} || E[\mathbf{a}_{r}]) - \alpha\sum_{r \in \mathcal{L}}D_{KL}(\mathbf{a}_{r} || E[\mathbf{a}_{r} | y_{r}])\right)\Pi(\bp,\bM) \\
& \propto \exp\left( \alpha \sum_{r \in \mathcal{U}}\sum_{j=1}^{C} a_{rj}\log\frac{\sum_{i}p_{i}M_{ij}}{a_{rj}} + \alpha\sum_{r \in \mathcal{L}}\sum_{j=1}^{C} a_{rj}\log\frac{\sum_{i=1}^{C}I(y_{r} = i)M_{ij}}{a_{rj}}\right) \Pi(\bp,\bM)
\end{align*}

If all $\ba_r$ were categorical, this posterior with $\alpha=1$ is identical to the one from the BTL model (\ref{eq:btl}). However, using estimating equations and generalized Bayes, we now have an unified framework for Bayesian quantification for both categorical, compositional or mixed-type $\ba_r$ without having to specify the full models for the different data types. In subsequent sections we illustrate how this generalized Bayes framework naturally lends itself to accommodating uncertainty in true labels, multiple classifiers, and shrinkage priors. 

\subsection{Advantages over full Bayesian modeling of compositional data}\label{sec:adv}
Before discussing these extensions, we highlight the advantages of a generalized Bayes framework over a fully Bayesian approach for quantification learning. There are fundamental hurdles to extend the model in (\ref{eq:btl}) when some or all $\ba_r$ are compositional. The Dirichlet distribution and its generalizations \citep{hijazi2009modelling, wong1998generalized, tang2018zero} are standard models for compositional data. However, there are several issues with specifying a Dirichlet model for $\ba_r$ in quantification learning.  

\begin{enumerate}
	\item We allow the $\ba_r$ to take 0 and 1 values as the predictions can be categorical or sparse-compositional (prediction for some classes to be exactly $0$).  
	Dirichlet distributions do not support 0's and 1's, and would require forcing the $a_{rj}$'s to lie strictly in $(0,1)$ using some arbitrary cutoff.  
	Alternatively, one can use the zero-inflated Dirichlet distribution \citep{tang2018zero} to formally account for the presence of 0's, which leads to a significant increase in the number of parameters.
	
	A related point is that single-class classifiers can be viewed as a subclass of probabilistic classifiers, with the predicted distribution being degenerate. Hence, if using two classifiers, one with compositional predictions and one with single-class predictions, use of the Dirichlet model for the former and a multinomial model for the latter is discordant. 
	
	\item Our generalized Bayes approach has a coherence property required for quantification learning. The conditional expectation model (\ref{eq:condmean}) for the labeled data 
	leads to the marginal model (\ref{eq:margmean}) for the unlabeled data. This is central to identification of $\bp$. 
	Specifying $\ba_r \given y=i$ as a Dirichlet distribution (or its variants), will endow $\ba_r$ with a mixture-Dirichlet marginal distribution which presents a computational challenge in posterior sampling. Our pseudo-likelihood for $\ba_r$ nicely harmonizes with conjugate Dirichlet priors for the parameters $\bM$ and $\bp$ leading to an efficient Gibbs sampler. 
	
	\item Fully specified Dirichlet distributions are susceptible to model misspecification. The generalized Dirichlet distribution \citep{wong1998generalized} can be used to broaden the model class, however increased model complexity comes with added computational burden. 
\end{enumerate} 

Finally, as an alternate to Dirichlet-based likelihoods,  
one can log-transform the data and use multivariate normal or skew-normal to fully model the log-ratio coordinates of the compositional $\ba_r$ \citep{comas2016log}. However, a transformation-free approach is generally more desirable. Also, a model on the transformed compositional $\ba_r$ will  be discordant with the multinomial model for the categorical $\ba_r$. The transformations also generally do not allow for 0's and 1's. 

\subsection{Quantification using uncertain true labels}\label{sec:belieflabels}

As stated in Section \ref{introduction}, in many applications, there is uncertainty in some or all of the true labels in the labeled test set $\calL$. For example, a panel of physicians may fail to unanimously agree on a single cause of death, and only provide a subset of the list of causes from which they believe the individual was equally likely to die. No existing quantification approach can work with uncertainty in true labels. In this Section, we generalize the notion of misclassification rates to uncertain true labels and extend GBQL accordingly. 

Following the belief based modeling framework of \cite{szczurek2010introducing}, we let \(b_{ri}\)
represent the a priori probability that instance \(r\) belongs to label
\(i\). Then \(\bb_{r}\) is constrained such that
\( 0 \leq {b}_{ri}\) and \(\sum_{i=1}^{C} b_{ri}=1\). For  \(r \in \mathcal{L}\) we no longer observe the $y_r$'s but observe the belief vector $\bb_r$. Cases where the true label is identified with complete certainty can be subsumed by writing $\bb_r=\be_i$ when $y_r=i$, $\be_i$ denoting the vector with $1$ at the $i^{th}$ component and zeros elsewhere. 
We can generalize the conditional first-moment condition (\ref{eq:condmean}) to 
\begin{equation}\label{eq:compreg}
E[\ba_{r} | \bb_{r}] = E[E[\ba_{r} | y_r, \bb_{r}] \given \bb_r] = E\left(\sum_i M_{i*}I(y_r=i) \given \bb_r\right)  = \bM'\bb_r.
\end{equation}
So, our loss function for $\calL$ is now
$\ell_\calL(\bM \given \ba^\calL,\bb^\calL) = \sum_{r \in \mathcal{L}}D_{KL}(\mathbf{a}_{r} || \mathbf{M}^{'}\bb_{r}) = \sum_{r \in \mathcal{L}}\sum_{j=1}^{C} a_{rj}\log\left(\frac{\sum_{i=1}^{C}b_{ri}M_{ij}}{a_{rj}}\right)$.
The loss for the unlabeled data remains the same, and generalized Bayes proceeds using the likelihood $\ell_\calL + \ell_\calU$ with this generalized choice of $\ell_\calL$. 
Appealing to the motivation of generalized Bayes \citep{chernozhukov2003mcmc,zhang2006,bissiri2016}, we can see that the Gibbs posterior 
\(\nu = \Pi(\mathbf{p},\mathbf{M} |\mathbf{a}^{\mathcal{U}}, \mathbf{a}^{\mathcal{L}},  \bb^{\mathcal{L}})\)
is the probability measure which, as \(n, N \rightarrow \infty\) and \(\frac{n}{N} \rightarrow \xi\),
minimizes the Bayes risk
\[
E_{\nu}\left[E_{r \in \mathcal{U}}[D_{KL}(\mathbf{a}_{r} || \mathbf{M}^{'}\mathbf{p})] + \xi E_{r \in \mathcal{L}}[D_{KL}(\mathbf{a}_{r} || \mathbf{M}^{'}\bb_{r})]\right].
\]

\hypertarget{incorporating-multiple-predictions}{%
	\subsection{Ensemble Quantification Incorporating Multiple
		Classifiers}\label{incorporating-multiple-predictions}}

There may be \(k=1,\ldots, K\) predictions for each instance corresponding to $K$ classifiers. 
\cite{datta2018local} has shown the advantage of incorporating multiple algorithms for quantification when only categorical predictions are available, and their ensemble quantification can easily be extended to compositional settings. For the ensemble approach, a fundamental observation is that each algorithm is expected to have their own sensitivities and specificities. Representing the $k^{th}$ algorithm prediction 
for instance \(r\) as \(\mathbf{a}_{r}^{k}\) and the corresponding misclassification matrix as $\bM^k$, 
the conditional first moment assumption (\ref{eq:condmean}) becomes 
\begin{equation}
E(\ba_r^k \given y_r=i) = \bM^k_{i*}    \; \forall r \in \calU \cup \calL. \label{eq:condmeank}
\end{equation}
For the unlabeled data, we will now have the labels satisfying the marginal first moment condition $E(\ba_r^k) = {\bM^k}'\bp$. Hence, each of the $K$ predictions for the unlabeled test data $\calU$ informs about the same parameter $\bp$ (our estimand) and we define {\em ensemble GBQL} using sum of the losses for the individual algorithms:

\[
\sum_{k=1}^{K}\left[\sum_{r \in \mathcal{U}}D_{KL}(\mathbf{a}_{r}^{k} || {\mathbf{M}^{(k)}}^{'}\mathbf{p}) + \sum_{r \in \mathcal{L}}D_{KL}(\mathbf{a}_{r}^{k} || {\mathbf{M}^{(k)}}^{'}\bb_{r})\right].
\]

Ensemble GBQL offers a unified framework for combining information from probabilistic classifiers (compositional $\ba_r$) and deterministic ones (categorical $\ba_r$) like clinical classifiers for cause of deaths. 

\hypertarget{shrinkage-towards-the-source-probability-predictions}{%
	\subsection{Shrinkage towards default quantification methods}\label{sec:shrink}
}
We now discuss how existing quantification approaches are special cases of GBQL with specific choices of degenerate priors for $\bM$. We will leverage this property to construct shrinkage priors in data-scarce settings. 

The simplest quantification approach is called Classify \& Count (CC) \citep{forman2005}. CC requires a single predicted class $j$ for each instance, so that $a_{rj} \in \{0, 1\}$. The CC estimate of $p_{i}$ is simply
$
\hat{p}_{i}^{CC} = \frac{\sum_{r \in \mathcal{U}}a_{ri}}{N}
$.
Probabilistic Average (PA) \citep{bella2010quantification} extended this to allow probabilistic predictions. The PA estimate, $\hat{p}_{i}^{PA}$, is obtained in the same manner as $\hat{p}_{i}^{CC}$, but does not require $a_{rj} \in \{0,1\}$. It is clear from above that CC (or PA) produces a biased estimate as $E(\bp^{CC}) = E(\bp^{PA}) = E(\ba_r) = \bM'\bp$ which is generally does not equal $\bp$ unless $\bM=\bI$ (i.e., the classifier is perfect) or $\bp$ is a stationary distribution for $\bM$. 

Adjusted Classify \& Count (ACC) \citep{forman2005}  accounts for the classifier being not perfect 
even for the training population. ACC relies on cross-validation using training data splits to estimate the true positive and false positive rates (tpr and fpr) of the classifier (for the base case of $C=2$), and propose 
\begin{equation}\label{eq:acc}
\hat{p}_{i}^{ACC} = \frac{\hat{p}_{i}^{CC} - fpr}{tpr-fpr}.
\end{equation}

\cite{bella2010quantification} developed 
an adjusted version of the PA estimate (APA) similar to ACC but for probabilistic predictions. ACC, APA and their multi-class extensions \citep{hopkins2010method} are inappropriate for quantification in the presence of dataset shift, as the $fpr$ and $tpr$ estimated from the training data will not be representative of those in the test population \citep{perez2019dynamic}. Also, $\hat{p}_{i}^{ACC}$ is not guaranteed to be in $[0,1]$, although \cite{hopkins2010method} correct for this using constrained optimization. 

To make the connection between these methods and GBQL, we first consider the scenario where $n=0$, i.e., when there is no labeled test set to estimate dataset shift. Consider a sequence $\{\Pi_u(\bM) \given u=1,2,\ldots\}$ of priors for $\bM$ such that $\Pi_u$ converges in distribution to $\delta(\bM^{pr})$, a degenerate prior at some pre-fixed transition matrix $\bM^{pr}$. Then the Gibbs posterior $\nu_u$ of GBQL using the prior $\Pi(\bp)\Pi_u(\bM)$ converges in distribution to 
\[ \lim_{u \rightarrow \infty}\nu_u(\bp) \propto \exp\left(- \sum_{r \in \calU} D_{KL}(\ba_r || {\bM^{pr}}'\bp)\right)\Pi(\bp)\;. \]

If $\bM^{pr}=\bI$, then for any prior choice of $\bp$, $\lim_{u \rightarrow \infty}\nu_u(\bp) \propto Dirichlet(\bp ; \sum_{r \in \calU} \ba_r) \Pi(\bp)$. In particular, if $\Pi(\bp)=Dirichlet(\bp ; \bzero)$ or as $N \rightarrow \infty$, then $\lim_{u \rightarrow \infty}\nu_u(\bp) = Dirichlet(\sum_{r \in \calU} \ba_r)$. For categorical $\ba_r$, this result was proved in \cite{datta2018local}, and shows that $E_{\lim_u \nu_u}(\bp)=\bp^{CC}$, i.e., using priors $\Pi_u(\bM)$ shrinking towards the degenerate prior at $\bI$, inference from GBQL becomes identical to inference from Classify and Count \citep{forman2005} when there is no labeled dataset. Analogously, for the same settings, when $\ba_r$ are compositional, posterior mean from GBQL becomes identical to Probabilistic Average \citep{bella2010quantification}. Extending, the argument to the settings with multiple predictions, it is straightforward to see that $E_{\lim_u \nu_u}(\bp)=1/K \sum_{k=1}^K \bp^{k,PA}$, i.e., the posterior mean from our ensemble classifier coincides with the average of the CC or PA estimates for the $K$ classifiers. 

Alternatively, if the misclassification matrix $\bM^{tr}$ for the training data is available and can be trusted for test data, one can use 
$\bM^{pr}=\bM^{tr}$. Then the posterior $\lim_{u \rightarrow \infty}\nu_u(\bp)$ coincides with the implicit likelihood in Adjusted Classify and Count (for categorical $\ba_r$) and in Adjusted Probabilistic Average (for compositional $\ba_r$). To see this, note that the ACC estimate (\ref{eq:acc}) for 2 classes and categorical $\ba_r$'s, relies on the principle that 
$$p_1 = \frac{ E(a_{r1}) - M^{tr}_{21}}{M^{tr}_{11} - M^{tr}_{21}}.$$
This is equivalent to $E(a_{r1}) = p_1 M^{tr}_{11} + p_2 M^{tr}_{21}$ or $E(\ba_r)=\bM^{tr'}\bp$, i.e., assuming (\ref{eq:cond}) with $\bM=\bM^{tr}$. 
Thus using $\Pi(\bM)\approx \delta(\bM=\bM^{tr})$ in GBQL is a better implementation of ACC or APA, ensuring that the posterior mean of $\bp$ is guaranteed to be a vector of probabilities lying in $[0,1]$. This is not assured in their current implementation based on the direct correction (\ref{eq:acc}). 

Hence, in absence of local labeled set, a prior for $\bM$ concentrated around $\bI$ or $\bM^{tr}$, makes estimates from GBQL nearly coincide with these existing methods (Figure \ref{fig:gbql_schematic}). GBQL in fact provides a probabilistic framework around these existing quantification approaches.
 
When labeled data is present, instead of using $\bM=\bI$ (i.e., no adjustment as in CC,PA) or $\bM=\bM_{tr}$ (i.e., transportability of the conditional distributions between the training and test data as used in ACC,APA), GBQL estimates an unstructured $\bM$
only assuming transportability of the conditional distributions from the limited labeled test data $\calL$ to all test data. However, quantification projects like burden of disease estimation using nationwide surveys are often multi-year endeavors. At the initial stages of such projects, $\calL$, consisting of hospital deaths with clinically diagnosed causes, can be very small. With very limited labeled data, estimating both \(\mathbf{M}\) and \(\mathbf{p}\) precisely with vague priors is ill-advised as $\bM$ involves $C(C-1)$ parameters. 
In such settings, the above-established link between GBQL and the existing quantification methods can be exploited to choose shrinkage priors for stabilizing estimation of $\bM$. 
For example, one can use the priors $\bM_{i*} \sim Dirichlet(\gamma_{ui} (\bM^{pr}_{i*}+\eps_u\bone))$ for small $\eps_u$ or large $\gamma_{ui}$. This prior concentrates around $\delta(\bM=\bM^{pr})$ if either $\eps_u \rightarrow 0$ or $\gamma_{ui} \rightarrow \infty$, hence for no or small labeled dataset, the estimate for GBQL will shrink to those from CC or PA (if $\bM^{pr}=\bI$) or to ACC or APA (if $\bM^{pr}=\bM^{tr}$). With limited labeled data, these shrinkage priors make a bias-variance trade-off yielding estimates with higher precision. The benefits of such shrinkage priors over non-informative priors have been demonstrated in \cite{datta2018local}. Finally as more and more labeled data is collected, in the next section we show that any reasonable choice of prior (including all these shrinkage priors) leads to desirable asymptotic and finite-sample properties of the GBQL estimate. 

\begin{figure}[t]
	\centering
	\resizebox{0.8\textwidth}{!}{%
		\begin{tikzpicture}[grow'=right,level distance=1.25in,sibling distance=.25in]
		\tikzset{edge from parent/.style= 
			{thick, draw, edge from parent fork right},
			every tree node/.style=
			{draw,minimum width=1in,text width=1in,align=center}}
		\Tree [.\textbf{GBQL}
		[.{No labeled test data}
		[.{Single class predictions}
		[.{One classifier}
		[.{Prior: $\Pi(\mathbf{M})=\delta(\mathbf{I})$} 
		{CC\\ \citep{forman2005}} ]
		[.$\Pi(\mathbf{M})=\Pi(\mathbf{M}^{tr})$ 
		{ACC\\ \citep{forman2005}} ] ]
		[.{Multiple classifiers }
		[.$\Pi(\mathbf{M^k})=\delta(\mathbf{I})$ 
		[.{Average CC} ]] \edge[red]; [.$\Pi(\mathbf{M}^k)=\delta(\mathbf{M}^{k,tr})$ ] ]]
		[.{Compositional predictions}
		[.{One classifier}
		[.{$\Pi(\mathbf{M})=\delta(\mathbf{I})$} 
		{PA\\ \citep{bella2010quantification}} ]
		[.$\Pi(\mathbf{M})=\Pi(\mathbf{M}^{tr})$ 
		{APA\\ \citep{bella2010quantification}} ] ]
		[.{Multiple classifiers }
		[.$\Pi(\mathbf{M^k})=\delta(\mathbf{I})$ 
		[.{Average PA} ]] \edge[red]; [.$\Pi(\mathbf{M}^k)=\delta(\mathbf{M}^{k,tr})$ ] ]]]
		[.{Labeled test data} 
		[.{With true labels} 
		[.{Single class predictions} 
		[.{One classifier} 
		[.{BTL\\ \citep{datta2018local}} ]]
		[.{Multiple classifiers} 
		[.{Ensemble BTL\\ \citep{datta2018local}} ]]]
		\edge[red];[.{Compositional predictions} ]]\edge[red];
		[.{Uncertainty in true labels} ]]]
		\end{tikzpicture}
	}
	\caption{GBQL includes and extends the common quantification methods through different classifier outputs and choices of priors for $\mathbf{M}$. \textcolor{red}{Red lines} indicate the settings where  GBQL extends current methods, while black lines indicate where GBQL subsumes existing methods.}\label{fig:gbql_schematic}
\end{figure}
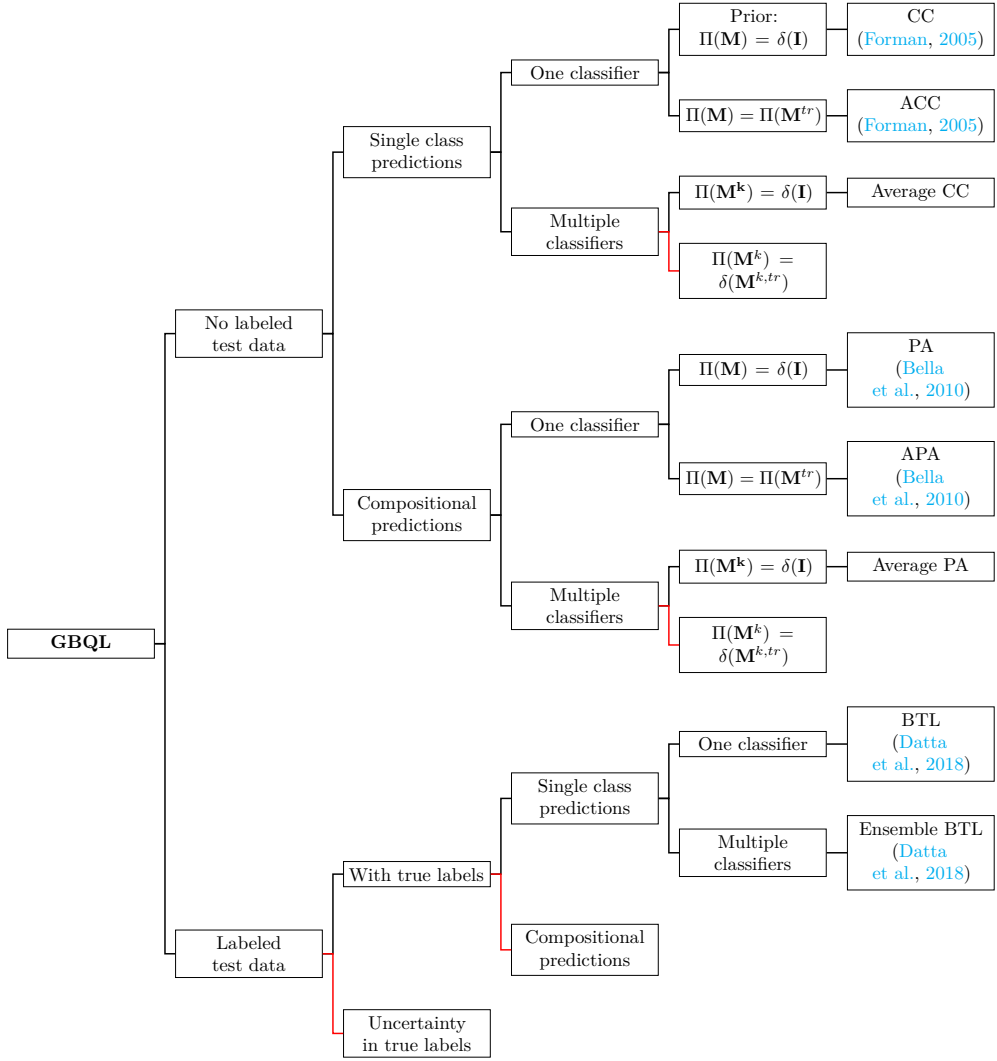

\subsection{Parametric modeling of misclassification rates}\label{sec:para} 
An alternative to using the shrinkage priors would be to incorporate domain-knowledge about the misclassification rates via informative priors in a parametric model for $\bM$ . 
One example of such prior knowledge is impossibility of occurrence of some true-label-predicted-label class pairs. This can especially be true for a domain-knowledge driven classifier. A clinically driven cause-of-death classifier like the Expert Algorithm \citep{kalter2015direct} is unlikely to produce entirely improbable predicted causes given a true cause-of-death. Hence, a misclassification rates for a large set of class-label pairs can likely be set to $0$ a priori. Such sparse models for $\bM$ can drastically reduce the number of parameters. 

If $\calS_i=\{j \given M_{ij} \neq 0\}$ denote the known support set for true class $i$. Then 
we can use an uninformative prior for the non-zero entries of $\bM$ via the uniform sparse-Dirichlet prior as follows:
\begin{equation*}
M_{i,\calS[i]} \ind Dirichlet(\delta \bone_{|\calS_i|}), \mbox{ for } i=1,\ldots,C.
\end{equation*}
where $\bone_d$ denotes the $d$-dimensional vector of ones and $\delta >0$. We can easily modify the Gibbs sampler presented in Section \ref{a-gibbs-sampler-for-the-posterior-belief-distribution} for such a sparse model to ensure conjugate updates. Only change would involve replacing the Dirichlet updates for $M_{i*}$ with Dirichlet update for $M_{i,\calS[i]}$ as $M_{ij}$  for $j \notin \calS_i$ are set to $0$. 

Strategies other than sparsity can also be adopted to reduce the number of parameters. Consider an example where the classifier is nearly perfect for the training population, and the test population is a mixture population where some cases are similar to the training data for which the classifier will be accurate, and some are from a different population for which the classifier reduces to a random guess. In such cases, $\bM$ can be modeled as an equicorrelated (only one parameter) or row-equicorrelated ($C$ parameters) matrix.  

Finally, assuming homogeneous misclassification rates for the entire population maybe inappropriate for some applications. We can then model $M_{ij}$ as function of a small set of covariates $U$. \cite{datta2018local} considered this extension for categorical labels, using a logit regression model of $\bM$ on $U$ and proposed a Gibbs sampler using the Polya-Gamma sampler of \cite{polson2013bayesian}. We can adopt such a model for GBQL, and formulate a Gibbs sampler with conjugate updates using the data augmentation scheme we present in Section \ref{a-gibbs-sampler-for-the-posterior-belief-distribution} combined with the Polya-Gamma data augmentation. 

\section{Theory}\label{sec:th} In this section we establish asymptotic and finite sample guarantees for GBQL 
for the general case from Section \ref{sec:belieflabels} where the true labels in $\calL$ are observed with uncertainty $\bb_r$. This subsumes the case of Section \ref{sec:main} with exact labels $y_r$. The Gibbs posterior for GBQL is given by:
\begin{equation}\label{eq:nu}
\nu_N = \Pi(\bp, \bM \given \mathbf{a}^{\mathcal{U}}, \mathbf{a}^{\mathcal{L}},\bb^{\mathcal{L}}) \propto \exp\left( -\alpha \sum_{r \in \mathcal{U}}D_{KL}(\mathbf{a}_{r} || \bM'\bp) - \alpha\sum_{r \in \mathcal{L}}D_{KL}(\mathbf{a}_{r} || \bM'\bb_{r})\right)\Pi(\bp,\bM) .
\end{equation}

Theory of Gibbs posteriors is well developed. \cite{chernozhukov2003mcmc} developed very general results for asymptotic posterior consistency, normality and coverage of confidence/credible intervals of Gibbs posteriors. Similar results were developed in \cite{miller2019asymptotic}.  \cite{walker2001bayesian,zhang2006} theoretically demonstrated benefits of using fractional posteriors over full posteriors. \cite{bhattacharya2019bayesian} developed finite-sample concentration results for fractional posteriors under model misspecification. 

Our quantification approach is grounded only in correct specification of the conditional first moment assumption 
(\ref{eq:condmean}), i.e, we work in an M-free (model-free) setting. Most applications of the aforementioned theoretical results have been demonstrated in the M-closed (true data model within the class of models considered), or M-open (misspecified model) case \citep{bernardo2009bayesian}. Previous applications of Gibbs posteriors to the M-free case include the M-estimation examples of \cite{chernozhukov2003mcmc}, and the misclassification-loss based approach for variable selection of classifiers of \cite{jiang2008gibbs}.

A central component driving the theory of Gibbs posteriors is some assumption about identifiability of the parameters. Parameters are identifiable if they maximize the likelihood for the M-closed case, minimize the KL divergence to the true distribution for the M-open case, and minimize the loss function 
for M-free case.  
As we see in (\ref{eq:nu}), we have two loss functions for GBQL. The loss function $\ell_\calL$ for the labeled data $(\ba_r,\bb_r)$ in $\calL$ is different from the loss-function $\calU$ for the unlabeled data $\ba_r$ in $\calU$. Each loss on their own is incapable of identifying the parameter of interest $\bp$, as $\ell_\calL$ doesn't even depend on $\bp$, and $\ell_\calU(\bM,\bp) = \ell_\calU(\bI,\bM'\bp)$. 
To our knowledge, 
there hasn't been any application of the general theory of Gibbs posterior to a setting similar to quantification learning requiring more than one type of loss-function to identify the estimand. 

More generally, there is no theory on model-free Gibbs posteriors for compositional data using cross-entropy (KLD) loss. Related Bayesian methodological work are the Gibbs samplers for compositional regression  \citep{kessler2015bayesian}, and for multiple toxicity grades in the context of early-phase clinical trials  \citep{yuan2007continual}. However, these methods only consider a compositional outcome, and not a compositional predictor as we do in Section \ref{sec:belieflabels}. Also, their approach was motivated from fractional multinomial regression and not from loss-function-based generalized Bayes, and did not come with any theoretical guarantees. 

Our main result which leads to all the subsequent asymptotic and finite-sample guarantees is identifiability of $\bp$ from the loss $\ell_\calL + \ell_\calU$. We introduce the following notations for the theory. 
We will use $\tM$ and $\tp$ to denote the free parameters in $\bM$ and $\bp$ respectively, i.e., $\tM$ excludes the last column of $\bM$, $\tp$ excludes the last element of $\bp$. $\bM$ and $\bp$ are bijective functions of $\tM$ and $\tp$ respectively, so we will use them interchangeably.
Let $\btheta=(\tM,\tp)$, then $\btheta$ is supported on the compact set $\Theta = \calS_{C-1}^C \otimes \calS_{C-1}$ where $\calS_d = \{ \bx \in \mathbb{R}^d \given x_i \geq 0, \bone'\bx \leq 1\}$. 
Switching to $\tM$ and $\tp$ ensures that the parameter space $\Theta$ has a non-empty interior.  

Let $\bp^0$ and $\bM^0$ denote the true values and $\btheta^0=(\tM^0,\tp^0)$, an interior point in $\Theta$. We first state and discuss our assumptions, for the theory:
\begin{enumerate}
	\item (Positivity) Let $\widetilde S_C=\{\bx \in \mathbb R^C \given x_i \geq 0, \bone'\bx =1\}$ denote the $C$-dimensional probability simplex with corners $\be_i$ -- the $C \times 1$ vector with $1$ at the $i^{th}$ position and $0$'s elsewhere, then for any arbitrary small neighborhood $N_i \subset \widetilde S_C$ containing $\be_i$, $F_{b,\calL}(N_i) > 0$ where $F_{b,\calL}$ is the true distribution of $b_r$, $r \in \calL$. 
	\item (Separability) $\bM^0$ is non-singular.
\end{enumerate}

Assumption 1 states that the true data-generation distribution of the compositional labels $\bb_r$ for $\calL$, has positive mass at each of the $C$ corners of the simplex $\widetilde S_C$. 
Each corner of the simplex represents a cause-category. Mass at the $i^{th}$ corner is needed to estimate the $i^{th}$ row of $\bM$ from $\calL$. So the assumption ensures that there is data to estimate each row of $\bM$. To interpret Assumption 1, consider the special case where we observe the true labels $y$, and the predicted labels $\ba$ are categorical. Then Assumption 1, along with $\bM^0$ being an interior point, ensure that for large enough $n$, for every $(i,j)$ pair, there are cases in $\calL$ for whom the true class is $i$ and the predicted class is $j$. This is of course necessary to estimate the misclassification rate $M_{ij}$. Thus, Assumption 1 can be interpreted as a {\em positivity assumption} ensuring that the limited labeled test set can estimate the sensitivities and specificities of the classifier for all class-pairs. 

Assumption 2 
is a {\em separability assumption} necessary for quantification. If there exists two probability vectors $\bp^0$ and $\bp^1$ such that ${\bM^0}'\bp^0={\bM^0}'\bp^1$ then $\ell_\calU(\bM^0,\bp^0)=\ell_\calU(\bM^0,\bp^1)$. So it will be impossible to identify $\bp$ based on predicted labels. A trivial example of this is a 2-class setting with $M_{11}=M_{21}=1$ and $M_{12}=M_{22}=0$. Then all labels will be predicted as class 1 and it would not be possible to distinguish between the true positives from class 1 and the false positives from class 2, i.e., classes 1 and 2 will not be separable. This separability assumption has long been discussed in the finite mixture model literature \citep{teicher1963identifiability, yakowitz1968identifiability}, but has not been explicitly discussed in the context of 
quantification. 

Under these two assumptions, we have the following result asserting that, with enough data, the loss function is minimized close to the true parameter value. 
\begin{theorem}[Identifiability for quantification learning]\label{lem:id} Let $f_N(\btheta) = ( \ell_\calU(\bM,\bp) + \ell_\calL(\bM))/N$, and $f(\btheta)= E_{\ba \in \calU} \left[D_{KL}(\mathbf{a}|| \mathbf{M}^{'}\mathbf{p})\right] + \xi E_{(\ba,\bb) \in \mathcal{L}}\left[D_{KL}(\mathbf{a} || \mathbf{M}^{'}\mathbf{b})\right]$ where $\xi=\lim n/N$. 
	Under Assumptions 1 and 2, for any $\eps>0$ there exists $\kappa > 0$ such that, the following holds for $f_N$. 
	\begin{enumerate}[(i)]
		\item $\liminf_{N}\inf_{\|\btheta-\btheta^0\|_1 > \eps}f_{N}(\tM, \tp) - f(\tM^{0}, \tp^{0}) \geq \kappa$ a.s. 
		\item $\liminf_N P(\inf_{\|\btheta-\btheta^0\|_1 \geq \eps}\, f_{N}(\tM, \tp) - f_N(\tM^{0}, \tp^{0}) \geq \kappa) = 1$. 
	\end{enumerate} 
\end{theorem}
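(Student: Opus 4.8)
The plan is to read Theorem~\ref{lem:id} as a ``well-separated minimum'' statement of the kind standard in M-estimation and Gibbs-posterior consistency theory, and to split the argument into (a) an exact identifiability computation for the limiting criterion $f$ and (b) a uniform law of large numbers (ULLN) that transfers the resulting separation to $f_N$. Throughout I take $\xi=\lim n/N>0$, so that the labeled loss survives in the limit; this is exactly what renders $\bM$, and hence $\bp$, identifiable. The first move is purely algebraic: split each divergence as $D_{KL}(\ba_r\,||\,\bq)=\sum_j a_{rj}\log a_{rj}-\sum_j a_{rj}\log q_j$, so that $f_N(\btheta)=K_N+g_N(\btheta)$, where $K_N=N^{-1}\sum_r\sum_j a_{rj}\log a_{rj}$ is a data-only constant (the empirical negative entropy) and $g_N(\btheta)$ collects the cross-entropy contributions $-\sum_j a_{rj}\log(\bM'\bp)_j$ and $-\sum_j a_{rj}\log(\bM'\bb_r)_j$. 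This is convenient because $K_N$ cancels in the difference $f_N(\btheta)-f_N(\btheta^0)$ appearing in part~(ii), and because each such contribution is nonnegative (the arguments $(\bM'\bp)_j,(\bM'\bb_r)_j$ lie in $[0,1]$), so $f_N$ is bounded below and can only fail to be finite on the simplex boundary.

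Second, I would prove identifiability of the limit $f$, which is the heart of the theorem. Writing $f=f_{\calU}+\xi f_{\calL}$ with $f_{\calU}(\btheta)=E_{\ba\in\calU}[D_{KL}(\ba\,||\,\bM'\bp)]$ and $f_{\calL}(\btheta)=E_{(\ba,\bb)\in\calL}[D_{KL}(\ba\,||\,\bM'\bb)]$, I use the first-moment conditions $E[\ba\,|\,\calU]={\bM^0}'\bp^0$ and $E[\ba\,|\,\bb]={\bM^0}'\bb$ together with the entropy/cross-entropy identity to obtain the exact decompositions
\[
f_{\calU}(\btheta)-f_{\calU}(\btheta^0)=D_{KL}\big({\bM^0}'\bp^0\,||\,\bM'\bp\big),\qquad f_{\calL}(\btheta)-f_{\calL}(\btheta^0)=E_{\bb}\big[D_{KL}\big({\bM^0}'\bb\,||\,\bM'\bb\big)\big].
\]
Both right-hand sides are nonnegative by Gibbs' inequality, so $\btheta^0$ minimizes $f$. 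For uniqueness, vanishing of the labeled term forces $\bM'\bb={\bM^0}'\bb$ for $F_{b,\calL}$-almost every $\bb$; by the positivity Assumption~1 the support of $\bb$ meets every neighborhood of each corner $\be_i$, so the linear relation $(\bM-\bM^0)'\bb=\0$ holds at $C$ points arbitrarily close to the standard basis, which are linearly independent and therefore force $\bM=\bM^0$. Substituting $\bM=\bM^0$ into the unlabeled term leaves ${\bM^0}'\bp={\bM^0}'\bp^0$, and the separability Assumption~2 (nonsingular $\bM^0$) then gives $\bp=\bp^0$. Hence $\btheta^0$ is the unique minimizer; since $f$ is lower semicontinuous and $\{\|\btheta-\btheta^0\|_1\ge\eps\}$ is compact, the infimum of $f$ over this set is attained and strictly exceeds $f(\btheta^0)$, defining a gap $3\kappa>0$.

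Third, I would establish a ULLN and transfer this gap to $f_N$. On any region $\Theta_\eta$ where all denominators $(\bM'\bp)_j$ and $(\bM'\bb_r)_j$ are bounded below by $\eta>0$, the summands are bounded and continuous in $\btheta$ uniformly in $r$, so $\sup_{\Theta_\eta}|f_N-f|\to 0$ a.s.\ by a standard uniform SLLN (using $n/N\to\xi$ to combine the two sample averages). Together with the separation this gives $f_N(\btheta)\ge f(\btheta^0)+2\kappa$ eventually on $\{\|\btheta-\btheta^0\|_1>\eps\}\cap\Theta_\eta$, which is statement~(i) once the boundary is handled. Statement~(ii) then follows from~(i) combined with the single-point SLLN $f_N(\btheta^0)\to f(\btheta^0)$ (finite and integrable because $\btheta^0$ is interior, so the denominators there are uniformly positive): the almost-sure eventual bound in~(i) immediately yields the probability-one liminf in~(ii).

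The main obstacle is the simplex boundary, where the KL loss is $+\infty$ and the ULLN above does not apply. I would dispatch it with a properness argument. Since $\bp^0$ and $\bM^0$ are interior, the true marginal mean ${\bM^0}'\bp^0$ has strictly positive entries, so the empirical coordinate averages $N^{-1}\sum_{\calU}a_{rj}$ are eventually bounded away from $0$; then if $(\bM'\bp)_j<\eta$ for some $j$, the corresponding unlabeled contribution forces $f_N\ge c\log(1/\eta)$, and an analogous estimate built on Assumption~1 (labeled mass near every corner) controls the region where a column of $\bM$ collapses, through the labeled terms. Thus for $\eta$ small the infimum of $f_N$ over $\Theta\setminus\Theta_\eta$ is arbitrarily large, so the relevant infimum is effectively taken over the compact $\Theta_\eta$ on which the ULLN holds. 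Making this properness bound uniform in $N$ and reconciling the two distinct loss functions is where most of the technical care will be required.
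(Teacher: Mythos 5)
Your treatment of the limiting criterion is correct and genuinely different from the paper's. The exact identities
\[
f_\calU(\btheta)-f_\calU(\btheta^0)=D_{KL}\big({\bM^0}'\bp^0 \,||\, \bM'\bp\big),\qquad
f_\calL(\btheta)-f_\calL(\btheta^0)=E_{\bb}\big[D_{KL}\big({\bM^0}'\bb \,||\, \bM'\bb\big)\big],
\]
together with the observation that Assumption 1 places every corner $\be_i$ in the support of $F_{b,\calL}$ (so $(\bM-\bM^0)'\bb=\bzero$ holding $F_{b,\calL}$-a.e.\ forces $\bM=\bM^0$, after which Assumption 2 forces $\bp=\bp^0$), give a clean, exact proof that $\btheta^0$ is the unique minimizer of $f$, boundary points included. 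The paper never writes these decompositions: it splits $\{\|\btheta-\btheta^0\|_1>\eps\}$ into $\{\|\tM-\tM^0\|_1>h\}$ and $\{\|\tM-\tM^0\|_1<h,\ \|\tp-\tp^0\|_1>\eps/2\}$, proves strict convexity of $\elll$ in $\tM$ (Assumption 1) and of $\ellu(\tM^0,\cdot)$ in $\tp$ (Assumption 2), applies Theorem 2.3 of \cite{miller2019asymptotic} on each piece, and glues the pieces with mean-value-theorem perturbation bounds. That convexity machinery is exactly what lets the paper ignore the simplex boundary: pointwise convergence of convex functions needs to be upgraded to uniform convergence only on a small ball around an interior point, and convexity then propagates the separation radially outward to the whole region.

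The genuine gap in your proposal is the boundary step, and it is not merely a matter of ``technical care.'' The unlabeled half of your properness bound is fine: on $\{(\bM'\bp)_j<\eta\}$ the empirical mean $\frac 1N\sum_{r\in\calU}a_{rj}$ is eventually at least $({\bM^0}'\bp^0)_j/2>0$, so $f_N$ exceeds a multiple of $\log(1/\eta)$ uniformly. But on the region where some $M_{ij}<\eta$ while $\bM'\bp$ stays away from $\bzero$ --- a region that contains the non-identified directions $\bM'\bp={\bM^0}'\bp^0$, so the labeled terms must supply the entire gap --- the only labeled terms you can force to be large are those with $\bb_r$ in a corner neighborhood $N_i(\delta_0)=\{\bb \given b_i\geq 1-\delta_0\}$, where $(\bM'\bb_r)_j\leq M_{ij}+\delta_0$. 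The resulting lower bound is of order $\xi\, F_{b,\calL}(N_i(\delta_0))\log\frac 1{\eta+\delta_0}$, which is at most $\xi\, F_{b,\calL}(N_i(\delta_0))\log\frac 1{\delta_0}$ no matter how small $\eta$ is, and shrinking $\delta_0$ does not help: Assumption 1 is purely qualitative, giving $F_{b,\calL}(N_i(\delta_0))>0$ but no rate, and for a distribution whose corner mass decays like $1/\log^2(1/\delta_0)$ the supremum of this bound over $\delta_0$ is finite and can fall below any prescribed threshold. So the claim that $\inf_{\Theta\setminus\Theta_\eta} f_N$ is ``arbitrarily large'' is false in general, and your argument cannot reach the fixed level $f(\btheta^0)+\kappa$ on this region. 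The repair within your architecture is Wald-type truncation rather than blow-up: bound $f_N$ below by the criterion with $-\log x$ replaced by $-\log\max(x,\eta)$, prove the ULLN for this bounded continuous criterion on all of $\Theta$, and note that the truncated limits increase to $f$ as $\eta\downarrow 0$, so by compactness a single truncation level already exceeds $f(\btheta^0)+\kappa$ on $\{\|\btheta-\btheta^0\|_1\geq\eps\}$. With that substitution (your part (ii) deduction is fine and matches the paper's Fatou argument), the proof goes through.
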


The formal proof is provided in the appendix, we briefly outline the ideas used. 
We can write  $\nu_N(\btheta) \propto \exp(-\alpha\elll(\tM)-\alpha\ellu(\btheta))\Pi(\tp,\tM)$ where the subscripts $n$ and $N$ are added to indicate dependence of $\ell_\calL$, $\ell_\calU$ and $\nu$ on the sample size. 
When the loss-functions $f_N$'s are convex, and converges pointwise to some $f$ one can use the rich theory of convex functions to establish identifiability by showing that $f$ is minimized at the true value $\btheta^0$. 
In our case, $f_N=(\elll+\ellu)/N$ converges point-wise to $f=\xi E_\calL(D_{KL}(\ba || \bM'\bb)) + E_\calU(D_{KL}(\ba || \bM'\bp))$ where $\xi = \lim n/N$. However, neither $f_N$'s nor $f$ is convex because of the $\bM'\bp$ term, ruling out direct application of this result. 

We first show in Lemma \ref{lem:fnl} of the appendix that $\ell_\calL$ is convex in $\tM$ and use the convexity results to show that $\ell_\calL$ can identify $\tM^0$, i.e., outside of any neighborhood around the true value $\bM^0$, the empirical loss-function $\elll/n$ has higher value than the limiting loss-function $E_\calL(D_{KL}(\ba || \bM'\bb))$. 
A complementary result to this is that of weak identifiability of $(\tM^0,\tp^0)$ from the non-convex $\ell_\calU$. 
Lemmas \ref{lem:fu} and \ref{lem:fnu} of the appendix state that $\btheta^0=(\tM^0,\tp^0)$ is one of the minimizers of the loss function $\ell_\calU$ and its limit $E_\calU(D_{KL}(\ba || \bM'\bp))$. 
Combining, these results for the two losses, we have that for any $\btheta=(\tM,\tp)$, $f_N(\btheta)$ is greater than $f(\btheta^0)$ unless $\tM$ lies in an infinitesimally small neighborhood around $\tM^0$, and $\btheta$ is also a minimizer of $\ell_\calU$. Thus, use of the local labeled set $\calL$ via the loss function $\elll$ helps to identify $\bM$, as the posterior is guaranteed to concentrate around $\bM^0$. As $\bM$ concentrates around $\bM^0$, the loss $\ellu(\bM,\bp)$ becomes capable of identifying $\bp^0$ as $(\tM^0,\tp^0)$ is a minimizer of $\ell_\calU$ and $\bM$ is non-singular by 
the separability assumption. 

Subsequent,to establishing identifiability, one can leverage general results of Gibbs posteriors \citep{chernozhukov2003mcmc,miller2019asymptotic} to show posterior asymptotic consistency of the Gibbs posterior for GBQL (Theorem \ref{th:post}), and asymptotic normality of the Gibbs posterior mean (Theorem \ref{th:norm}). 
\begin{theorem}[Posterior consistency]\label{th:post}
	Let $B_\eps(\btheta^0)$ be an $\ell_1$ ball of radius $\eps$ around $\btheta^0$,  and $\Pi(\bp,\bM)$ be any prior which gives positive support to $B_\eps(\btheta^0)$ for any $\eps >0$. Then, under assumptions 1-2, as $N,n \rightarrow \infty$ and $n/N$ to some limit, 
	for any $\eps > 0$, $P_{\nu_N}(B_\eps(\btheta^0)) \rightarrow 1$. 
\end{theorem}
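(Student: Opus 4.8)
The plan is to run the standard Schwartz-type contradiction argument for Gibbs posteriors, using Theorem \ref{lem:id} as the single analytic input and converting its identifiability guarantee into a statement about posterior mass. Since the exponent in (\ref{eq:nu}) satisfies $\alpha\big(\sum_{r\in\calU}D_{KL}(\ba_r\| \bM'\bp) + \sum_{r\in\calL}D_{KL}(\ba_r\|\bM'\bb_r)\big) = \alpha N f_N(\btheta)$, I would first write the posterior mass of the complement of the ball as a ratio of prior integrals,
\[
P_{\nu_N}\!\left(B_\eps(\btheta^0)^c\right) = \frac{\displaystyle\int_{B_\eps(\btheta^0)^c} \exp\!\big(-\alpha N f_N(\btheta)\big)\,\Pi(d\btheta)}{\displaystyle\int_{\Theta} \exp\!\big(-\alpha N f_N(\btheta)\big)\,\Pi(d\btheta)},
\]
and reduce the theorem to showing this ratio tends to $0$ by bounding the numerator from above and the denominator from below.

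For the numerator, I would invoke Theorem \ref{lem:id}(i): there is $\kappa>0$ with $f_N(\btheta)\ge f(\btheta^0)+\kappa$ uniformly over $B_\eps(\btheta^0)^c$, almost surely for all large $N$. Since $\Pi$ is a probability measure, the numerator is then at most $\exp\!\big(-\alpha N(f(\btheta^0)+\kappa)\big)$. For the denominator, I would restrict the integral to a small ball $B_\delta(\btheta^0)$ and seek a matching lower bound of the form $\exp\!\big(-\alpha N(f(\btheta^0)+\kappa/2)\big)\,\Pi(B_\delta(\btheta^0))$. This requires choosing $\delta$ small enough that $f_N(\btheta)\le f(\btheta^0)+\kappa/2$ throughout $B_\delta(\btheta^0)$ for all large $N$, which I would obtain by combining continuity of the limit $f$ at $\btheta^0$ with an almost-sure uniform law of large numbers $\sup_{\btheta\in\overline{B}_\delta(\btheta^0)}|f_N(\btheta)-f(\btheta)|\to 0$; the prior mass $\Pi(B_\delta(\btheta^0))$ is strictly positive by the hypothesis on $\Pi$. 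Dividing the two bounds gives
\[
P_{\nu_N}\!\left(B_\eps(\btheta^0)^c\right) \le \Pi\big(B_\delta(\btheta^0)\big)^{-1}\exp\!\big(-\tfrac{1}{2}\alpha N \kappa\big),
\]
which tends to $0$ because $\alpha N\to\infty$ as $n,N\to\infty$ with $n/N\to\xi$, yielding $P_{\nu_N}(B_\eps(\btheta^0))\to 1$ almost surely. (Using the in-probability bound of Theorem \ref{lem:id}(ii) in place of (i) would instead give convergence in probability.)

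The hard part will be the local uniform control of $f_N$ needed for the denominator bound. Unlike the usual bounded-loss setting, the KLD carries $\log$ terms that can diverge as $\bM'\bp$ or $\bM'\bb_r$ approaches the simplex boundary, so a uniform law of large numbers is not automatic. I would argue that because $\btheta^0$ is an interior point and $\bM^0$ is non-singular (Assumption 2), the vectors ${\bM^0}'\bp^0$ and ${\bM^0}'\bb_r$ have all coordinates bounded away from $0$, and hence on a sufficiently small neighborhood $\overline{B}_\delta(\btheta^0)$ the per-instance losses $D_{KL}(\ba_r\|\bM'\bp)$ and $D_{KL}(\ba_r\|\bM'\bb_r)$ are finite, uniformly bounded, and Lipschitz in $\btheta$. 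This confines the problem to a region where a standard Lipschitz (or bracketing) uniform LLN applies, exploiting the same boundedness and regularity already used in establishing Theorem \ref{lem:id}. Verifying this containment and the attendant uniform convergence — rather than the algebra of the exponential bounds — is where the substantive work lies.
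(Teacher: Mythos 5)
Your proof is correct, but it takes a genuinely different route from the paper. The paper's proof of Theorem \ref{th:post} is essentially a two-line citation: having established the pointwise limit $f_N \to f$ (Lemma \ref{lem:f}(i)) and the separation property (Theorem \ref{lem:id}(i)), it invokes the general concentration machinery for generalized posteriors of \cite{miller2019asymptotic} (Condition 1 of his Theorem 2.3) and stops. You instead re-derive the concentration from first principles via the classical Schwartz-type ratio argument: the numerator over $B_\eps(\btheta^0)^c$ is bounded above by $\exp\left(-\alpha N(f(\btheta^0)+\kappa)\right)$ using Theorem \ref{lem:id}(i), and the denominator is bounded below by restricting to a small ball $B_\delta(\btheta^0)$ on which $f_N \leq f(\btheta^0)+\kappa/2$ eventually. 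Both routes rest on the same key input --- the identifiability result, which is the paper's substantive contribution --- so what differs is the packaging: the paper's citation is shorter, while your self-contained version makes explicit the local control near $\btheta^0$ that the cited theorem encapsulates, and it yields an explicit exponential bound $\Pi(B_\delta(\btheta^0))^{-1}\exp(-\alpha N\kappa/2)$ together with almost-sure convergence. One remark on the step you flag as the hard part: the local uniform law of large numbers is sufficient but not necessary. For the denominator it is enough that $f_N(\btheta) \to f(\btheta)$ for $\Pi$-almost every $\btheta$ in the small ball (which holds almost surely by Fubini from pointwise almost-sure convergence) together with continuity of $f$ at $\btheta^0$: applying Fatou's lemma to $\int_{B_\delta(\btheta^0)}\exp\left(-\alpha N\left(f_N(\btheta)-f(\btheta^0)-\kappa/2\right)\right)d\Pi$ shows this integral diverges almost surely, which is exactly the lower bound you need; this pointwise argument is what general results of this type use internally. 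That said, your uniform-LLN route is also valid, with one small correction: the coordinates of $\bM'\bp$ and $\bM'\bb_r$ being bounded away from zero on a neighborhood needs only that $\btheta^0$ is an interior point (since $\sum_i M_{ij}b_{ri} \geq \min_{ij} M_{ij}$ and $\sum_i b_{ri}=1$), not the non-singularity of $\bM^0$ from Assumption 2 that you cite; with that, the per-instance KL losses are uniformly bounded and uniformly Lipschitz on the ball, and a standard covering/equicontinuity argument completes the uniform convergence.
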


\begin{theorem}[Asymptotic normality]\label{th:norm} The mean $\hat\btheta$ of  the Gibbs posterior distribution (\ref{eq:nu}) of $\btheta=(\tM, \tp)$ is asymptotic normal i.e., 
	$\sqrt N \bOmega(\btheta^0)^{-1/2}\bJ(\btheta^0)(\hat \btheta - \btheta^0) \to_d N(0,\bI)$ for positive definite matrices $\bOmega(\btheta^0)$ satisfying $\bOmega(\btheta^0)^{-1/2}\nabla f_N(\btheta^0)/\sqrt N \to_d N(0,\bI)$ and $\bJ(\btheta^0)=\nabla^2_{\btheta^0} f$.
\end{theorem}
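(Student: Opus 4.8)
The plan is to obtain the asymptotic normality of the posterior mean $\hat\btheta$ by establishing a Bernstein--von Mises (Laplace-type estimator) result for the Gibbs posterior and then invoking the general theory for quasi-Bayesian / extremum-type estimators of \cite{chernozhukov2003mcmc} and \cite{miller2019asymptotic}. Writing the posterior as $\nu_N(\btheta)\propto\exp(-\alpha N f_N(\btheta))\,\Pi(\btheta)$, the key intermediate step is to show that $f_N$ admits a locally uniform quadratic expansion about the interior truth $\btheta^0$, so that the law of $\sqrt N(\btheta-\btheta^0)$ under $\nu_N$ converges to a Gaussian and, in particular, the posterior mean is asymptotically equivalent to the extremum estimator minimizing $f_N$. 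Because the point estimate tracks the minimizer of $f_N$, the learning rate $\alpha$ cancels from the limiting \emph{sampling} distribution of $\hat\btheta$; it governs only the posterior spread, which is what matters for interval calibration in Section \ref{sec:cp}.

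To apply that framework I would verify its hypotheses one by one. The well-separated-minimum requirement is already furnished by the identifiability Theorem \ref{lem:id} and the consequent posterior consistency of Theorem \ref{th:post}, which confine the analysis to a shrinking $\ell_1$-neighborhood of $\btheta^0$. On such a neighborhood every summand of $f_N$ is smooth in $\btheta$: since $\btheta^0$ is interior, the mean vectors ${\bM^0}'\bp^0$ and ${\bM^0}'\bb_r$ have strictly positive entries, so the cross-entropy terms are infinitely differentiable (the $0$'s and $1$'s in $\ba$ are harmless as they enter only through the bounded coefficients $a_{rj}$). Next I would establish the score CLT $\sqrt N\,\nabla f_N(\btheta^0)\to_d N(0,\bOmega)$: unbiasedness of the estimating equations at the truth, equations (\ref{eq:eel}) and (\ref{eq:eeu}), gives $E[\nabla f_N(\btheta^0)]=0$, so $\sqrt N\,\nabla f_N(\btheta^0)$ is a normalized sum of independent mean-zero bounded vectors over $\calU$ and $\calL$ and the multivariate Lindeberg CLT applies, with $\bOmega$ the limiting covariance (carrying the mixing fraction $\xi=\lim n/N$ on its labeled block). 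A law of large numbers then gives $\nabla^2 f_N(\btheta^0)\to\bJ(\btheta^0)=\nabla^2 f$, while smoothness and compactness on the localized neighborhood control the Taylor remainder uniformly (stochastic equicontinuity); positivity and continuity of $\Pi$ at $\btheta^0$ are assumed.

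The main obstacle is showing that $\bJ(\btheta^0)=\nabla^2 f$ is positive definite, because $f$ is \emph{not} convex due to the bilinear $\bM'\bp$ term, so no convexity shortcut applies. The plan is to exploit the information-matrix structure induced by the correct first moment. By the chain rule each Hessian block has the form $\bG'\bH\bG$ plus a term proportional to the \emph{gradient} of the cross-entropy in its mean argument; at $\btheta^0$ this second term vanishes in expectation precisely because $E[\ba]=\bq^0$ (the same property that yields the unbiased estimating equations), removing any indefinite contribution. Here $\bH$ is the Hessian of the multinomial cross-entropy in the mean parameter evaluated at $\bq^0=E[\ba]$, a $\mathrm{diag}(1/q^0_j)$-type Fisher information that is positive definite, and $\bG$ is the Jacobian of the map $\btheta\mapsto\bq$. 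Positive definiteness of $\bJ$ thus reduces to $\bG$ having full column rank. The block differentiating $\bq=\bM'\bp$ in $\tp$ is full rank exactly because $\bM^0$ is non-singular (Assumption 2), while the block differentiating the rows of $\bM$ inherits full rank from Assumption 1, which places mass of $\bb_r$ at every corner of the simplex in $\calL$ so that every row of $\bM$ is excited (consistent with the convexity of $\ell_\calL$ in $\tM$ from Lemma \ref{lem:fnl}). A block Schur-complement argument combines the unlabeled and labeled contributions to conclude $\bJ\succ 0$. This is exactly the step where both loss functions and both Assumptions 1--2 are genuinely needed, mirroring their joint role in Theorem \ref{lem:id}.

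With these ingredients the Laplace-type result delivers $\sqrt N(\hat\btheta-\btheta^0)\to_d N(0,\bJ^{-1}\bOmega\bJ^{-1})$, and the sandwiched form in the statement follows from a one-line covariance computation, $\bOmega^{-1/2}\bJ\,(\bJ^{-1}\bOmega\bJ^{-1})\,\bJ\,\bOmega^{-1/2}=\bI$, giving $\sqrt N\,\bOmega(\btheta^0)^{-1/2}\bJ(\btheta^0)(\hat\btheta-\btheta^0)\to_d N(0,\bI)$ as claimed.
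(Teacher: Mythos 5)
Your proposal is correct and, at the top level, takes the same route as the paper: the paper likewise treats the posterior mean as the Laplace-type estimator associated with the quadratic loss $\rho(\bx)=\|\bx\|^2$, checks Conditions 1--4 of \cite{chernozhukov2003mcmc} (interiority of $\btheta^0$, the loss $\rho$, identifiability via Theorem \ref{lem:id}(ii), and the local quadratic expansion), and imports the score CLT with covariance $\bOmega$ and the Hessian $\bJ$ from Lemma \ref{lem:f}; your remark that the learning rate $\alpha$ drops out of the limiting law of the posterior mean is also right. Where you genuinely differ is the proof that $\bJ\succ 0$. The paper's Lemma \ref{lem:f}(iv) infers $\bJ\succ 0$ from $\btheta^0$ being the strict global minimizer of the twice-differentiable $f$, which is not a valid implication on its own (a strict minimum can have a singular Hessian); the real justification in the paper is dispersed across Lemma \ref{lem:fl}(ii) ($\nabla^2\ell_\calL\succ 0$ under Assumption 1) and the computation $\nabla^{2}\ell^{*}_{\calU}(\tp^{0})=\bU E_{\calU}\bD\bU'\succ 0$ (Assumption 2) inside the proof of Theorem \ref{lem:id}. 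Your Gauss--Newton decomposition supplies exactly the missing glue: at $\btheta^0$ the first-moment correctness makes the gradient-weighted second-derivative term drop out (indeed $\sum_j\nabla^2_{\btheta}q_j=0$ since $\sum_j q_j\equiv 1$), so the unlabeled Hessian reduces to the positive semidefinite information form whose $\tp$-block is positive definite by Assumption 2, the labeled Hessian is positive definite in $\tM$ by Assumption 1, and a block/Schur argument on the sum of the two PSD pieces yields $\bJ\succ 0$ --- a tighter treatment than the paper's at this step. Conversely, you are lighter than the paper on Condition 4: the paper verifies stochastic equicontinuity concretely via the uniform third-derivative bound of Lemma \ref{lem:f}(iii), which gives $\sup_{\btheta\in B_\delta(\btheta^0)}|\nabla^2(f_N-f)(\btheta)|\le|\nabla^2(f_N-f)(\btheta^0)|+K\delta\to_p 0$; your appeal to ``smoothness and compactness'' would need to be instantiated as precisely this kind of estimate for the argument to be complete.
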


The proof of the results are provided in the Supplement. Proof of existence and positiveness of the matrices $\bOmega(\btheta^0)$ and $\bJ(\btheta^0)$ of Theorem \ref{th:norm} are provided in Lemma \ref{lem:f} of the Supplement.

\subsection{Coverage of Interval estimates}\label{sec:cp}

Gibbs posteriors or even full posteriors under model misspecification generally do not offer well calibrated credible intervals \citep{kleijn2012bernstein}. Calibration of credible intervals can be guaranteed under {\em generalized information equality}, i.e., when $\bOmega(\btheta^0) = \bJ(\btheta^0)$ in Theorem \ref{th:norm} \citep{chernozhukov2003mcmc}. This equality is satisfied for correctly specified likelihoods and other estimators like generalized method of moments, but typically is not satisfied for fractional posteriors or M-estimation type approaches like ours. Consequently, there is a large body of work on choice of the scaling parameter $\alpha$ (also known as the {\em learning rate} or {\em inverse temperature}) in (\ref{eq:nu}) to ensure desirable properties of interval estimates from Gibbs posteriors. 

\cite{bissiri2016} proposed choosing $\alpha$ by matching expected unit losses from the data and the prior. This strategy does not work with flat priors. Other strategies considered include endowing $\alpha$ with a hyper-prior which requires choosing the hyper-parameters, or using a prior that has conjugate structure to the loss which requires establishing the equivalence of prior loss with $m$ units of data loss. 
\cite{holmes2017assigning} considered the special case of power likelihoods and developed related ideas of choosing the power parameter based on matching expected information from Gibbs power posterior with that from a full Bayesian update. This relies on knowledge of a full parametric model for a Bayesian update and does not generalize from power posteriors to loss-function based Gibbs posteriors. More importantly, choosing $\alpha$ based on information matching offer no guarantee about calibration of credible intervals.

Choice of $\alpha$ is also discussed in the PAC-Bayes literature \citep{guedj2019primer}. While theoretical bounds often use some oracle value of $\alpha$, practical strategies include cross-validation which can be computationally demanding, or integration over $\alpha$. SafeBayes \citep{grunwald2011safe,grunwald2012safe,grunwald2017inconsistency} recommend optimizing over $\alpha$ using expected posterior loss or posterior predictive loss from power likelihoods that may not generalize to arbitrary loss functions. Again, these strategies provide no guarantees about coverage probabilities of the  credible intervals. 

\cite{syring2019calibrating} calibrates $\alpha$ directly  using bootstrapped data distributions to get the desired coverage of credible intervals. However, the algorithm requires posterior sampling for each choice of $\alpha$, each confidence level, and each bootstrap sample. 
Also their theory proves existence of an $\alpha^*$ which ensures calibrated intervals, but does not guarantee reaching $\alpha^*$ with the proposed iterative algorithm. 

As an alternative to choosing $\alpha$ to get well-calibrated credible intervals, \cite{chernozhukov2003mcmc} proposed fixing $\alpha$ and obtaining delta-method style  confidence intervals around the Gibbs posterior mean. These guarantee well-calibrated coverage probability of all functionals of parameters for any confidence level. 

For GBQL, while we can use any of the aforementioned strategies for choosing $\alpha$, we adopt the delta-method approach of \cite{chernozhukov2003mcmc} due to its minimal computational overhead and established asymptotic coverage guarantee. We prove the following result on guaranteed coverage of all parameter functionals for GBQL using fully data-driven confidence intervals. The technical statement of the result is provided in Section \ref{sec:addth} of the supplement, and proved therein. 

\begin{theorem}\label{th:cp} 
	For any differentiable function $g(\btheta)$, one can compute $t(g,\calL,\calU)$ --- a deterministic function of $g$ and the data $(\calL,\calU)$, such that with $\widehat\btheta$ denoting the Gibbs posterior mean, and $C_{g,s}=z_{1-s/2}t(g,\calL,\calU)$ for any $0 < s <1$, where $z_s$ is the $s^{th}$ quantile of a standard normal variable, we have 
	$$P(g(\hat\btheta) - C_{g,s} < g(\btheta^0) < g(\hat\btheta) + C_{g,s} ) \to 1 - s.$$
\end{theorem}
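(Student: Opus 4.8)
The plan is to reduce the coverage statement to a studentized pivot and invoke the asymptotic normality already established in Theorem \ref{th:norm}. Since $\bJ(\btheta^0)=\nabla^2_{\btheta^0}f$ is symmetric and, by Lemma \ref{lem:f}, positive definite, Theorem \ref{th:norm} is equivalent to $\sqrt N(\hat\btheta-\btheta^0)\to_d N(0,\bSigma)$ with the sandwich covariance $\bSigma=\bJ(\btheta^0)^{-1}\bOmega(\btheta^0)\bJ(\btheta^0)^{-1}$. The first step is to apply the delta method to the differentiable functional $g$: writing $\nabla g_0=\nabla g(\btheta^0)$, this gives $\sqrt N\,(g(\hat\btheta)-g(\btheta^0))\to_d N(0,\sigma_g^2)$ with $\sigma_g^2=\nabla g_0'\,\bSigma\,\nabla g_0$. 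Provided $\sigma_g>0$ (which holds whenever $\nabla g_0\neq 0$, as $\bSigma$ is positive definite), dividing by any estimator $t$ with $\sqrt N\,t\to_p\sigma_g$ and invoking Slutsky produces an asymptotically standard normal pivot, from which the stated coverage is immediate.

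The second and central step is to construct $t(g,\calL,\calU)$ from observed quantities so that $\sqrt N\,t\to_p\sigma_g$, which amounts to consistently estimating the two sandwich ingredients at the computable posterior mean $\hat\btheta$. For the ``bread'' I would take $\hat\bJ=\nabla^2 f_N(\hat\btheta)$, the Hessian of the empirical loss. For the ``meat'', I exploit the estimating-equation structure (\ref{eq:eel}) and (\ref{eq:eeu}): at $\btheta^0$, $\sqrt N\,\nabla f_N(\btheta^0)$ is a centered sum of independent per-instance score contributions, so its limiting covariance $\bOmega(\btheta^0)$ is consistently estimated by the empirical outer-product-of-gradients
\[
\hat\bOmega=\frac1N\left(\sum_{r\in\calU}\bs_r^\calU(\hat\btheta)\,\bs_r^\calU(\hat\btheta)'+\sum_{r\in\calL}\bs_r^\calL(\hat\btheta)\,\bs_r^\calL(\hat\btheta)'\right),
\]
where $\bs_r^\calU$ and $\bs_r^\calL$ are the gradients of the unlabeled and labeled per-instance KL losses. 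I then define $t(g,\calL,\calU)^2=N^{-1}\,\nabla g(\hat\btheta)'\,\hat\bJ^{-1}\hat\bOmega\,\hat\bJ^{-1}\,\nabla g(\hat\btheta)$, a deterministic function of $g$ and $(\calL,\calU)$ as required, and reusing gradients already evaluated in the Gibbs sampler.

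Establishing $\sqrt N\,t\to_p\sigma_g$ is where the work concentrates and is the main obstacle. By the posterior consistency of Theorem \ref{th:post}, $\hat\btheta\to_p\btheta^0$, so it suffices to prove $\hat\bJ\to_p\bJ(\btheta^0)$ and $\hat\bOmega\to_p\bOmega(\btheta^0)$ and then apply the continuous mapping theorem, using invertibility of $\bJ(\btheta^0)$ and continuity of $g,\nabla g$. Both matrix convergences follow from a uniform law of large numbers for the Hessian and squared-gradient maps over a neighborhood of $\btheta^0$, combined with $\hat\btheta\to_p\btheta^0$. The technical crux is the domination/integrability needed for these uniform laws: the gradients and Hessians of the KL losses contain ratios $a_{rj}/(\bM'\bp)_j$ and $a_{rj}/(\bM'\bb_r)_j$ that are singular where a coordinate of the mean composition vanishes. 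Here I would use interiority of $\btheta^0$ and nonsingularity of $\bM^0$ (Assumption 2) to bound the denominators away from $0$ on a neighborhood of $\btheta^0$ (for the labeled terms, together with Assumption 1 controlling the support of $\bb_r$), and the first-moment condition (\ref{eq:condmean}) to argue that the offending coordinates of $\ba_r$ vanish almost surely exactly when their target mean does, yielding an integrable envelope.

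Finally I would assemble the pieces. Writing the interval event as
\[
\left\{\left|\frac{g(\hat\btheta)-g(\btheta^0)}{t(g,\calL,\calU)}\right|<z_{1-s/2}\right\},
\]
the numerator scaled by $\sqrt N$ converges to $N(0,\sigma_g^2)$ by the delta method while $\sqrt N\,t\to_p\sigma_g$, so by Slutsky the ratio converges in distribution to a standard normal. Hence the probability of the event tends to $P(|Z|<z_{1-s/2})=1-s$, which is exactly the claimed coverage with $C_{g,s}=z_{1-s/2}\,t(g,\calL,\calU)$. I emphasize that because the construction uses the full sandwich $\hat\bJ^{-1}\hat\bOmega\,\hat\bJ^{-1}$ rather than $\hat\bJ^{-1}$ alone, the guarantee does not require the generalized information equality $\bOmega(\btheta^0)=\bJ(\btheta^0)$ and is therefore valid for any fixed learning rate $\alpha$, consistent with the discussion preceding the theorem.
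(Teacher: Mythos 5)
Your proposal is correct and follows the same top-level route as the paper: asymptotic normality from Theorem \ref{th:norm}, the delta method, and Slutsky, with coverage secured by consistent plug-in estimation of the sandwich $\bJ^{-1}\bOmega\bJ^{-1}$ at the Gibbs posterior mean (the paper invokes Theorem 4 of Chernozhukov and Hong for this last step rather than re-deriving the pivot, and uses the identical bread $\widehat\bJ=\nabla^2_{\widehat\btheta}f_N$ and the identical form of $C_{g,s}$). The genuine difference is the ``meat'' estimator and its consistency proof. You use the uncentered outer product of per-instance gradients evaluated at $\widehat\btheta$, justified by a uniform law of large numbers over a neighborhood of $\btheta^0$; the paper instead exploits the fact that each score is a known linear map of the observation---$\bU^{0'}\ba_r$ for $r\in\calU$ and $(\bb_r\otimes\bD_r)\ba_r$ for $r\in\calL$---and so estimates $\bOmega$ by sandwiching the sample covariance of the raw $\ba_r$'s between plug-in matrices, plus the sample covariance of the estimated labeled scores $\widehat g_r$. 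Both estimators are consistent (yours additionally needs the easy observation that centering is immaterial because the score mean vanishes at $\btheta^0$), but the paper's route avoids any ULLN: consistency is proved by elementary entry-wise bounds using only that $a_{rj},b_{ri}\le 1$ and that every denominator is at least $\min_{ij}\widehat M_{ij}\to_p\min_{ij}M^0_{ij}>0$. On that point, your ``technical crux'' is overstated and slightly misattributed: the denominators $\sum_i M_{ij}b_{ri}$ and $\sum_i M_{ij}p_i$ are bounded below on a neighborhood of $\btheta^0$ purely by interiority of $\btheta^0$ (since $\sum_i b_{ri}=\sum_i p_i=1$), not by Assumption 2 (nonsingularity of $\bM^0$) nor by any almost-sure vanishing argument via (\ref{eq:condmean}); with numerators bounded by one, the envelope is a constant and the uniform convergence you need is immediate. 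Finally, note that your pivot argument (like the paper's appeal to Chernozhukov--Hong) tacitly requires $\nabla g(\btheta^0)\neq 0$ so that $\sigma_g>0$; this condition is suppressed in both treatments.
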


\subsection{Finite-sample concentration rate}\label{sec:rate}
In addition to the asymptotic results above, we also prove a finite sample result on posterior concentration rate. 
Let $P_N^0$ denote the probability measure for generating $a_r \in \calU$, and $(a_r,b_r) \in \calL$. We define 
\begin{equation}\label{eq:alphad}
D_{N,\alpha}(\btheta,\btheta^0) = - 
\log \int 
\left(\frac{\tilde p_N(\btheta)}{\tilde p_N(\btheta^0)}\right)^\alpha 
dP_N^0 
\mbox{ where } \tilde p_N(\btheta)=\exp(-N f_N(\btheta)). 
\end{equation}

For the M-closed case, i.e., when the class $
\{\tilde p_N(\theta) \given \btheta \in \Theta\}$ contains the true data-generation model, $D_{N,\alpha}(\btheta,\btheta^0)$ is (upto a constant) the widely used R\'enyi-divergence. For the M-open case, i.e., when $
\{\tilde p_N(\theta) \given \btheta \in \Theta\}$ is a family of misspecified likelihoods, \cite{bhattacharya2019bayesian} used (\ref{eq:alphad}) as a valid divergence measure of $\btheta$ from $\btheta^0$ to derive finite-sample posterior concentration rates. Two conditions needed for the validity of (\ref{eq:alphad}) as a divergence were $\btheta^0$ being an interior point of $\Theta$, and that $\btheta^0$ minimizes -$\int \log(\tilde p_N(\btheta)) dP_N^0$.

For GBQL, we are using estimating equations (M-free case) for the compositional true and predicted labels. We also consider $\btheta$ to be an interior point of the parameter space, and under Assumptions 1 and 2, we show in Lemma \ref{lem:f} part (iv) that 
$$- \int \log(\tilde p_N(\btheta)) dP_N^0 = N \int f_N(\btheta) dP_N^0 = N f(\btheta)$$
is minimized at $\btheta^0$. This ensures validity of $D_{N,\alpha}(\btheta,\btheta^0)$ as a divergence. 
We now have the following finite sample result:
\begin{theorem}[Posterior concentration rate]\label{th:rate}
	Let $\eps:=\eps_N > 0$ be such that $B_\eps(\btheta^0)$ -- an $\ell_1$ ball of radius $\eps$ around $\btheta^0$ lies in the interior of $\Theta$, and $\Pi_N(\bp,\bM)$ be any (possibly $N$-dependent) prior that gives positive mass of atleast $\exp(-NR\eps)$ to $B_\eps(\btheta^0)$ for some universal constant $R$. Then, under Assumptions 1-2, 
	we have for any $\alpha \in (0,1)$, $D > 1$, and $t > 0$, 
	$$P_{\nu_N}\left(\frac{D_{N,\alpha}(\btheta,\btheta^0)}N \geqslant 
	(D+3t)R\eps\right) \leq \exp(-tNR\eps) \mbox{
		with } P_N^0\mbox{-probability atleast }1 - 
	\frac 2{NR\eps\min\{{(D-1+t)^2,t}\}}.$$
\end{theorem}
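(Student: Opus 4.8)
The plan is to adapt the fractional-posterior concentration argument of \cite{bhattacharya2019bayesian} from the likelihood (M-closed/M-open) setting to our M-free loss setting, replacing the log-likelihood ratio by the log-pseudo-likelihood ratio $\log(\tilde p_N(\btheta)/\tilde p_N(\btheta^0)) = -N(f_N(\btheta)-f_N(\btheta^0))$. Setting $A_N = \{\btheta : D_{N,\alpha}(\btheta,\btheta^0)/N \ge (D+3t)R\eps\}$, the Gibbs posterior (\ref{eq:nu}) is exactly a self-normalized ratio,
\begin{equation*}
P_{\nu_N}(A_N) = \frac{\mathcal N}{\mathcal D}, \qquad \mathcal N := \int_{A_N}\Big(\frac{\tilde p_N(\btheta)}{\tilde p_N(\btheta^0)}\Big)^\alpha d\Pi_N(\btheta), \qquad \mathcal D := \int_\Theta \Big(\frac{\tilde p_N(\btheta)}{\tilde p_N(\btheta^0)}\Big)^\alpha d\Pi_N(\btheta),
\end{equation*}
so the whole proof reduces to an upper bound on $\mathcal N$ and a lower bound on $\mathcal D$, each holding with high $P_N^0$-probability.

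The numerator is the easy direction and uses only the definition (\ref{eq:alphad}). By Tonelli (the integrand is nonnegative) and the identity $\int (\tilde p_N(\btheta)/\tilde p_N(\btheta^0))^\alpha dP_N^0 = e^{-D_{N,\alpha}(\btheta,\btheta^0)}$,
\begin{equation*}
E_{P_N^0}[\mathcal N] = \int_{A_N} e^{-D_{N,\alpha}(\btheta,\btheta^0)}\, d\Pi_N(\btheta) \le e^{-N(D+3t)R\eps}\,\Pi_N(A_N) \le e^{-N(D+3t)R\eps},
\end{equation*}
using $D_{N,\alpha}(\btheta,\btheta^0) \ge N(D+3t)R\eps$ on $A_N$. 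A Markov inequality with a polynomial threshold (take $\mathcal N \le NR\eps\,t\,e^{-N(D+3t)R\eps}$) then fails with $P_N^0$-probability at most $1/(NR\eps\,t)$, which supplies the ``$t$'' term in the final $\min$.

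The denominator is where the argument genuinely departs from the classical case and is the main obstacle. Restricting to $B := B_\eps(\btheta^0)$ and applying Jensen's inequality to the normalized prior $\Pi_N(\cdot)/\Pi_N(B)$ gives the evidence lower bound
\begin{equation*}
\log \mathcal D \ge \log \Pi_N(B) - \alpha W, \qquad W := \frac{N}{\Pi_N(B)}\int_B \big(f_N(\btheta)-f_N(\btheta^0)\big)\, d\Pi_N(\btheta).
\end{equation*}
I would then control $W$ in mean and variance under $P_N^0$. The crucial structural fact is that the per-instance loss increment telescopes: the entropy terms $a_{rj}\log a_{rj}$ cancel in the difference, leaving $D_{KL}(\ba_r||\bM'\bp) - D_{KL}(\ba_r||{\bM^0}'\bp^0) = \sum_j a_{rj}\log\frac{({\bM^0}'\bp^0)_j}{(\bM'\bp)_j}$ (and analogously with $\bb_r$ for $r \in \calL$), which is linear and bounded in the compositional datum because $\btheta^0$ is interior and $\eps$ is small, keeping $\bM'\bp$ and $\bM'\bb_r$ away from the simplex boundary. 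Hence, by Lemma \ref{lem:f}(iv) ($E_{P_N^0}f_N = f$ with $\btheta^0$ the minimizer) together with Lipschitz continuity of $f$ on the compact $\Theta$, $0 \le E_{P_N^0}[W] \le NR\eps$, and the boundedness gives $\mathrm{Var}_{P_N^0}(W) \le NR\eps$. A Chebyshev bound with deviation $NR\eps(D-1+t)$ then yields $W \le NR\eps(D+t)$, whence $\mathcal D \ge \Pi_N(B)\,e^{-\alpha NR\eps(D+t)}$ using the prior-mass hypothesis $\Pi_N(B)\ge e^{-NR\eps}$ and $\alpha<1$, on an event of $P_N^0$-probability at least $1 - 1/(NR\eps(D-1+t)^2)$.

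Combining the two bounds on the intersection of the good events, the ``$3t$'' margin in the definition of $A_N$ is exactly what is needed to absorb the Chebyshev deviation and the $\alpha<1$/prior-mass factors in $\mathcal D$ and the polynomial Markov slack in $\mathcal N$, giving $P_{\nu_N}(A_N) = \mathcal N/\mathcal D \le e^{-tNR\eps}$; a union bound over the numerator and denominator failure events yields the stated $P_N^0$-probability $1 - 2/(NR\eps\min\{(D-1+t)^2, t\})$. The hard part is the denominator: with no true density to anchor the log-ratio, its mean and variance must be controlled purely from the first-moment structure, which is where Lemma \ref{lem:f}(iv) (itself resting on identifiability, Theorem \ref{lem:id}, and Assumptions 1--2) and the entropy cancellation rendering the loss increments bounded and linear over $B_\eps \subset \mathrm{int}\,\Theta$ are indispensable; pinning the constants so that the $3t$ slack reproduces precisely the $\min\{(D-1+t)^2, t\}$ failure bound is the delicate but routine bookkeeping step.
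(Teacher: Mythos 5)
Your proposal is correct and takes essentially the same approach as the paper's proof: the identical numerator/denominator decomposition of $P_{\nu_N}$ over the set $A_N$, Tonelli/Fubini plus Markov for the numerator, and prior mass plus Jensen plus a Chebyshev second-moment bound for the denominator, with Lemma \ref{lem:f}(iv), Lipschitz continuity of $f$ near $\btheta^0$, and interiority of $\btheta^0$ (keeping $\bM'\bp$ and $\bM'\bb_r$ off the simplex boundary) supplying exactly the M-free moment controls the paper uses. The only differences are presentational — the paper verifies that $B_\eps(\btheta^0)$ sits inside the first/second-moment set $B_{N,\eps_1}(\btheta^0)$ and then imports the denominator bound from the proof of Theorem 3.1 of \cite{bhattacharya2019bayesian} rather than unpacking the Jensen--Chebyshev argument as you do, your Markov threshold is polynomial rather than exponential, and the issue you flag at the end (whether the $3t$ slack truly absorbs the prior-mass factor $e^{-NR\eps}$ for small $t$) is a constant-bookkeeping looseness shared by the paper itself, which it resolves only by letting the universal constant $R$ change value from line to line.
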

Theorem \ref{th:rate} establishes the (outer)-probability of $1-O(1/N)$ of the Gibbs posterior for GBQL concentrating in $\alpha$-divergence neighborhood of the truth at an exponential ($1-\exp(-O(N))$) rate. The rate is same as that for fractional posteriors established in \cite{bhattacharya2019bayesian}. We show that the prior mass condition of assigning atleast $\exp(-NR\eps)$ prior probability for the ball $B_\eps(\btheta^0)$ is satisfied by the Dirichlet priors for $\bM$ and $\bp$ for $\eps_N=O(\frac {\log{N}} N)$. This leads to the following {\em nearly parametric rate} (upto logarithmic terms) for the concentration of the posterior around $\alpha$-divergence neighborhoods. 

\begin{corollary}\label{cor:paramrate}
Using independent Dirichlet priors for $\bp$ and for (the rows of) $\bM$, we have for some universal constant $M$, 
$$P_{\nu_N}\left(\frac 1N D_{N,\alpha}(\btheta,\btheta^0) \geqslant 
M \frac{\log N}N \right) \to 0 \mbox{ in } P_N^0\mbox{-probability}.$$
\end{corollary}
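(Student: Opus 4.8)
The plan is to derive the corollary as a direct instantiation of Theorem \ref{th:rate} with the radius $\eps_N = \log N / N$. The only thing to check is the prior-mass hypothesis of that theorem, namely that the independent Dirichlet prior assigns mass at least $\exp(-N R \eps_N)$ to the $\ell_1$-ball $B_{\eps_N}(\btheta^0)$ for some universal constant $R$; once this is verified, the stated conclusion falls out by substituting $\eps_N$ into the theorem's bound and letting $N \to \infty$. Since $\btheta^0$ is an interior point of $\Theta$ and $\eps_N \to 0$, the ball $B_{\eps_N}(\btheta^0)$ lies in the interior of $\Theta$ for all large $N$, so the radius is admissible.

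First I would establish the prior-mass lower bound. The parameter $\btheta = (\tM,\tp)$ ranges over $\Theta = \calS_{C-1}^C \otimes \calS_{C-1}$, whose dimension is $d = C(C-1) + (C-1) = C^2 - 1$. Because the product Dirichlet density is continuous and strictly positive at the interior point $\btheta^0$, it is bounded below by a positive constant $c_0$ on a fixed neighborhood of $\btheta^0$; hence for all small $\eps$,
\begin{equation*}
\Pi_N\big(B_\eps(\btheta^0)\big) \;\geq\; c_0 \, \mathrm{vol}\big(B_\eps(\btheta^0)\big) \;\geq\; c_1\, \eps^{\,d}
\end{equation*}
for a constant $c_1 > 0$ depending only on $C$. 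Evaluating at $\eps_N = \log N / N$ gives $\Pi_N(B_{\eps_N}(\btheta^0)) \geq c_1 (\log N)^d N^{-d}$, while $\exp(-N R \eps_N) = N^{-R}$. The inequality $c_1 (\log N)^d N^{-d} \geq N^{-R}$ therefore holds for all large $N$ as soon as $R > d$, so I would simply fix $R = d+1$, which verifies the prior-mass hypothesis.

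With $R = d+1$ in hand I would fix any $\alpha \in (0,1)$ and any $D > 1$, $t > 0$ and apply Theorem \ref{th:rate} at radius $\eps_N$. The threshold becomes $(D + 3t) R \eps_N = (D+3t)(d+1)\, \log N / N$, which is of the form $M \log N / N$ with the universal constant $M := (D+3t)(d+1)$, matching the corollary. The posterior-probability bound is $\exp(-t N R \eps_N) = N^{-t(d+1)} \to 0$, and it is asserted to hold with $P_N^0$-probability at least $1 - 2/\big((d+1)\,(\log N)\,\min\{(D-1+t)^2, t\}\big)$, which tends to $1$. Consequently the random quantity $P_{\nu_N}\big(D_{N,\alpha}(\btheta,\btheta^0)/N \geq M \log N / N\big)$ is, with $P_N^0$-probability tending to one, bounded above by $N^{-t(d+1)}$; for any $\eta > 0$ this forces $P_N^0\big(P_{\nu_N}(\cdots) > \eta\big) \to 0$, which is precisely convergence to $0$ in $P_N^0$-probability.

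I expect the only genuinely delicate step to be the prior-mass lower bound rather than the final substitution. Specifically, one must ensure that the volume estimate $\Pi_N(B_\eps(\btheta^0)) \gtrsim \eps^d$ uses a neighborhood on which the Dirichlet density is bounded away from zero (not merely contained in $\Theta$), and one must carry the lower-order $\log\log N$ term arising from $\log \eps_N = \log\log N - \log N$ without letting it disturb the exponent comparison $d < R$. This term is dominated by the $-d\log N$ contribution, so the comparison survives, but this is the one place where a little bookkeeping is required; everything downstream is a mechanical substitution into Theorem \ref{th:rate}.
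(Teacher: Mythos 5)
Your proposal is correct, and its skeleton is exactly the paper's: verify the prior-mass hypothesis of Theorem \ref{th:rate} at radius $\eps_N = \log N/N$, then read off the conclusion by substitution, noting that both $\exp(-tNR\eps_N) = N^{-tR}$ and the exceptional $P_N^0$-probability $O(1/(R\log N))$ vanish. The one genuine difference is how the small-ball prior bound is obtained. The paper invokes Lemma 6.1 of \cite{ghosal2000convergence}, a Dirichlet concentration inequality, which yields $\Pi(B_{\eps}(\btheta^0)) \geq R_1^{C+1}\exp(-(C+1)R_2\log(1/\eps))$ with constants depending on the Dirichlet hyper-parameters; you instead bound the product Dirichlet density below by a positive constant on a neighborhood of the \emph{interior} point $\btheta^0$ and multiply by the Lebesgue volume of the $\ell_1$-ball, getting $\Pi(B_\eps(\btheta^0)) \geq c_1\eps^{d}$ with $d = C^2-1$. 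Both routes produce a polynomial-in-$\eps$ bound and then perform the identical bookkeeping ($\log N$ dominating $\log\log N$, so a polynomial bound beats $\exp(-NR\eps_N) = N^{-R}$ once $R$ exceeds the polynomial degree). Your argument is more elementary and self-contained, but leans on $\btheta^0$ being interior (which the paper assumes throughout, so no gap arises); the cited lemma is the more robust tool if one ever wanted rates with $\btheta^0$ approaching the simplex boundary, and it tracks the dependence on the prior hyper-parameters explicitly. One small bookkeeping caveat, which the paper handles the same way you implicitly must: the constant $R$ in Theorem \ref{th:rate} also has to dominate the Lipschitz-type constants internal to that theorem's proof, so rather than fixing $R = d+1$ you should take $R = \max\{d+1, R_{\mathrm{internal}}\}$; since enlarging $R$ only weakens the prior-mass requirement $\exp(-NR\eps_N)$, your verification still applies verbatim, just as the paper chooses ``a value of $R$ \ldots greater than $R_2^*$.''
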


\subsection{Ensemble GBQL}\label{sec:ensth} Finally, all theory extends to the ensemble quantification of Section \ref{incorporating-multiple-predictions}. One thing to note for ensemble GBQL is that the same dataset is used with different classifiers to get predictions. Hence, these $K$ sets of predictions will not be independent. Also, the labeled data $\bb_r$ for $r \in \calL$ is going to be the same one used in the loss function for each classifier. The theory accommodates these dependencies. We state the result informally here, and provide the technical version in Section \ref{sec:addth} and the proof in Section \ref{sec:proofs}. 
\begin{corollary}[Ensemble GBQL]\label{cor:ens} If there are $K$ predictions are available for each instance from $K$ classifiers, and Assumptions 1 and 2 are satisfied for each classifier, then with  $\btheta=(\tM^{(1)},\ldots,\tM^{(K)}, \tp)$ we can establish posterior consistency, asymptotic normality, coverage of confidence intervals, and posterior concentration rate results for ensemble GBQL analogous to Theorems \ref{th:post} - \ref{th:rate}. 
\end{corollary}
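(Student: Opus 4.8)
The plan is to reduce every assertion to its single-classifier counterpart by exploiting the additive structure of the ensemble loss. Writing $f_N(\btheta) = \sum_{k=1}^K f_N^{(k)}(\tM^{(k)}, \tp)$ and $f(\btheta) = \sum_{k=1}^K f^{(k)}(\tM^{(k)}, \tp)$, where each summand is exactly the single-classifier loss of Theorem \ref{lem:id} but sharing the common target $\tp$, identifiability transfers immediately. By Assumptions 1--2 for classifier $k$, Theorem \ref{lem:id} gives $f^{(k)}(\tM^{(k)}, \tp) \geq f^{(k)}(\tM^{(k),0}, \tp^0)$ with strict separation of at least some $\kappa_k$ once $\|(\tM^{(k)}, \tp) - (\tM^{(k),0}, \tp^0)\|_1$ exceeds a fixed threshold. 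If $\|\btheta - \btheta^0\|_1 > \eps$, then either $\tp$ or some $\tM^{(k_0)}$ lies at $\ell_1$-distance at least $\eps/(K+1)$ from its truth; in the first case every summand is separated, in the second the $k_0$-th summand is separated while the rest are nonnegative. Either way $f(\btheta) - f(\btheta^0) \geq \kappa$ for a common $\kappa>0$, the ensemble analogue of Theorem \ref{lem:id}. Since expectations are linear, the almost-sure and in-probability convergence of $f_N^{(k)}$ to $f^{(k)}$ holds per classifier irrespective of the dependence across $k$, and a finite intersection over $k$ preserves it.

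Given ensemble identifiability, I would obtain posterior consistency (the analogue of Theorem \ref{th:post}) verbatim from the general Gibbs-posterior arguments of \cite{chernozhukov2003mcmc}, since those require only identifiability and prior positivity on $\ell_1$-balls and never invoke independence across classifiers.

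The step that genuinely engages the dependence structure is asymptotic normality. Here $\nabla f_N(\btheta^0) = \sum_k \nabla f_N^{(k)}(\btheta^0)$ is a sum over instances of a per-instance stacked score that is a fixed function of $(\ba_r^1, \ldots, \ba_r^K, \bb_r)$. The crucial structural fact is that instances $r$ remain i.i.d.; only the $K$ predictions within a single instance are dependent. A multivariate CLT applied to these i.i.d. stacked per-instance scores (combining the $\calU$ and $\calL$ contributions through the limit $\xi = \lim n/N$) yields $\bOmega(\btheta^0)^{-1/2}\nabla f_N(\btheta^0)/\sqrt N \to_d N(0,\bI)$, where the cross-classifier correlations enter only as off-diagonal blocks of $\bOmega(\btheta^0)$ and therefore do not obstruct the CLT. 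Positive-definiteness of $\bJ(\btheta^0) = \nabla^2 f(\btheta^0)$ I would establish from its arrowhead structure: since $f^{(k)}$ involves only $(\tM^{(k)}, \tp)$, the $\tM^{(k)}$--$\tM^{(k')}$ cross-blocks vanish for $k \neq k'$, so for any test vector $(u_1,\ldots,u_K,w)$ the quadratic form collapses to $\sum_k (u_k,w)' \bJ^{(k)}(\btheta^0)(u_k,w)$, each term positive by the single-classifier Hessian positivity of Lemma \ref{lem:f} and vanishing only when $w=0$ and all $u_k=0$. With $\bJ$ and $\bOmega$ positive definite, the machinery behind Theorem \ref{th:norm} delivers the sandwich limit, and the delta-method intervals of Theorem \ref{th:cp} follow because the associated sandwich estimator uses the empirical covariance of the stacked per-instance scores, which automatically absorbs the cross-classifier correlations.

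Finally, the finite-sample rate of Theorem \ref{th:rate} and Corollary \ref{cor:paramrate} transfer with no new obstacle: $D_{N,\alpha}$ is defined solely through $f_N$ and $P_N^0$, and its validity as a divergence needs only that $\btheta^0$ is interior and minimizes $N f(\btheta)$, both secured above; the argument of \cite{bhattacharya2019bayesian} then proceeds through the joint law $P_N^0$ without assuming independence across the $K$ prediction sets. The prior-mass condition is met by taking independent Dirichlet priors for $\bp$ and for the rows of each $\bM^{(k)}$, whose small-ball probabilities multiply so that the $\exp(-NR\eps)$ bound holds with $R$ enlarged by the finite factor $K+1$. The main obstacle is thus concentrated in the asymptotic-normality step---correctly identifying the limiting covariance $\bOmega(\btheta^0)$ through a within-instance multivariate CLT and verifying positive-definiteness of the arrowhead Hessian---while every remaining claim reduces mechanically to its single-classifier version.
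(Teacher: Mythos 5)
Most of your architecture matches the paper's proof: the additive decomposition $f_N=\sum_k f_{N,k}$, a case analysis over which block of $\btheta$ is far from its truth, the stacked per-instance score with a multivariate CLT absorbing within-instance dependence across classifiers, positive definiteness of $\bJ_{ens}=\sum_k \bJ_k$ via the arrowhead quadratic form, plug-in consistency for the coverage result, and the Bhattacharya-style rate bound with $K$-inflated constants. The genuine gap is in the identifiability step. What you actually derive is separation for the \emph{population} loss, $\inf_{\|\btheta-\btheta^0\|_1>\eps} f(\btheta)-f(\btheta^0)\geq \kappa$, and you then assert that almost-sure pointwise convergence of $f_{N,k}$ to $f_{,k}$ "preserves it." That inference is invalid here: what the Gibbs-posterior machinery of \cite{chernozhukov2003mcmc} and \cite{miller2019asymptotic} requires — and what the ensemble analogue of Theorem \ref{lem:id} asserts — is the uniform \emph{empirical} statement $\liminf_N \inf_{\|\btheta-\btheta^0\|_1>\eps} f_N(\btheta)-f(\btheta^0)\geq \kappa$ (a.s.\ and in probability). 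Pointwise convergence upgrades population separation to such a uniform statement only under additional structure, e.g.\ convexity (this is exactly Condition 3 $\implies$ Condition 1 in Theorem 2.3 of \cite{miller2019asymptotic}), and the ensemble loss, like the single-classifier loss, is \emph{not} jointly convex in $(\tM^{(1)},\ldots,\tM^{(K)},\tp)$ because of the ${\bM^{(k)}}'\bp$ terms. This bridging is precisely the nontrivial content of Theorem \ref{lem:id}; it cannot be dispatched by linearity of expectations and a finite intersection of a.s.\ events.

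The fix is to run your own case decomposition directly on the empirical losses, which is what the paper does. On $\calA_k=\{\btheta \given \|\tM^{(k)}-\tM^{(k0)}\|_1 > h\}$, apply the per-classifier empirical identifiability (Lemma \ref{lem:fnl}(iv), or Theorem \ref{lem:id}(i) for classifier $k$) to the $k$-th summand, and lower-bound every other summand by $\liminf_N \inf_\Theta f_{N,k'}(\btheta)-f_{,k'}(\btheta^0)\geq 0$; note that at the empirical level your phrase "the rest are nonnegative" is exactly this weak-identifiability claim (Lemma \ref{lem:fnu}(iii) for the unlabeled part, plus the convexity-based convergence of the labeled part), not a triviality. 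On the complementary set where every $\tM^{(k)}$ is within $h$ of its truth and hence $\|\tp-\tp^0\|_1>\eps/2$, reuse the localization argument inside the proof of Theorem \ref{lem:id} (Lipschitz control in $\tM$ together with convexity of $\ell_{\calU,N,k}(\tM^{(k0)},\cdot)$ in $\tp$), summed over $k$. Once the empirical identifiability is repaired in this way, the remainder of your proposal — consistency, the CLT with $\bOmega_{ens}$ carrying the cross-classifier covariance blocks, the delta-method intervals, and the concentration rate — goes through as written and coincides with the paper's proof.
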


\hypertarget{a-gibbs-sampler-for-the-posterior-belief-distribution}{%
	\section{Gibbs Sampler using rounding and coarsening}\label{a-gibbs-sampler-for-the-posterior-belief-distribution}}

We first outline the Gibbs sampler steps when only using one classifier. 
The sampler for ensemble GBQL is detailed in the Supplement. The Gibbs posterior 
\(\nu\) is given by

\begin{equation}\label{eq:gibbspost}
\nu \propto \left[\prod_{r \in \mathcal{U}} \prod_{j=1}^{C}\left(\sum_{i}p_{i}M_{ij} \right)^{a_{rj}}\prod_{r \in \mathcal{L}} \prod_{j=1}^{C}\left(\sum_{i}b_{ri}M_{ij} \right)^{a_{rj}}\right]\pi(\mathbf{p}, \mathbf{M}).
\end{equation}

When all $\ba_r$ are categorical, the polynomial expansion of $(\sum_i p_i M_{ij})^{\sum_r a_{rj} }$ enabled an efficient latent variable Gibbs sampler in \cite{datta2018local}. When $a_{rj}$ are fractions, this advantage is lost as fractional polynomials do not have such convenient expansions. Additionally, since we now allow uncertainty in the true labels, we also need to consider the extra fractional expansion terms $(\sum_{i}b_{ri}M_{ij})^{a_{rj}}$. 

We can use any of-the-shelf sampler to generate samples from (\ref{eq:gibbspost}). However, to enable fast and efficient sampling, we propose a data-augmented Gibbs sampler. We first switch from $\nu$ to $\nu_{round}$ where the 
probabilistic output $a_{rj}$ is replaced by $\ceil{T a_{rj}}$ where \(T\)
is an integer, and $\ceil{\cdot}$ denotes the ceiling of any real number. 
Consider now the following generative model:  

\begin{equation}\label{eq:round}
\begin{aligned}
&z_{rt} \overset{\text{ind}}{\sim} \begin{cases}
Multinomial(1, \mathbf{p}) & \text{if }r \in \mathcal{U}\\
Multinomial(1, \bb_{r})  &\text{if }r \in \mathcal{L}
\end{cases},\ t=1,\ldots, T_r=\sum_j \ceil{Ta_{rj}}\\
& d_{rt} | z_{rt} = i \overset{\text{ind}}{\sim} Multinomial(1, \mathbf{M}_{i*}), r \in \calL \cup \calU
\end{aligned}
\end{equation}

The rounded generalized posterior $\nu_{round}$ is then the proper Bayesian posterior using the likelihood $p(\bd^\calU, \bd^\calL \given \bb^\calL, \bM, \bp)$ for any realization of $\bd_{rt}$'s satisfying $\sum_{t} I(\bd_{rt}=j)=\ceil{Ta_{rj}}$. To obtain samples of $\bp$ and $\bM$ from $\nu_{round}$, instead of using this marginalized likelihood, we can equivalently introduce $\bz^\calL$, and $\bz^\calU$ as latent variables and use the joint likelihood $p(\bd^\calU, \bd^\calL, \bz^\calL, \bz^\calU \given \bb^\calL, \bM, \bp)$. This joint likelihood decomposes nicely and will be conducive to a Gibbs sampler with standard Dirichlet priors on $\bM$ and $\bp$. 

However, since we artificially inflate sample size by an order of $T$ by switching from $\ba_r$ to $\ceil{T\ba_{r}}$, instead of sampling from $v_{round}$ we sample from the coarsened likelihood 
\begin{equation}\label{eq:coarse}
\nu_{coarse} \propto p(\mathbf{d}^{\mathcal{U}}, \mathbf{d}^{\mathcal{L}}| \bb^{\mathcal{L}},\mathbf{M}, \mathbf{p})^{\frac{1}{T}}\pi(\mathbf{p}, \mathbf{M})
\end{equation}

As
\( \nu_{round} = p(\mathbf{d}^{\mathcal{U}}, \mathbf{d}^{\mathcal{L}}| \bb^{\mathcal{L}},\mathbf{M}, \mathbf{p})\)
is a proper likelihood arising from a mixture of categorical distributions, \(\nu_{coarse}\) can be expressed as a
fractional (coarsened) posterior \citep{bhattacharya2019bayesian,ibrahim2015power}. The Conditional Coarsening
Algorithm \citep{miller2019robust} introduces latent
variables to device Gibbs samplers for such coarsened posteriors for mixture likelihoods, just as one would do for proper posteriors. While the coarsened posterior has not been established to be exactly equal to the stationary distribution of the Conditional Coarsening Gibbs sampler, \cite{miller2019robust} has provided heuristic justification for the algorithm by drawing connection with geometric mean of posteriors based on subsampled data. Empirical evidence comparing conditional coarsening with direct importance sampling. As our $\nu_{round}$ is also a proper mixture likelihood, we outline below a conditional coarsening-based Gibbs sampler algorithm for sampling from $\nu_{coarse}$. Our own simulations, detailed in Section \ref{sec:simstan} reinforces the claim of \cite{miller2019robust} about the accuracy of conditional coarsening to sample from coarsened posteriors. The GBQL conditional coarsening Gibbs sampler generates posteriors nearly indistinguishable from direct samples from the coarsened posterior, while being substantially faster. 

We use generic Dirichlet priors $\bM \sim Dirichlet(\bV)$, i.e,  $\bM_{i*} \ind Dirichlet(\bV_{i*})$, and $\bp \sim Dirichlet(\bv)$ where $\bV$ and $\bv$ respectively are a matrix and a vector of positive hyper-parameters Specific choices with desirable shrinkage properties are discussed in Section \ref{sec:shrink}. This gives the following Gibbs updates:
\begin{align*}
\bz_{r} | \cdot &\sim \begin{cases} Multinomial\left(1, \frac{1}{\sum_{i}M_{ij}p_i}(M_{1j}p_{1},\ldots, M_{Cj}p_{C})\right),\ r \in \mathcal{U},\ d_{rt} = j \\
Multinomial\left(1, \frac{1}{\sum_{i}M_{ij}b_{ri}}(M_{1j}b_{r1},\ldots, M_{Cj}b_{rC})\right),\ r \in \mathcal{L},\ d_{rt} = j
\end{cases}\\
M_{i} | \cdot &\sim Dir\left( \tilde{\text{V}}_{i1}, \ldots, \tilde{\text{V}}_{iJ}\right),\
\tilde{\text{V}}_{ij} = \text{V}_{ij} + \frac{1}{T}\left(\sum_{r \in \mathcal{U}, \mathcal{L}}\sum_{t=1}^{T}I(d_{rt}=j)I(z_{rt}=i)\right)\\
p | \cdot &\sim Dir\left(\tilde{\text{v}}_{1}, \ldots, \tilde{\text{v}}_{C}\right),\ \tilde{\text{v}}_{i} = \text{v}_{i} + \frac{1}{T} \left(\sum_{r \in \mathcal{U}}\sum_{t=1}^{T}I(z_{rt}=i)\right).
\end{align*}

If there are hyper-parameters $\bgamma$ in $\bV$ and $\bv$ that need to be assigned a prior, they can be sampled using a Metropolis-Hastings step.
We note that
the full conditional distributions for the \(z_{rt} \) for
\(r \in \mathcal{U}, d_{rt} = j\) are identical, which enables them to be jointly
sampled. Furthermore, the \(z_{rt}\) for \(r \in \mathcal{L}\) do not
need to be updated if there is a \(i\) such that \(b_{ri} = 1\). 

\subsection{Choice of coarsening factor}\label{sec:factor}
Our Gibbs sampler relies on rounding and coarsening $\nu$ using an integer factor $T$. In this Section we discuss theory guiding the choice of this factor $T$. 
\cite{kessler2015bayesian} have used a similar data-augmented Gibbs sampling approach for compositional regression. However, their approach only incorporates the rounded likelihood for the pseudo-data $\mathbf{d}_{r}$ and does not coarsen. Rounding inflates the sample size by a factor of $T$ resulting in underestimation of the posterior variance and the coarsening is needed to adjust for this. We first show that the coarsening adjustment by a factor  $T$ growing with the sample-size ensures asymptotic equivalence of the rounded and coarsened posterior $\nu_{coarse}$ with the original posterior $\nu$. 

We denote the coarsened and rounded version of the loss function $f_{N}$ using a factor $T$ 
as $\tilde{f}_{N}$. As for $x \geq 0$, $0 \leq \frac{\ceil{Tx}}T - x \leq \frac 1T$, with $n=\xi N$, we have 
{\small \begin{align*}
	|\tilde{f}_{N}(\tM, \tp) - f_{N}(\tM, \tp)| = &-\frac{1}{N}\sum_{r=1}^{N}\sum_{j=1}^{C}\left( T \ceil[\Big]{\frac{a_{rj}}{T}} - a_{rj} \right)\log\left(\sum_{i=1}^{C}M_{ij}p_{i}\right)-\frac{1}{N}\sum_{r=1}^{\xi N}\sum_{j=1}^{C} \left( T \ceil[\Big]{\frac{a_{rj}}{T}} - a_{rj} \right) \log\left(\sum_{i=1}^{C}M_{ij}b_{ri}\right) \\
	& \leq - \frac 1T \sum_{j=1}^C \log\left(\sum_{i=1}^{C}M_{ij}p_{i}\right) - \frac \xi T \sum_{j=1}^C  \log\left(\min_{i} M_{ij}\right). 
	\end{align*}}
Hence, for any $\bM,\bp$ on the interior of the parameter space, we have $\tilde f_N$ goes to the same point-wise limit $f$ as $T=T_{N} \rightarrow \infty$. This immediately leads to the analogues of the asymptotic results of Theorems \ref{th:post}-\ref{th:cp} for $\nu_{coarse}$. 

\begin{corollary}\label{cor:coarse} Let $\nu_{coarse,N}$ denote the rounded and coarsened generalized posterior using a factor $T_N$ with $T_N \rightarrow \infty$. Then, under Assumption 1 and 2, we have 
	\begin{enumerate}[(i)]
		\item $P_{\nu_{coarse,N}(\btheta)}(B_\eps(\btheta^0)) \rightarrow 1$ for any $\ell_1$ ball $B_\eps(\btheta^0)$ of radius $\eps$ around $\btheta^0$ for any $\eps >0$ such that the prior $\Pi(\bp,\bM)$ be gives positive support to $B_\eps(\btheta^0)$.
		\item The Gibbs posterior mean $\hat\btheta_{coarse}$ using $\nu_{coarse}$ is asymptotic normal i.e., 
		$\sqrt N \bOmega_N(\btheta^0)^{-1/2}\bJ(\btheta^0)(\hat \btheta_{coarse} - \btheta^0) \to_d N(0,\bI)$ where $\bJ(\btheta^0)=\nabla^2_{\btheta^0} f$, and   $\bOmega_N(\btheta^0)$ is a positive definite matrix created by replacing the population distribution of $\ba_r$ with that of $\ceil{T_N\ba_r}/T_N$ in $\bOmega(\btheta^0)$ of Theorem \ref{th:norm}. 
		\item For any differentiable function $g(\btheta)$, one can compute interval estimates of $g(\btheta)$ with valid asymptotic coverage in the same way as Theorem \ref{th:cp} using the same $\widehat \bJ$ and and $\widehat \bOmega_N$ that by replaces the samples $\ba_r$ with $\ceil{T_N\ba_r}/T_N$ in the estimate $\widehat\bOmega$.
	\end{enumerate}
\end{corollary}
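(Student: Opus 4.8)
The plan is to show that the coarsened objective $\tilde f_N$ satisfies the same four regularity conditions --- identifiability, smoothness, Hessian convergence, and a gradient central limit theorem --- that were used to deduce Theorems \ref{th:post}--\ref{th:cp} for $f_N$ from the general Gibbs-posterior theory of \cite{chernozhukov2003mcmc} and \cite{miller2019asymptotic}. Since each conclusion of the corollary is exactly the coarsened analogue of one of those theorems, once these conditions are re-verified for $\tilde f_N$ in place of $f_N$ the three parts follow by the identical arguments. The only inputs that change are the limiting objective together with its Hessian, which I claim are unchanged, and the asymptotic variance of the score, which picks up the rounded sampling distribution and becomes $\bOmega_N$.

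For part (i) I would first exploit two complementary bounds between $\tilde f_N$ and $f_N$. Because $\ceil{T a_{rj}}/T \geq a_{rj}$ while each log-term $\log(\sum_i M_{ij}p_i)$ and $\log(\sum_i M_{ij}b_{ri})$ is nonpositive (being the log of a convex combination of entries of $\bM$, hence at most $1$), every summand of $\tilde f_N - f_N$ is nonnegative, giving the global one-sided bound $\tilde f_N \geq f_N$. This controls the region near $\partial\Theta$, where the two-sided bound displayed just before the corollary blows up: there $f_N$ already exceeds $f(\btheta^0)+\kappa$ by the identifiability gap of Theorem \ref{lem:id}, and $\tilde f_N \geq f_N$ inherits it. On any compact subset of the interior the same displayed bound gives $\tilde f_N - f_N \to 0$ uniformly as $T_N \to \infty$, so $\tilde f_N$ has the same limit $f$ and inherits the gap there too. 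I would also note that replacing $a_{rj}$ by the nonnegative $\ceil{T a_{rj}}/T$ preserves convexity of $\ell_\calL$ in $\tM$, so the convexity-based arguments of Lemmas \ref{lem:fnl}--\ref{lem:fnu} transfer verbatim. Combining, $\tilde f_N$ is minimized near $\btheta^0$ with a uniform gap, and consistency follows as in Theorem \ref{th:post}.

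For parts (ii) and (iii) I would Taylor-expand the coarsened score $\nabla\tilde f_N$ about $\btheta^0$. Since $\tilde f_N \to f$ with the same limit, the Hessian limit is still $\bJ(\btheta^0)=\nabla^2 f(\btheta^0)$, positive definite by Lemma \ref{lem:f}, whereas the score variance is now that of the rounded data $\ceil{T_N\ba_r}/T_N$, yielding the positive-definite $\bOmega_N(\btheta^0)$ in the statement; a Lindeberg CLT gives $\bOmega_N^{-1/2}\nabla\tilde f_N(\btheta^0)/\sqrt N \to_d N(0,\bI)$. Feeding this and the Hessian convergence into the expansion $\sqrt N(\hat\btheta_{coarse}-\btheta^0) \approx -\bJ^{-1}\sqrt N\,\nabla\tilde f_N(\btheta^0)$ yields part (ii), modulo the bias term analysed below. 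The delta-method intervals of Theorem \ref{th:cp} then carry over once the plug-in estimators $\widehat\bJ$ and $\widehat\bOmega_N$ (the latter formed from the rounded samples) are shown consistent by a uniform law of large numbers together with part (i), which gives part (iii).

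The main obstacle is the rounding bias in the score. Unlike $f_N$, the coarsened score does not satisfy the estimating-equation identity $E_{P_N^0}[\nabla\tilde f_N(\btheta^0)]=0$ exactly, because $E[\ceil{T_N\ba_r}/T_N]$ differs from $E[\ba_r]$ by $O(1/T_N)$; consequently the expected coarsened objective is minimized at a rounded pseudo-truth $\btheta^0_N$ displaced from $\btheta^0$ by $O(1/T_N)$. For the normality to be centered exactly at $\btheta^0$ as stated --- rather than at $\btheta^0_N$ --- this deterministic bias must be negligible at the $\sqrt N$ scale, i.e.\ $\sqrt N\,E_{P_N^0}[\nabla\tilde f_N(\btheta^0)]\to 0$, which forces the coarsening factor to grow faster than the parametric rate, $T_N/\sqrt N \to \infty$. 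Verifying that this single growth condition suffices to kill the bias while the one-sided and two-sided bounds above still deliver consistency and the rounded-score CLT is the technical heart of the argument; the remaining steps are the routine bookkeeping of the Chernozhukov--Hong framework already used for the uncoarsened posterior.
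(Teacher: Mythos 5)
Your proposal follows essentially the same route as the paper's own proof: the one-sided bound $\tilde f_N \geq f_N$ (rounding errors nonnegative, log-terms nonpositive) combined with Theorem \ref{lem:id}(i) gives the identifiability gap for $\tilde f_N$; the pointwise limit of $\tilde f_N$ is still $f$, so consistency follows as in Theorem \ref{th:post} and the Hessian limit $\bJ$ is unchanged; the score variance becomes $\bOmega_N$ computed from the rounded data; and part (iii) is plug-in consistency with $\ba_r$ replaced by $\ceil{T_N\ba_r}/T_N$. The one place you diverge is the treatment of the score bias, and there your version is the more careful one. The paper's proof disposes of it by noting $E\,\nabla_{\btheta^0}\tilde f_N = O(1/T_N)$, ``which suffices as $T_N \to \infty$''; but as you observe, in the expansion $\sqrt N(\hat\btheta_{coarse}-\btheta^0) \approx -\bJ^{-1}\sqrt N\,\nabla \tilde f_N(\btheta^0)$ the bias enters at scale $\sqrt N \cdot O(1/T_N)$, so centering the limit at $\btheta^0$ (rather than at a pseudo-true $\btheta^0_N$ displaced by $O(1/T_N)$) genuinely requires $\sqrt N / T_N \to 0$. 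Under the corollary's stated hypothesis alone (say $T_N = \log N$) the deterministic bias would dominate the $O_p(1)$ fluctuation and parts (ii)--(iii) would fail as written. The gap is harmless for the paper's recommended implementation, since Section \ref{sec:factor} and Theorem \ref{th:coarserate} use $T = O(N^\beta)$ with $\beta \geq 1$, which comfortably satisfies $T_N \gg \sqrt N$; but your extra growth condition is not a defect of your argument --- it is a hypothesis that the corollary, and the paper's proof of part (ii), should have stated explicitly.
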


The asymptotic results only need $T_N \to \infty$. For practical guidance on the choice of $N$, we now look at how the finite sample posterior concentration rate for $\nu_{coarse}$ compare with that of $\nu$  
in Theorem \ref{th:rate}. 

\begin{theorem}[Coarsened posterior concentration rate]\label{th:coarserate}
	Let $\nu_{N,T}$ denote the coarsened posterior of (\ref{eq:coarse}) with rounding and coarsening factor $T=O(N^\beta)$ for any $\beta \geq 1$. Then under Assumptions 1-2 and the conditions of Theorem \ref{th:rate} we have with universal constants $R$ and $R_0 \geq 1$, 
	{\small $$P_{\nu_{N,T}}\left(D_{N,\alpha}(\btheta,\btheta^0) \geqslant 
		(D+3t)NR\eps\right) \leq \exp(-tNR\eps) \mbox{
			with } P_N^0\mbox{-probability atleast }1 - \frac {1+R_0^\frac{\alpha N}{T}}{NR\eps\min\{{(D-1+t)^2,t}\}}.$$}
\end{theorem}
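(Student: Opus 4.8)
The plan is to re-run the proof of Theorem \ref{th:rate} verbatim and track the single place where the coarsened loss $\tilde f_N$ enters in place of $f_N$. Recall that the engine of that proof (in the style of \cite{bhattacharya2019bayesian}) writes the posterior mass as a ratio $\nu_N(D_{N,\alpha}\geq c)=N_n/D_n$, with numerator $N_n=\int_{D_{N,\alpha}(\btheta,\btheta^0)\geq c}(\tilde p_N(\btheta)/\tilde p_N(\btheta^0))^\alpha\,d\Pi(\btheta)$ and denominator $D_n=\int(\tilde p_N(\btheta)/\tilde p_N(\btheta^0))^\alpha\,d\Pi(\btheta)$, and controls them by (i) a Markov bound $E_{P_N^0}[N_n]\leq e^{-c}$ resting on the identity $E_{P_N^0}[(\tilde p_N/\tilde p_N(\btheta^0))^\alpha]=e^{-D_{N,\alpha}}$, and (ii) a Chebyshev lower bound on $\int_{B_\eps}(\tilde p_N/\tilde p_N(\btheta^0))^\alpha\,d\Pi$ using the prior-mass condition and the fact (Lemma \ref{lem:f}) that $\btheta^0$ minimizes the population loss $f$. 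For $\nu_{N,T}$ the density is $\propto\exp(-\alpha N\tilde f_N)\Pi$, and I would normalize both its numerator and denominator by the \emph{original} $\tilde p_N(\btheta^0)^\alpha$ (legitimate, since the posterior's normalizing constant cancels). The only new ingredient is then the factor $\tilde\rho(\btheta):=\exp(-\alpha N\delta_N(\btheta))$, where $\delta_N(\btheta):=\tilde f_N(\btheta)-f_N(\btheta)$, so that $\nu_{N,T}(D_{N,\alpha}\geq c)=\tilde N_n/\tilde D_n$ with $\tilde N_n,\tilde D_n$ obtained from $N_n,D_n$ by inserting $\tilde\rho$ into the integrands.

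The crucial input is the coarsening estimate derived just before Corollary \ref{cor:coarse}, which gives the \emph{two-sided, data-free} bound $0\leq\delta_N(\btheta)\leq g(\btheta)/T$ for every realization, with $g(\btheta)=-\sum_j\log(\sum_i M_{ij}p_i)-\xi\sum_j\log(\min_i M_{ij})\geq 0$; the lower bound $\delta_N\geq 0$ holds because each rounding increment $\ceil{Ta_{rj}}/T-a_{rj}\in[0,1/T)$ multiplies a nonpositive log. I would exploit this asymmetrically. For the numerator, $\delta_N\geq 0$ forces $\tilde\rho\leq 1$ everywhere, so $\tilde N_n\leq N_n$ and hence $E_{P_N^0}[\tilde N_n]\leq e^{-c}$ \emph{with no coarsening penalty}. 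For the denominator I would restrict to $B_\eps(\btheta^0)$, which lies in the interior of $\Theta$ where $g$ is bounded; setting $R_0:=\exp(\sup_{B_\eps}g)\geq 1$ gives $\tilde\rho(\btheta)\geq R_0^{-\alpha N/T}$ on $B_\eps$, whence $\tilde D_n\geq R_0^{-\alpha N/T}\int_{B_\eps}(\tilde p_N/\tilde p_N(\btheta^0))^\alpha\,d\Pi$. Since this $R_0^{-\alpha N/T}$ factor is deterministic, the original Chebyshev lower bound applies to the $B_\eps$-integral unchanged, and the denominator's bad-event probability is untouched.

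To combine while preserving the clean posterior bound $\exp(-tNR\eps)$, I would push the residual $R_0^{-\alpha N/T}$ into the numerator's Markov threshold: requiring $N_n\leq R_0^{-\alpha N/T}e^{-tNR\eps}\underline{D}_n$ (with $\underline{D}_n$ the deterministic Chebyshev lower bound on the $B_\eps$-integral) still yields $\tilde N_n/\tilde D_n\leq N_n/\tilde D_n\leq e^{-tNR\eps}$ on the intersection of the two good events. Markov applied to this smaller threshold inflates the numerator step's failure probability by exactly $R_0^{\alpha N/T}$, i.e. $R_0^{\alpha N/T}p_2$ in place of the original $p_2$, while the denominator (Chebyshev) step keeps its probability $p_1$. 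Re-bounding $p_1$ by the $(D-1+t)^2$-term and $p_2$ by the $t$-term exactly as in the proof of Theorem \ref{th:rate}, the union bound gives outer probability $1-(p_1+R_0^{\alpha N/T}p_2)=1-(1+R_0^{\alpha N/T})/(NR\eps\min\{(D-1+t)^2,t\})$, which is the claim. Finally, taking $T$ of order $N^\beta$ with $\beta\geq 1$ keeps $N/T$ bounded, so $R_0^{\alpha N/T}=O(1)$ and the outer probability still tends to one.

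The main obstacle is precisely that the coarsening error is \emph{not} uniformly $O(1/T)$ over all of $\Theta$: $g(\btheta)$ diverges as any $M_{ij}\to 0$ or $\sum_i M_{ij}p_i\to 0$, so a crude two-sided control of $\tilde\rho$ on the whole parameter space would produce a spurious $R_0^{2\alpha N/T}$ (one factor from the numerator, one from the denominator) that would multiply the \emph{posterior} bound rather than degrade only the outer constant. The delicate point is therefore the bookkeeping that records a single $R_0^{\alpha N/T}$ in the outer probability: use the one-sided inequality $\delta_N\geq 0$ (needing no upper bound on $g$, hence no boundary issue) to handle the numerator, and confine the two-sided control of $\tilde\rho$ to the interior set $B_\eps$ for the denominator. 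Checking that this asymmetric accounting leaves the concentration rate $\exp(-tNR\eps)$ intact and only replaces the constant $2$ by $1+R_0^{\alpha N/T}$ is the crux; every other estimate is inherited unchanged from Theorem \ref{th:rate}.
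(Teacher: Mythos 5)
Your proof is correct and arrives at exactly the stated bound, but your bookkeeping is the mirror image of the paper's. The paper normalizes both integrals in the ratio by the \emph{coarsened} likelihood at the truth, $\exp(-\alpha N f_{N,T}(\btheta^0))$; the coarsening penalty then appears in the \emph{numerator}: in the Fubini computation it uses $\exp\left(\alpha N\left(f_N(\btheta)-f_{N,T}(\btheta)\right)\right)\leq 1$ for every $\btheta$ together with $\exp\left(\alpha N\left(f_N(\btheta^0)-f_{N,T}(\btheta^0)\right)\right)\geq K^{C(1+\xi)\alpha N/T}$, $K=\min_{ij}M^0_{ij}$, which needs interiority only of the single point $\btheta^0$, giving $R_0=K^{-C(1+\xi)}$ and $E^0_N X\leq R_0^{\alpha N/T}\exp(-(D+3t)N\eps_1^2)$. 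Its denominator is then handled by re-running the neighborhood moment computations of Theorem \ref{th:rate} with $a_{rj}$ replaced by $\ceil{Ta_{rj}}/T$ (uniformly bounded by $2$), so no penalty arises there. You instead normalize by the \emph{uncoarsened} $\exp(-\alpha N f_N(\btheta^0))$, get the numerator for free from the one-sided inequality $\tilde f_N\geq f_N$, and absorb the penalty in the \emph{denominator} via $\delta_N\leq g/T$ on $B_\eps(\btheta^0)$ — which requires interiority of the whole ball, but that is exactly a hypothesis of Theorem \ref{th:rate} — and this lets you reuse the uncoarsened Chebyshev bound verbatim rather than re-verifying the second-moment conditions for coarsened data. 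The trade-off: the paper's constant $R_0$ depends only on the truth (through $K$), whereas yours, $R_0=\exp(\sup_{B_\eps}g)$, depends on the radius $\eps$; this is harmless since $\sup_{B_\eps}g$ decreases as $\eps$ shrinks and can be fixed at a reference radius, but it is worth flagging given the theorem advertises $R_0$ as universal. Your final step of shifting $R_0^{-\alpha N/T}$ into the Markov threshold — so that the penalty multiplies only the exponential term of the outer probability while the Chebyshev term is untouched — reproduces precisely the paper's union-bound arithmetic and the claimed outer probability $1-\left(1+R_0^{\alpha N/T}\right)/\left(NR\eps\min\{(D-1+t)^2,t\}\right)$.
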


It is clear that when $T=O(N^\beta)$ for $\beta \geq 1$, the coarsened posterior concentrates at the same rate (upto a constant) around the true value with the same probability as the uncoarsened posterior, while when $T=\infty$, then the probability in Theorem \ref{th:coarserate} is same as that is Theorem \ref{th:rate} which is not surprising as the coarsened posterior with $T=\infty$ is the original uncoarsened posterior. Larger $T$ would involve an creating and updating a larger dimensional pseudo-data in the Gibbs sampler. Hence we recommend using $T=O(N)$ the smallest scaling which ensures same concentration rate as the original posterior. 

\hypertarget{simulations}{%
	\section{Simulations}\label{simulations}}

We conduct multiple simulation studies to 
assess a) 	accuracy of GBQL in estimating \(\mathbf{p}\) in the
presence of moderate amounts of labeled data, b) 
compare robustness of estimation-equation-based GBQL to  Dirichlet model-based approach using different data generating mechanisms, c) compare	computation efficiency compared to Dirichlet models, and d) assess  
estimation accuracy 
when there is uncertainty in true labels in \(\mathcal{L}\).

To mimic the real data application we present in Section \ref{phmrc-dataset-analysis}, we used \(N=1000\), \(n=300\), \(C=5\), \(\mathbf{p}_{\mathcal{L}} = E_\calL(y_r) = (\frac{1}{C}, \ldots, \frac{1}{C})\)', and the following four
different values of \(\mathbf{p}\) representing each of the four countries in the PHMRC
dataset (Section \ref{phmrc-dataset-analysis}). The values of $\bp$ and $\bM$ are presented below. 
\begin{equation}\label{eq:simp}
\begin{aligned}
\mathbf{p1} &= (.20, .19, .27, .27, .07)\\
\mathbf{p2} &= (.11, .11, .40, .29, .09)\\
\mathbf{p3} &= (.09, .18, .52, .19, .02)\\
\mathbf{p4} &= (.13, .30, .35, .19, .03)
\end{aligned},\, \bM=
\begin{bmatrix}
0.65&0.35&0&0&0 \\
0&0.35&0.65&0&0 \\
0.1&0.1&0.6&0.1&0.1 \\
0&0&0&0.8&0.2 \\
0&0.4&0&0&0.6 \\
\end{bmatrix}.
\end{equation}
We generated true labels 
$y_{r} | \mathbf{p} \sim Multinomial(1, \mathbf{p}),\ r \in \mathcal{U}$ and $y_{r} | \mathbf{p}_\calL \sim Multinomial(1, \mathbf{p}_{\mathcal{L}}),\ r \in \mathcal{L}
$.
For the first analyses, we allow for full knowledge of these labels for $r \in \mathcal{L}$, which means that $\mathbf{b}_{r} | y_{r} = i$ equals $\be_{i}$ for $r \in \mathcal{L}$.
We then simulated outputs \(\mathbf{a}_{r} | y_{r}\) directly
from a model, so that we know the true data generating mechanism of the
dataset shift. We use two data generating mechanisms for \(\mathbf{a}_{r} | y_{r}\). The first mechanism corresponds to a zero-inflated Dirichlet mixture model:
\begin{align*}
a_{rj}^{*} | y_{r} = i, \mathbf{M}_{i*} \sim \begin{cases}
0,\ \text{if } M_{ij} = 0\\
Gamma(5M_{ij}, 1),\ \text{else}
\end{cases}\ j=1,\ldots, C, \,\,
a_{rj} = \frac{a_{rj}^{*}}{\sum_{k=1}^{C}a_{rk}^{*}} \;.
\end{align*}

The second data generating mechanism introduced subject-level over-dispersion, replacing the Gamma shape parameter $5M_{ij}$ above with $\tau_r M_{ij}$, where $\tau_r$ is subject-specific over-dispersion generated from the mixture distribution 
$\tau_{r} \sim .5 \cdot \mbox{Unif}(.1, 1) + .5 \cdot \mbox{Unif}(10,20)$. 
Instances with $\tau_{r} \leq 1$ will have responses $a_{rj}$ close to 0 and 1, while instances with large $\tau_{r}$ will have $a_{rj}$ clustered closer to the non-zero entries of $\mathbf{M}$.

We GBQL estimates of \(\mathbf{p}\) with estimates from following Bayesian
Dirichlet mixture model which assumes the first data generating mechanism as truth
$$
y_{r} | \mathbf{p} \sim Multinomial(1, \mathbf{p}),\,\,
\mathbf{a}_{r} | y_{r} = i \sim Dirichlet(\tau_{i} \cdot \mathbf{M}_{i*}),\,\,
\tau_{i} \sim Normal(0, 25).
$$
For both the Dirichlet model and GBQL, we used Dirichlet priors for $\mathbf{M}$ shrinking towards $\bI$, and uninformative Dirichlet prior for $\mathbf{p}$. 
Since the Dirichlet distribution does not support zeros, for running the Dirichlet model, 0 values were replaced with $\epsilon = .001$ and each $\mathbf{a}_{r}$ was re-normalized. Posterior sampling for this model was performed using RStan Version 2.19.2 \citep{rstan}. 
Note that this model becomes misspecified for the second true data generating mechanism. For both models, we
ran three chains each with a total of 6,000 draws and a burn-in of 1,000
draws. We used the posterior mean of \(\mathbf{p}\) as
\(\hat{\mathbf{p}}\).

To compare estimates of \(\mathbf{p}\), we use a chance corrected
version of the normalized absolute accuracy (NAA) \citep{gao2016classification} for estimating a compositional vector. NAA is defined as
\[
1- \frac{\sum_{i=1}^{C}|p_{i} - \hat{p}_{i}|}{2(1-min_{i}\{p_{i}\})}.
\]
To represent random guessing of \(\mathbf{p}\) with a score of 0, and
perfect estimation of \(\mathbf{p}\) with a score of 1, we follow
\cite{flaxman2015measuring} and use the Chance Corrected NAA (CCNAA) = $(NAA - .632)/(1 - .632)$.

We repeat our simulations 500 times for each choice of $\bp$ and show the average CCNAA across
this simulations in Figure \ref{fig:unnamed-chunk-2}. For case 1 (left panel) when the
likelihood is correctly specified for the Dirichlet model, both methods produce accurate estimates of \(\mathbf{p}\) and have approximately the same CCNAA.  When we introduce over-dispersion to the distribution of
the $\mathbf{a}_{r} | y_{r}=i$ (right panel), we see that the performance the GBQL model is hardly affected, and substantially outperforms the now misspecified Dirichlet model in all
cases. 
\begin{figure}[]
	\centering \includegraphics[width=.75\linewidth]{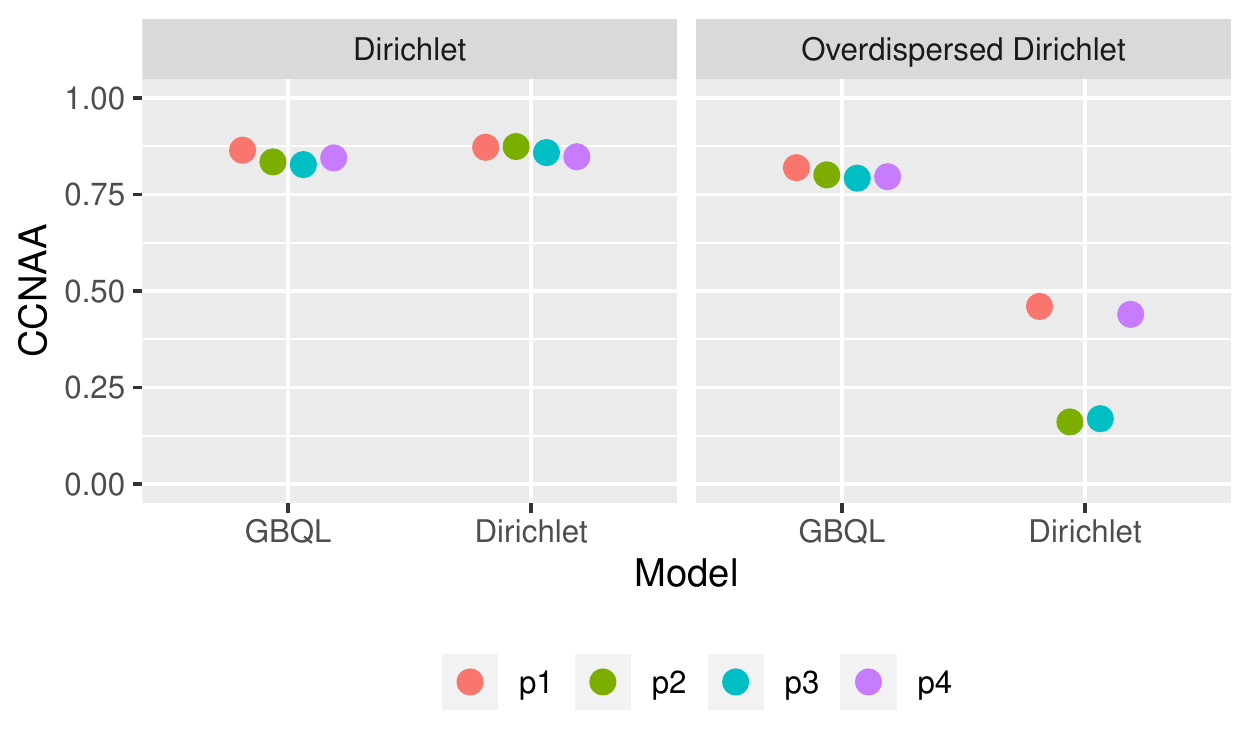} 
	\caption{Quantification performance in simulated data. Columns shows results for the two different data generating mechanisms, while each color represents each of the four scenarios (four true values of p). The GBQL model produces high values of CCNAA for each of the scenarios, while assuming a Dirichlet mixture model likelihood only produces acceptable estimates of $\mathbf{p}$ when the likelihood is correctly specified. 
	}\label{fig:unnamed-chunk-2}
\end{figure}

When we investigated the Stan output for the Dirichlet models, many of the chains
failed to converge when the likelihood was misspecified as indicated in the Gelman-Rubin diagnostic ($\hat R$) values in Table \ref{table:r_hat_runtime}. Furthermore, on average the Stan Dirichlet model took nearly 200 times longer to run than the GBQL method (Table \ref{table:r_hat_runtime}). For GBQL, all $\hat R$ values indicated convergence. Thus, GBQL
accurately estimates \(\mathbf{p}\), removes the need to correctly
specify the likelihood, is fast, and does not require
fine-tuning for the posterior samples to converge.

\begin{table}[ht]
	\resizebox{\textwidth}{!}{
		\centering
		\begin{tabular}{p{3cm}cccc}
			\hline
			Choice of $\mathbf{p}$ & Average $\hat{R}$ GBQL & Average $\hat{R}$ Dirichlet & Average Runtime (minutes) GBQL &  Average Runtime (minutes) Dirichlet \\ 
			\hline
			$\mathbf{p1}$ & 1.03 & 3.32 & 0.15 & 29.79 \\ 
			$\mathbf{p2}$ & 1.02 & 3.43 & 0.16 & 29.70 \\ 
			$\mathbf{p3}$ & 1.03 & 3.12 & 0.16 & 28.84 \\ 
			$\mathbf{p4}$& 1.03 & 3.46 & 0.15 & 29.88 \\ 
			\hline
	\end{tabular}}
	\caption{Average $\hat{R}$, as a measure of posterior sampling convergence, and runtime in minutes for each value of $\mathbf{p}$ was computed for when there is over-dispersion in the data generating mechanism. }\label{table:r_hat_runtime}
\end{table}
We now examine the behavior of the GBQL model in the case of
uncertain labels. To induce this uncertainty, we generate the compositional \(\bb_{r}\) from the following over-dispersed Dirichlet distribution $\mathbf{b}_{r} \sim Dirichlet(\tau_{r} \mathbf{p}),\ r \in \mathcal{U}$ and $\mathbf{b}_{r} \sim Dirichlet(\tau_{r} \mathbf{p_\calL}),\ r \in \mathcal{L}$ where $\tau_{r} \sim .5 \cdot \mbox{Unif}(.1, 1
) + .5 \cdot \mbox{Unif}(10,20)$,  
and generate $y_{r} | \mathbf{b}_{r} \sim Multinomial(1, \mathbf{b}_{r})$. The data generating process for the $\mathbf{a}_{r}$ is the same as in the simulations with known labels. The compositional $\bb_{r}$ are used as the uncertain labels for $r \in \mathcal{L}$. Figure \ref{fig:known_vs_unknown} plots the average CCNAA from GBQL with known labels $y$ against CCNAA of GBQL with unknown labels $\bb$ for each value of $\mathbf{p}$ and data generating mechanism. It can be seen that introducing uncertainty in the labels results in slightly lower (upto 10\%) CCNAA values indicating the small price we pay for the added uncertainty.

\begin{figure}
	\centering \includegraphics[width=.75\linewidth]{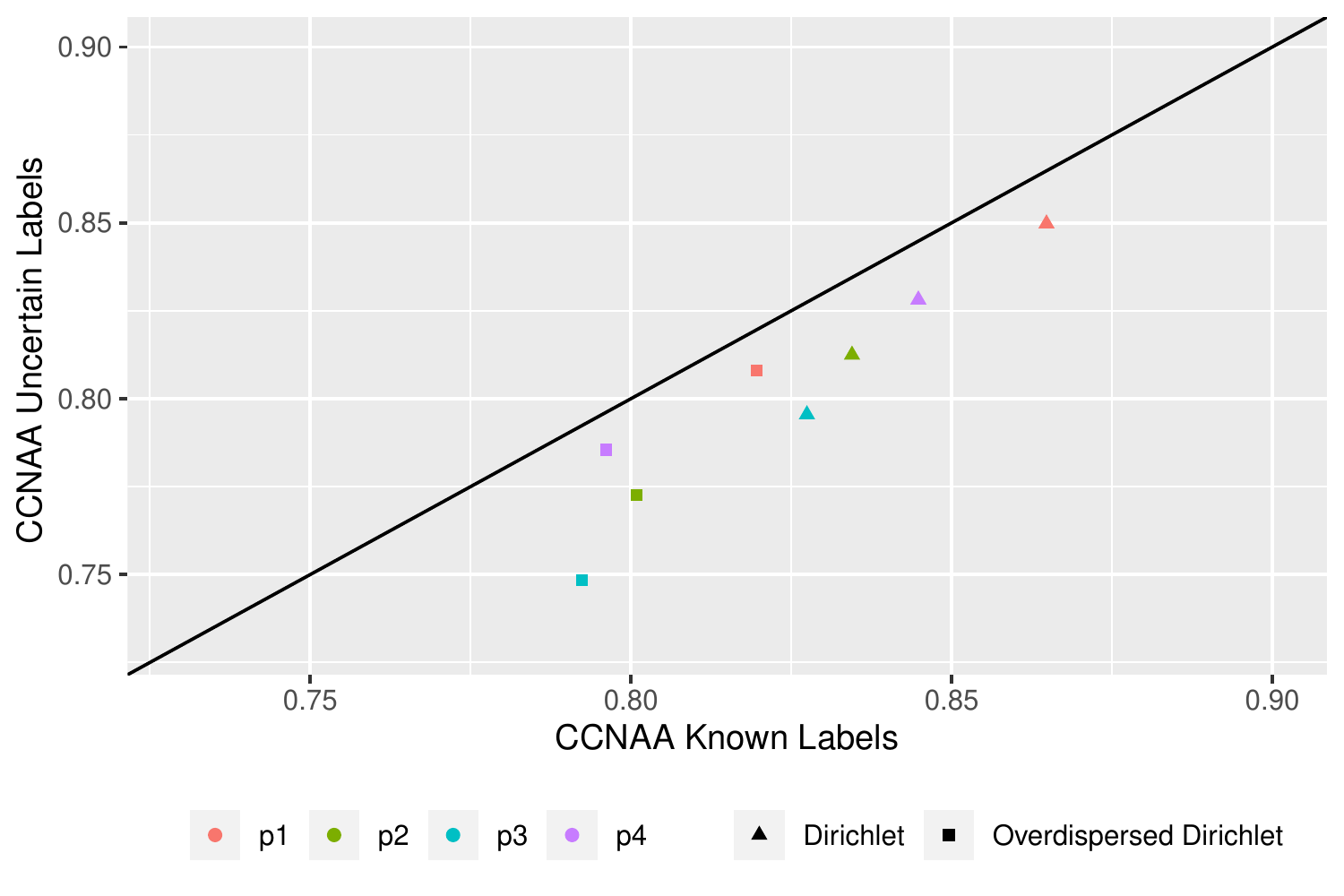}
	\caption{CCNAA for known versus uncertain labels using GBQL. Each color represents a different value for $\mathbf{p}$, while the shapes represent the two different data generating mechanisms.}\label{fig:known_vs_unknown}
\end{figure}

\subsection{Additional Simulation studies} We conducted additional simulation studies on comparison of the estimates, MCMC convergence, and computation time using the Gibbs sampler and direct implementation of GBQL in Stan, comparison of performance of our Gibbs sampler for different choices of the coarsening factor,  evaluation of coverage probabilities of our interval estimates, comparison of the sparse model of Section \ref{sec:para} with the full model. These are provided in Section \ref{sec:addsim} of the Supplement. 

\section{PHMRC Dataset Analysis}\label{phmrc-dataset-analysis}
High-quality cause-of-death information is lacking for 65\% of the world’s population due to scarcity of diagnostic autopsies in low-and middle-income countries (LMICs) \citep{nichols20182016}. So, estimating subnational and national cause-specific-mortality fractions (CSMF) and burden of disease numbers rely to a large extent on simple aggregation (classify-and-count) of verbal autopsy predicted cause-of-deaths. We apply GBQL to improve estimation of CSMF using predicted cause-of-death data from verbal autopsy classifiers. An example of dataset shift is in the Population Health Metrics Research Consortium (PHMRC) gold standard  dataset \citep{murray2011phmrc}, which contains 168 reported
symptoms and gold-standard underlying causes of death for adults in 4
countries. There are 21 total causes of death, that are then aggregated to 5 broader cause of death categories. Figure \ref{fig:phmrc_resp_rates} shows the percentage of subjects within each country 
and cause of death that report each symptom. The $x$-axis is an enumeration of the entire list of symptoms $x$ and the $y$-axis plots $p(x \given y)$ for each symptom $x$. With no dataset shift, we would expect the conditional response rates for each question within each cause of death to be similar for every country. However, as the country-specific lines are quite distinct in each sub-figure, it is clear that this assumption is violated. 
This leads to poor performance of verbal autopsy classifiers trained on symptoms and cause of
death labels from 3 countries to predict the cause of death distribution
for the remaining country \citep{mccormick2016insilico}. 

\begin{figure}[!ht]
	\centering \includegraphics[width=.85\linewidth]{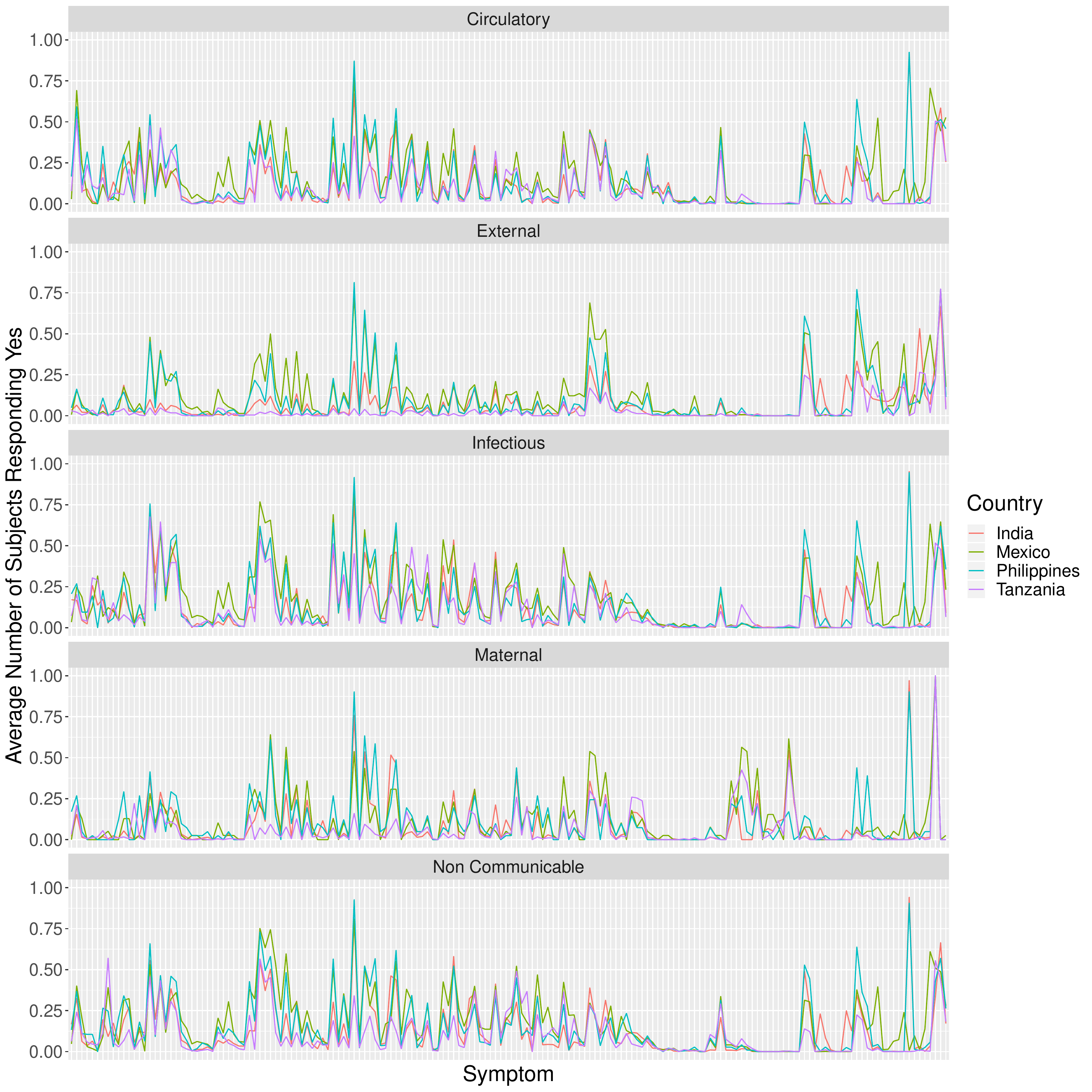} 
	\caption{Dataset shift in PHMRC verbal autopsy data. Percent of subjects with each of 168 reported symptoms within each of the 5 gold-standard underlying cause of death categories, by country.}\label{fig:phmrc_resp_rates}
\end{figure}

We now apply GBQL using limited local data from each country to improve CSMF estimation from verbal autopsy classifier trained on data from the other 3 countries. 
The number of observations within India, Mexico, Philippines, and
Tanzania are 2973, 1586, 1259, and 2023, respectively.  To address country-specific dataset shift, for each country,
we used the three remaining countries as
training data for four methods
commonly used for cause of death predictions: InterVA \citep{interva4},
InSilicoVA \citep{mccormick2016insilico}, NBC \citep{nbc}, and Tariff \citep{tariff2_0}. The first three methods are probabilistic, while Tariff produces a score for each cause that needed to be normalized to be in $[0,1]$. Model training was done using the openVA package version 1.0.8 \citep{openva}. We considered both compositional predictions and classifications (single-class categorical predictions based on the plurality rule). 
For GBQL in the test country, we then sampled labeled data $\calL$ of varying sizes (n=25, 100, 200, 400) to
investigate the effect of increasing the number of known labels. Sampling was performed such that $\bp_{\calL} = (\frac{1}{5}, \ldots, \frac{1}{5})$, as in Section \ref{simulations}. For comparisons, we obtained estimates using the Probabilistic Average \citep[PA,][]{bella2010quantification} method for compositional predictions, which should align with the GBQL estimate for $n=0$ (Section \ref{sec:shrink}) for our choice of priors, as well as estimates using the Adjusted PA method. We
repeated this 500 times for each size of \(n\). Results for the average
CCNAA when using compositional predictions are shown in Figure \ref{fig:phmrccomp}.

When no labeled instances are available, we see that the APA method performs worse than the PA method across almost all countries and algorithms, demonstrating why,  in presence of dataset shift, it is not appropriate to estimate $\bM$ using the training data. We see that obtaining $n=25$ labeled instances (an average of only 5 labeled deaths
per class) does not effectuate any improvement in the performance over not having any labeled test data ($n=0$). However, increasing this to
100 labels (an average of 20 labeled deaths per class) leads to large increase in CCNAA indicating substantial improvement in estimation of \(\mathbf{p}\) across all
countries and algorithms. As there are 168 covariates used for building
these classifiers, using just 100 observations to build a reliable
classifier would be difficult, if not impossible. Quantification accuracy
continues to increase with a larger number of labeled observations
across all countries and algorithms, although the extent of this
improvement is quite variable. Figure \ref{fig:catvscomp} compares the CCNAA for GBQL using compositional predictions versus GBQL using single-class categorical predictions. We see that using the original compositional scores offers improvement over categorization for all algorithms except Tariff for Philippines and Tanzania. 

\begin{figure}[]
	\centering{
		\begin{subfigure}{16cm}
			\centering\includegraphics[width=15cm]{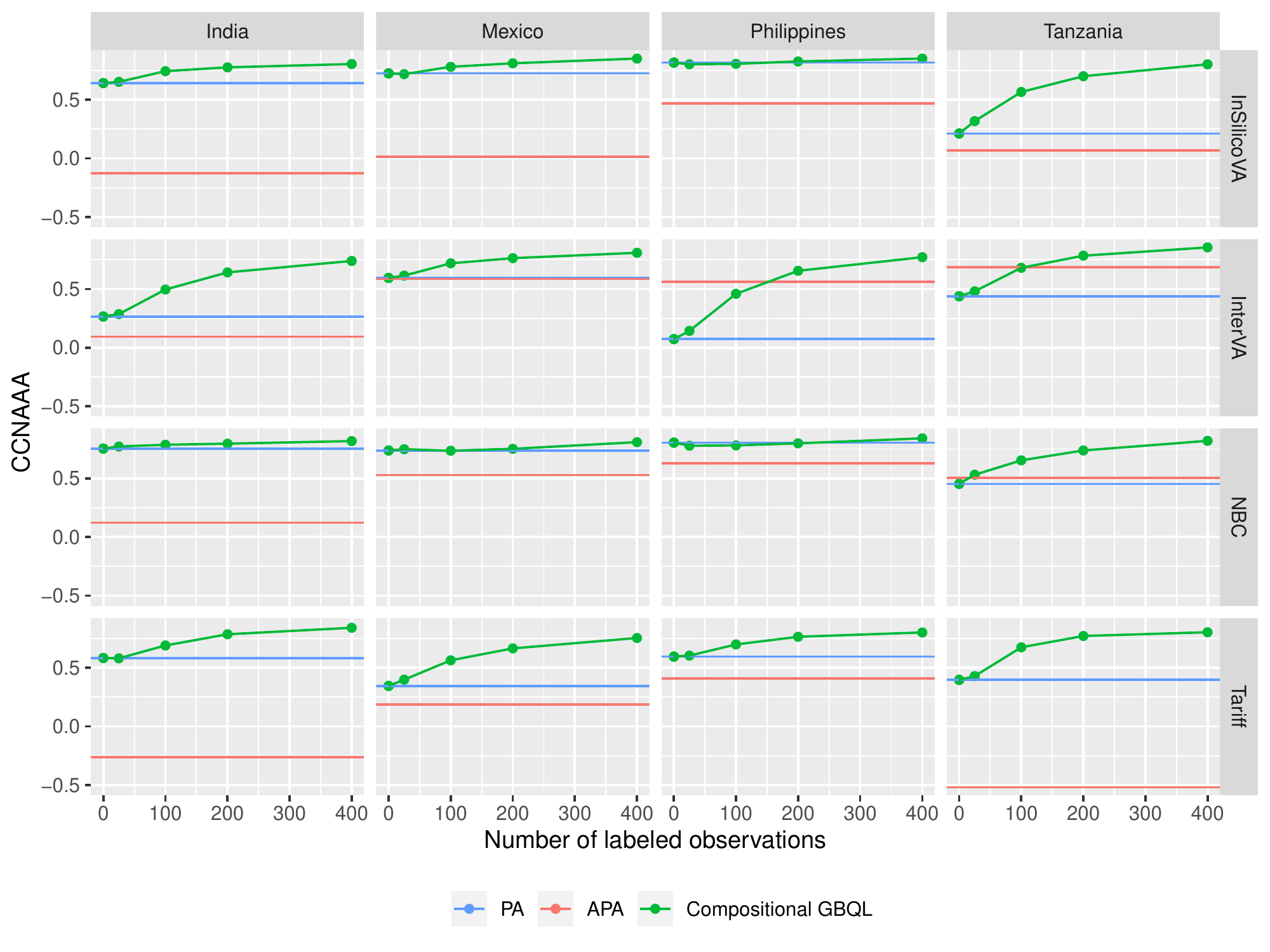}
			\caption{Average CCNAA for increasing numbers of labeled observations across all countries in the PHMRC dataset for GBQL using four common VA classifiers. Average CCNAA for GBQL using compositional predictions is shown in green. For comparison, performance of the PA (blue) and APA (red) methods are also shown.}\label{fig:phmrccomp}
		\end{subfigure}
		
		\begin{subfigure}{16cm}
			\centering\includegraphics[width=15cm]{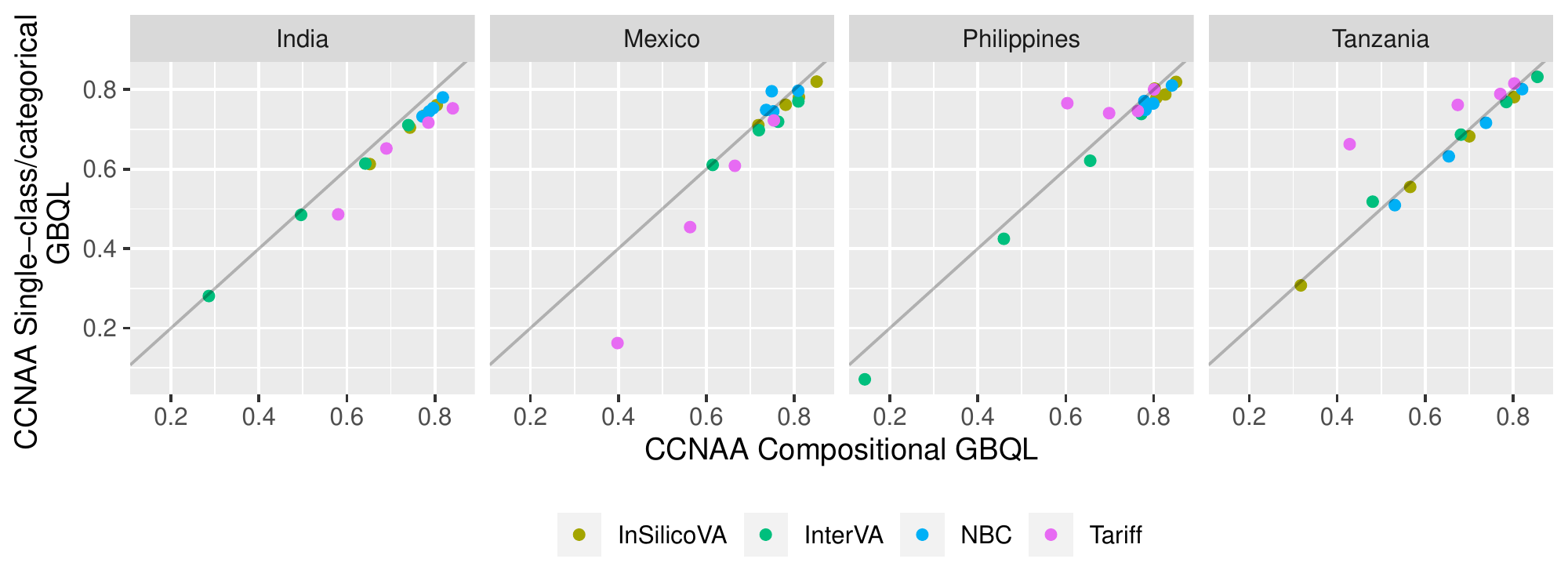}
			\caption{Comparison of CCNAA between GBQL using compositional predictions versus single-class/categorical predictions. Each point represents a different value of $n$, with the black line representing the identity line.}\label{fig:catvscomp}
		\end{subfigure}
	}
	\caption{PHMRC data analysis using different quantification methods. }
\end{figure}

Figure \ref{fig:phmrccomp} also shows that classifier performance varies widely across settings. For example among the PA estimates, InSilicoVA is best for Philippines, whereas NBC is most accurate for Tanzania. We now look at the performance of
our ensemble method which uses predictions from all four algorithms.
Figure \ref{fig:ensemble} shows the CCNAA for the ensemble GBQL and GBQL using individual classifiers, for different numbers of labeled observations and each country.
With only 25 labeled observations, the ensemble CCNAA is
approximately an average of the CCNAA for each of the other algorithms,
which is what we would expect, as for $n=0$ it is exactly the average as discussed in Section \ref{sec:shrink}. With more labeled observations, the
ensemble begins to either outperform all of the methods, or has 
CCNAA very close to that of the top performing method. Importantly, the
ensemble method significantly outperforms the worst method for all
combinations of country, output format and numbers of labeled observations, showing that including multiple
algorithms and using the ensemble quantification protects against inadvertently selecting the worst algorithm. 

\begin{figure}[]
	{\centering \includegraphics[width=1\linewidth]{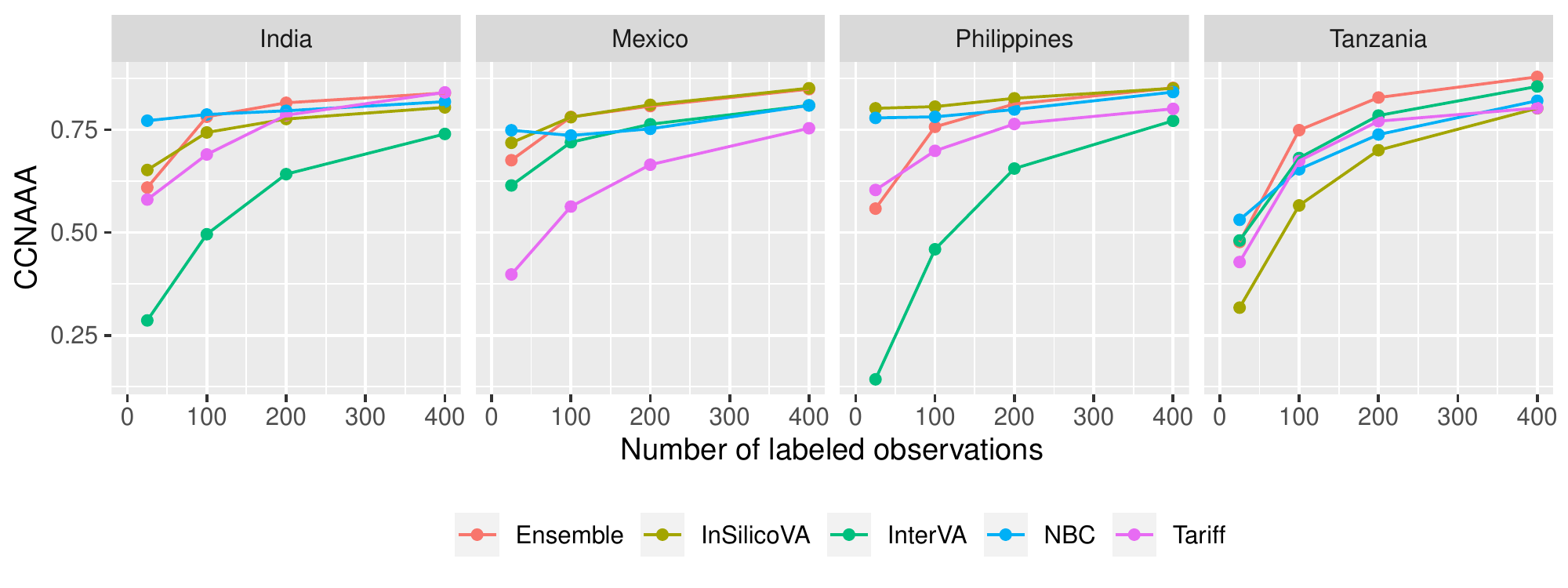} 
	}
	\caption{CCNAA comparing the ensemble GBQL (red) with GBQL using the 4 individual classifiers across countries. 
	}\label{fig:ensemble}
\end{figure}
Finally, to illustrate the efficacy of GBQL even when true labels are observed with uncertainty, we create a toy dataset by randomly
pairing individuals within a country in the PHMRC data. To introduce label uncertainty into the analysis, for a pair of individuals, \(r1\)
and \(r2\), we let

\[
b_{r1i} = b_{r2i} = \frac{1}{2}(I(y_{r1i}=1) + I(y_{r2i}=1) ), 
\]

By using two  individuals each with a single (but possibly different) true label, we create two individuals each with uncertain observed labels in such a way that the total number of individuals with a given cause remains same in this new dataset as that in the actual PHMRC dataset. 
The data generation satisfies the assumption that \(p(y_{r}=i | \bb_{r}) = b_{ri}\). We
then used these beliefs instead of the true labels as input for our
method. Figure \ref{fig:phmrc_known_vs_unknown} compares the CCNAA for the individual methods across each value of $n$ for compositional predictions when
using the known labels versus representing uncertainty in the labels
through beliefs. The performance of GBQL is similar for both types of inputs. The CCNAA were slightly worse when labels are observed with uncertainty, as in Section \ref{simulations}. 

\begin{figure}[]
	\centering
	\includegraphics[width=1\linewidth]{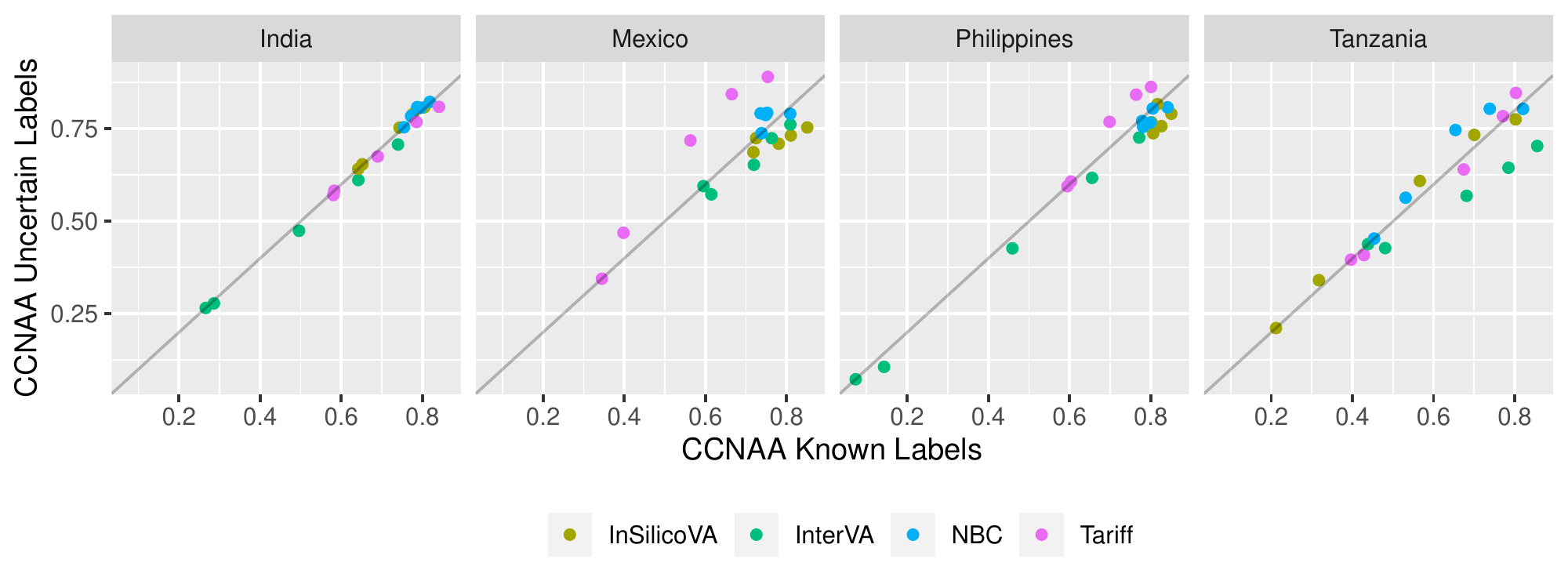}
	\caption{Comparison of CCNAA for when using known labels in PHMRC data versus labels with uncertainty in the synthetic data created from the PHMRC data. Each point represents a different value of $n$, with the black line representing the identity line.}
	\label{fig:phmrc_known_vs_unknown}
\end{figure}

\hypertarget{discussion}{%
	\section{Discussion}\label{discussion}}

Quantification is an important and challenging problem that has only recently gained the attention it deserves. There are important limitations of the commonly used methods; CC \citep{forman2005}, ACC \citep{forman2005}, PA \citep{bella2010quantification}, and APA \citep{bella2010quantification} as they do not account for dataset shift. In absence of local labeled data, GBQL with specific choices of priors yields model-based analogs for each of these methods and provides a probabilistic framework around these approaches to conduct inference beyond point-estimation. 
In presence of local test data GBQL leverages it and substantially improves quantification over these previous approaches. In such settings, GBQL extends BTL \citep{datta2018local} which does not allow uncertainty in either the predicted or the true labels. In summary, GBQL generalizes all these methods, allowing for unified treatment of both categorical and compositional classifier output, incorporation of training data (through priors) and labeled test data, and uncertain knowledge of labeled data classes. 

Appealing to the generalized Bayes framework, our Bayesian estimating equations and the KLD loss functions rely only on a simple first-moment assumption for compositional data that circumvents the need for full model specification even in a Bayesian setting.
The loss function approach easily extends to harmonize output from multiple classifiers, leading to a unified ensemble method which is a pragmatic solution guarding against inadvertent inclusion of a poorly performing classifier in the pool of algorithms. The Bayesian paradigm enables use of shrinkage priors to inform the estimation of $\mathbf{M}$ and $\mathbf{p}$ when limited labeled data from the test set is available. The GBQL 
Gibbs posterior can be approximated using our customized rounded and coarsened Gibbs sampler 
leading to fast posterior sampling compared to off-the-shelf samplers. 

There was no theory justifying the BTL method and more generally, to our knowledge, there is no theory for quantification learning methods under dataset shift. We offer a comprehensive theory for GBQL including posterior consistency, asymptotic normality, valid coverage of interval estimates, and finite-sample concentration rate. All results only assume a first moment assumption and are  robust to not having knowledge of full data distribution. 
Finally, extensive simulations and PHMRC data analysis show that the GBQL model is robust to model misspecification, and uncertainty in true labels, and significantly improves quantification in the presence of dataset shift. 

The estimating equation (\ref{eq:compreg}) is of independent importance beyond quantification learning. It offers a novel and direct method to generalized Bayes regression of compositional response $\ba_r$ on compositional covariate $\bb_r$, allowing $0$'s and $1$'s in both variables and without requiring full distributional specification (like the Dirichlet distribution) or data transformations (like the hard-to-interpret log-ratio transformations) customary in analysis of compositional data. We do not pursue this further here as it is beyond the scope of the paper but would like to point out that efficient posterior sampling algorithm for such a Bayesian composition-on-composition regression follows directly from a part of the coarsened sampler. Similarly, all the general theoretical results for quantification learning presented here, i.e., asymptotic consistency, normality, validity of coverage, and finite sample concentration rates can immediately be applied to this setting to ensure analogous guarantees for the Bayesian composition-on-composition regression.

A future direction is to extend the methodology for a continuous version of the problem, i.e., instead of predicting probabilities for $C$-categories, for each datapoint the classifiers now predict 
$a=(a^{(1)},\ldots,a^{(S)})$ -- a sample of $S$ predicted (real-valued) labels. A simple solution for the continuous case would be to discretize the domain into $C$ bins $B_1, \ldots, B_C$ and compute the empirical proportion of predicted labels in each bin, thereby transforming the sample value data to compositional data and subsequently using GBQL. A more general solution that prevents unnecessary discretization would be to write the continuous version of (\ref{eq:ql}) as 
$ f(a) = \sum_y f(a \given y) p(y)$ where $f(a)$ and $f(a \given y)$ respectively denotes the marginal and conditional densities. This leads to the moment equations $E(a^j) = \sum_y E(a^j \given y) p(y)$ for all $j \geq 1$ for which $E(a^j)$ exists. We can calculate $q_j=E(a^j)$ from the unlabeled set $\calU$ as $\frac 1N \sum_{r=1}^N \frac 1S \sum_{s=1}^S (a_r^{(s)})^j$. Similarly, we can calculate $M_{ij}=E(a^j \given y=i)$ as $\frac 1{|\{r \in \calL, y_r=i\}|} \sum_{r \in \calL, y_r=i} \frac 1S \sum_{s=1}^S (a_r^{(s)})^j$. Since the moment generating function uniquely defines the distribution, letting $\bq=(q_1,q_2,\ldots)$, $\bM=(M_{ij})$, we can solve for the quantity of interest $p(y)=\bp$ as the minimizer of some norm $\|\bq - \bM'\bp\|$ subject to $\bp$ lying on the simplex. We can also extend this to continuous true labels $y$ by using the equation $ f(a) = \int_y f(a \given y) f(y)$. Expressing $f(y)$ generally as a mixture density, $f(y) = \sum_{h=1}^H w_h f_h(y ; \mu_h)$ for some known densities $f_h$, we can solve for the unknown parameters $\mu_h$ and the unknown weights $w_h$ using the moment equation $q_k = \sum_h w_h \int_t  m_{tk} f_h(t; \mu_h) dt$.

\bibliography{bibliography}

\pagebreak

\section*{\begin{center}Supplementary Material for\\ ``Generalized Bayes Quantification Learning under dataset shift"\end{center}}

\renewcommand\theequation{S\arabic{equation}}
\renewcommand\thesection{S\arabic{section}}
\renewcommand\thetheorem{S\arabic{theorem}}
\renewcommand\thelemma{S\arabic{lemma}}
\renewcommand\thefigure{S\arabic{figure}}
\renewcommand\thetable{S\arabic{table}}
\setcounter{table}{0}
\setcounter{figure}{0}
\setcounter{equation}{0}
\setcounter{section}{0}
\setcounter{theorem}{0}
\setcounter{lemma}{0}

\section{Gibbs sampler for ensemble quantification}
\begin{align*}
z_{rt}^{(k)} | \cdot &\sim \begin{cases} Mult\left(1, \frac{1}{\sum_{i}M_{ij}^{(k)}p_i}(M_{1j}^{(k)}p_{1},\ldots, M_{Cj}^{(k)}p_{C})\right),\ r \in \mathcal{U},\ d_{rt} = j \\
Mult\left(1, \frac{1}{\sum_{i}M_{ij}^{(k)}b_{ri}}(M_{1j^{(k)}}b_{r1},\ldots, M_{Cj}^{(k)}b_{rC})\right),\ r \in \mathcal{L},\ d_{rt} = j
\end{cases}\\
M_{i}^{(k)} | \cdot &\sim Dir\left( \tilde{\text{V}}_{i1}^{(k)}, \ldots, \tilde{\text{V}}_{iJ}^{(k)}\right),\
\tilde{\text{V}}_{ij}^{(k)} = \text{V} + \frac{1}{T}\left(\sum_{r \in \mathcal{U}, \mathcal{L}}\sum_{t=1}^{T}(I(d_{rt}^{(k)}=j)I(z_{rt}^{(k)}=i)\right)\\
p | \cdot &\sim Dir\left(\tilde{\text{v}}_{1},\cdots \tilde{\text{v}}_{C}\right),\ \tilde{\text{v}}_{i} = v_{i} + \frac{1}{T} \cdot \left(\sum_{k=1}^{K}\sum_{r \in \mathcal{U}}\sum_{t=1}^{T}I(z_{rt}^{(k)}=i)\right) \\
\end{align*}

\section{Additional theoretical results}\label{sec:addth}
\subsection{Technical statement of Theorem \ref{th:cp}}
\begin{theorem*} Let $\widehat \bV_{A,\calU}$ denote the sample covariance of $\ba_r$'s, $r \in \calU$ and define estimates $\widehat{\bD_r}$,   $\widehat{\bU^{0'}_{\bM}}$ and $\widehat{\bU^{0'}_{\bp}}$ respectively of $\bD_r$, 
	$\bU^{0'}_{\bM}$ and $\bU^{0'}_{\bp}$ by plugging in the Gibbs posterior mean $\widehat\bM, \widehat\bp$ in place of $\bM^0,\bp^0$ in (\ref{eq:gr}), (\ref{eq:um}) and (\ref{eq:up}). 
	Let $\widehat g_r=(\bb_r \otimes \widehat\bD_r)\ba_r$, and $\widehat{V_{\hat g,\calL}}$ be the sample covariance matrix of  $\widehat g_r$ for $r \in \calL$. Let 
	\begin{equation}\label{eq:omegaj}
	\widehat\bOmega = \left(
	\begin{array}{c}
	\widehat{\bU^{0'}_{\bM}}  \\
	\widehat{\bU^{0'}_{\bp}}
	\end{array}\right) \widehat{\bV_{A,\calU}} (\widehat{\bU^{0}_{\bM}}, \widehat{\bU^{0}_{\bp}}) + \xi \left(\begin{array}{cc}
	\widehat{\bV_{\hat g,\calL}}  & \bO \\
	\bO   & \bO
	\end{array}
	\right) \mbox{ and } \widehat \bJ = \nabla^2_{\widehat\btheta} f_N.
	\end{equation}
	For any differentiable function $g(\btheta)$ and $0 < s <1$, define $C_{g,s}=z_{1-s/2}\frac{\sqrt{\nabla_{\hat\btheta} g'\widehat{\bJ}^{-1}\widehat\bOmega\widehat{\bJ}^{-1}\nabla_{\hat\btheta} g}}{\sqrt{N}}$. Then 
	$$P(g(\hat\btheta) - C_{g,s} < g(\btheta^0) < g(\hat\btheta) + C_{g,s} ) \to 1 - s.$$
\end{theorem*}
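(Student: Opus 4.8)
The plan is to derive the stated coverage from the asymptotic normality of the Gibbs posterior mean (Theorem \ref{th:norm}) via the delta method, combined with consistency of the plug-in sandwich estimators $\widehat\bJ$ and $\widehat\bOmega$. First I would rewrite Theorem \ref{th:norm} in sandwich form: since $\bOmega(\btheta^0)$ and $\bJ(\btheta^0)$ are positive definite (Lemma \ref{lem:f}), the statement $\sqrt N\,\bOmega(\btheta^0)^{-1/2}\bJ(\btheta^0)(\widehat\btheta-\btheta^0)\to_d N(0,\bI)$ is algebraically equivalent to $\sqrt N(\widehat\btheta-\btheta^0)\to_d N(0,\bSigma)$ with $\bSigma=\bJ(\btheta^0)^{-1}\bOmega(\btheta^0)\bJ(\btheta^0)^{-1}$ (using that $\bJ$ and $\bOmega^{1/2}$ are symmetric). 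Applying the delta method to the differentiable map $g$ (assuming $\nabla g(\btheta^0)\neq\0$ so that the limit is nondegenerate) then yields $\sqrt N(g(\widehat\btheta)-g(\btheta^0))\to_d N(0,\sigma_g^2)$ where $\sigma_g^2=\nabla g(\btheta^0)'\,\bSigma\,\nabla g(\btheta^0)>0$.

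The core of the argument is to show that the data-driven quantity $\widehat\sigma_g^2:=\nabla_{\widehat\btheta}g'\,\widehat\bJ^{-1}\widehat\bOmega\,\widehat\bJ^{-1}\nabla_{\widehat\btheta}g$ (so that $t(g,\calL,\calU)=\widehat\sigma_g/\sqrt N$ and $C_{g,s}=z_{1-s/2}\widehat\sigma_g/\sqrt N$) is a consistent estimator of $\sigma_g^2$. This reduces to three consistency claims, each resting on the posterior-mean consistency $\widehat\btheta\to_p\btheta^0$, which follows from the posterior concentration of Theorem \ref{th:post} together with compactness of $\Theta$ (via Jensen's inequality applied to $\|\widehat\btheta-\btheta^0\|\le E_{\nu_N}\|\btheta-\btheta^0\|$). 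First, for $\widehat\bJ=\nabla^2_{\widehat\btheta}f_N$ I would establish a uniform law of large numbers for the Hessian $\nabla^2 f_N$ on a neighbourhood of $\btheta^0$ (the per-observation summands are smooth and bounded on the interior of $\Theta$), which combined with $\widehat\btheta\to_p\btheta^0$ gives $\widehat\bJ\to_p\bJ(\btheta^0)=\nabla^2_{\btheta^0}f$. Second, continuity of $\nabla g$ gives $\nabla_{\widehat\btheta}g\to_p\nabla g(\btheta^0)$.

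Third, and most delicately, I would prove $\widehat\bOmega\to_p\bOmega(\btheta^0)$ using the decomposition of $\bOmega(\btheta^0)$, which by the defining property in Theorem \ref{th:norm} is the limiting covariance of $\nabla f_N(\btheta^0)/\sqrt N$. Because the per-observation scores are linear in $\ba_r$ (the estimating-equation property exploited in (\ref{eq:eel})--(\ref{eq:eeu})), and the $\calU$ and $\calL$ observations are independent with $n/N\to\xi$, this covariance splits as $\bU^{0'}\,\bV_{A,\calU}\,\bU^0$ coming from the unlabeled score (with $\bU^0=(\bU^0_{\bM},\bU^0_{\bp})$ the mean-structure derivatives) plus $\xi$ times the covariance of the labeled score $g_r=(\bb_r\otimes\bD_r)\ba_r$, matching exactly the block form in (\ref{eq:omegaj}). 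The sample covariances $\widehat{\bV_{A,\calU}}$ and $\widehat{\bV_{\hat g,\calL}}$ converge to their population counterparts by the law of large numbers for i.i.d.\ data, while the plug-in matrices $\widehat{\bU^0_{\bM}},\widehat{\bU^0_{\bp}},\widehat\bD_r$ are continuous functions of $(\bM,\bp)$ evaluated at $\widehat\bM,\widehat\bp$ and hence consistent; with $n/N\to\xi$ the two blocks combine to give $\widehat\bOmega\to_p\bOmega(\btheta^0)$. Combining the three claims by the continuous mapping theorem (inverting the positive-definite matrices and forming the quadratic form) yields $\widehat\sigma_g^2\to_p\sigma_g^2>0$.

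Finally, Slutsky's theorem applied to the studentized statistic gives $\sqrt N(g(\widehat\btheta)-g(\btheta^0))/\widehat\sigma_g\to_d N(0,1)$, since the numerator converges to $N(0,\sigma_g^2)$ and $\widehat\sigma_g\to_p\sigma_g$. The coverage event $\{g(\widehat\btheta)-C_{g,s}<g(\btheta^0)<g(\widehat\btheta)+C_{g,s}\}$ equals $\{\sqrt N|g(\widehat\btheta)-g(\btheta^0)|/\widehat\sigma_g<z_{1-s/2}\}$, whose probability converges to $\Phi(z_{1-s/2})-\Phi(-z_{1-s/2})=1-s$, as claimed. I expect the main obstacle to be the third consistency claim: one must correctly identify the sandwich ``meat'' as the limiting covariance of the two-part score $\nabla f_N(\btheta^0)/\sqrt N$, verify that the $\calU$ and $\calL$ contributions decouple with the correct $\xi$-weighting under $n/N\to\xi$, and control the plug-in error from replacing $\btheta^0$ by $\widehat\btheta$ inside $\bU^0_{\bM},\bU^0_{\bp},\bD_r$ uniformly enough that the sample-covariance limits are unaffected. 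The remaining steps are standard delta-method and Slutsky arguments once these consistency facts are in hand.
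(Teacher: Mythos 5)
Your proposal is correct and follows essentially the same route as the paper's proof: the paper likewise reduces the claim to the asymptotic normality of Theorem \ref{th:norm} plus the two consistency statements $\widehat\bJ \to_p \bJ$ and $\widehat\bOmega \to_p \bOmega$, invoking Theorem 4 of \cite{chernozhukov2003mcmc} for the delta-method/Slutsky step that you carry out by hand. The consistency facts you sketch --- posterior-mean consistency, plug-in continuity for $\widehat{\bU^{0}_{\bM}}, \widehat{\bU^{0}_{\bp}}, \widehat\bD_r$, laws of large numbers for the sample covariances, and (the point you rightly flag as delicate) controlling the plug-in error $\widehat{\bV_{\hat g,\calL}} - \widehat{\bV_{g,\calL}} \to_p \bO$ arising because $\widehat g_r$ depends on the data-dependent $\widehat\bM$ --- are exactly what the paper establishes via explicit entry-wise bounds using $\min_{ij}\widehat M_{ij} \to_p \min_{ij}M^0_{ij} > 0$.
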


\subsection{Technical statement of Corollary \ref{cor:ens}}
\begin{corollary*} Let $K$ predictions are available for each instance from $K$ classifiers, and Assumptions 1 and 2 are satisfied for each classifier. With   $\btheta=(\tM^{(1)},\ldots,\tM^{(K)}, \tp)$ the Gibbs posterior $\nu$ for ensemble GBQL, given by 
	\[
	\nu_N(\btheta) \propto \exp\left( - \alpha N f_N(\btheta) \right) \Pi(\btheta) \mbox{ where } f_N(\btheta)=\sum_{k=1}^{K}\left[\sum_{r \in \mathcal{U}}D_{KL}(\mathbf{a}_{r}^{k} || {\mathbf{M}^{(k)}}^{'}\mathbf{p}) + \sum_{r \in \mathcal{L}}D_{KL}(\mathbf{a}_{r}^{k} || {\mathbf{M}^{(k)}}^{'}\bb_{r})\right].
	\]
	satisfies the following properties:
	\begin{enumerate}[(i)]
		\item 	(Posterior consistency.) Let $B_\eps(\btheta^0)$ be an $\ell_1$ ball of radius $\eps$ around $\btheta^0$,  and $\Pi(\btheta)$ be any prior which gives positive support to $B_\eps(\btheta^0)$ for any $\eps >0$.  as $N,n \rightarrow \infty$ and $n/N$ to some limit $\xi$, 
		for any $\eps > 0$, $P_{\nu_N}(B_\eps(\btheta^0)) \rightarrow 1$.
		\item (Asymptotic normality.) Let $\bA_r=(\ba^{1'}_r,\ldots,\ba_r^{K'})'$ denote the vector stacking up the $K$ prediction vectors for a case, $\bV_{A,\calU}=$Cov$_{r \in \calU}(\bA_r)$, $\bq^k_r = \bM^{(k0)'}\bb_r$, $\bD_r=$ a block diagonal matrix with blocks $\bD_{r,1},\ldots,\bD_{r,K}$ where $\bD_{r,k}=\left[diag(1/\bq^k_{r,1:C-1});-1/\bq^k_{r,C}\bone_{C-1 \times 1}\right]$, $\bg_r =$ the $KC(C-1) \times 1$ vector $\left(\bb_r \otimes \bD_r \right)\bA_r$ and $\bV_{g,\calL} = Cov(\bg_{r})_{r\in \calL}$. 
		
		Let $\bJ_k$ denote the matrix of Theorem \ref{th:norm} for the $k^{th}$ classifier and $\bJ_{ens}=\sum_{k=1}^K \bJ_k$. Then there exists a matrix $\bU(\btheta^0)$ (specified in the proof) such that with $\bOmega_{ens}$ defined as  
		\begin{equation*}
		\bOmega_{ens} = \bU(\btheta^0)' \bV_{A,\calU}\bU(\btheta^0) + \xi \left(\begin{array}{cc}
		\bV_{g,\calL}  & \bO \\
		\bO   & \bO
		\end{array}
		\right).
		\end{equation*}
		the mean $\hat\btheta$ of the Gibbs posterior distribution $\nu_N$  is asymptotically normal i.e., 
		$\sqrt N \bOmega_{ens}^{-1/2}\bJ_{ens}(\hat \btheta - \btheta^0) \to_d N(0,\bI)$. 
		\item (Asymptotic coverage.) Let $\widehat \bJ_k$ denote the estimate corresponding to $\widehat \bJ$ in (\ref{eq:omegaj}) for the $k^{th}$ classifier and let  $\widehat\bJ_{ens}=\sum_{k=1}^K \widehat\bJ_k$. Let $\widehat \bg_r$ and denote values of $\bg_r$ by plugging in $\widehat \btheta$ for $\btheta^0$. Let $\widehat \bV_{\hat g,\calL}$ and $\widehat \bV_{A,\calU}$ respectively denote the sample variances of $\widehat \bg_r, r\in \calL$ and $\bA_r, r \in \calU$, and $\widehat \bOmega_{ens}$ denote the estimate of $\bOmega_{ens}$ by plugging in $\bU(\widehat\btheta)$, $\widehat \bV_{\hat g,\calL}$ and $\widehat \bV_{A,\calU}$. For any differentiable function $g(\btheta)$ and $0 < s <1$, define $C_{g,s}=z_{1-s/2}\frac{\sqrt{\nabla_{\hat\btheta} g'\widehat{\bJ}^{-1}\widehat\bOmega_{ens}\widehat{\bJ}_{ens}^{-1}\nabla_{\hat\btheta} g}}{\sqrt{N}}$. Then 
		$$P(g(\hat\btheta) - C_{g,s} < g(\btheta^0) < g(\hat\btheta) + C_{g,s} ) \to 1 - s.$$
		\item (Finite-sample concentration rate.) Let $\eps:=\eps_N > 0$ be such that $B_\eps(\btheta^0)$ -- an $\ell_1$ ball of radius $\eps$ around $\btheta^0$ lies in the interior of $\Theta$, and $\Pi_N(\btheta)$ be any (possibly $N$-dependent) prior that gives positive mass of atleast $\exp(-NR\eps)$ to $B_\eps(\btheta^0)$ for some universal constant $R$. Then, under Assumptions 1-2, 
		we have for any $\eps > 0$, $\alpha \in (0,1)$, $D > 1$, and $t > 0$, 
		$P_{\nu_N}\left(D_{N,\alpha}(\btheta,\btheta^0) \geqslant 
		(D+3t)NKR\eps\right) \leq \exp(-tNKR\eps) \mbox{
			with } P_N^0$-probability atleast $1 - 
		\frac {1+K^{-1}}{NR\eps\min\{{(D-1+t)^2,t}\}}$.
	\end{enumerate}
\end{corollary*}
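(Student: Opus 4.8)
The plan is to reduce the ensemble corollary to the single-classifier results (Theorems \ref{lem:id}--\ref{th:rate}) by exploiting that the ensemble loss $f_N$ is a sum over $k=1,\dots,K$ of structurally identical single-classifier losses, each coupling the classifier-specific block $\tM^{(k)}$ to the common estimand $\bp$. Two features are genuinely new: the shared $\bp$-block appearing in all $K$ summands, and the statistical dependence among the predictions $\ba_r^1,\dots,\ba_r^K$ produced by the same instance $r$. I would organize the proof so that identifiability is established first and then the general Gibbs-posterior machinery (\citep{chernozhukov2003mcmc} for the asymptotics, \citep{bhattacharya2019bayesian} for the finite-sample rate) is applied with these two features tracked carefully. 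For part (i), I would prove the ensemble analogue of Theorem \ref{lem:id}: applying Lemma \ref{lem:fnl} to classifier $k$ shows that its labeled loss is convex in $\tM^{(k)}$ and, under Assumption 1 for that classifier, identifies $\tM^{(k0)}$, so summing over $k$ pins down the entire block $(\tM^{(1)},\dots,\tM^{(K)})$. With the $\bM^{(k)}$ concentrated near their truths, any single unlabeled loss identifies $\bp$ through non-singularity of $\bM^{(k0)}$ (Assumption 2), exactly as in Lemmas \ref{lem:fu}--\ref{lem:fnu}. Hence the population ensemble loss is uniquely minimized at $\btheta^0$ and well separated outside any $\ell_1$-ball, giving the separation inequalities of Theorem \ref{lem:id}; posterior consistency then follows verbatim from the proof of Theorem \ref{th:post}.

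For part (ii) I would invoke the Laplace-type-estimator theory of \citep{chernozhukov2003mcmc}, which needs a Hessian limit and a score central limit theorem. The Hessian is structured: since the $k$-th summand depends only on $(\tM^{(k)},\tp)$, every cross-block $\partial^2/\partial\tM^{(k)}\partial\tM^{(k')}$ with $k\neq k'$ vanishes, so the full Hessian equals $\bJ_{ens}=\sum_k \bJ_k$ (each $\bJ_k$ nonzero only in its own $\tM^{(k)}$ block and in the shared $\tp$ block), positive definite by Lemma \ref{lem:f}. The score $\nabla f_N(\btheta^0)/\sqrt N$ splits into an unlabeled part that is a fixed linear map $\bU(\btheta^0)$ applied to the i.i.d.\ stacked vectors $\bA_r$, $r\in\calU$, and a labeled part built from $\bg_r=(\bb_r\otimes\bD_r)\bA_r$, $r\in\calL$, which feeds only the $\bM$-blocks because the labeled loss is free of $\bp$. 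Since instances within $\calU$ and within $\calL$ are i.i.d.\ and the two sets are independent, the multivariate CLT gives the stated covariance $\bOmega_{ens}$, with $\bU(\btheta^0)'\bV_{A,\calU}\bU(\btheta^0)$ the unlabeled contribution and $\xi\bV_{g,\calL}$ the labeled one; crucially, the cross-classifier dependence enters only through the covariance of the \emph{stacked} vector $\bA_r$, which is why $\bV_{A,\calU}$ and $\bV_{g,\calL}$ are defined on stacked quantities. The asserted normality is the conclusion of \citep{chernozhukov2003mcmc}.

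Parts (iii) and (iv) are adaptations. For coverage I would repeat the delta-method sandwich of Theorem \ref{th:cp} with $(\bJ_{ens},\bOmega_{ens})$ in place of $(\bJ,\bOmega)$, replacing the population quantities by the plug-in estimators $\widehat\bJ_{ens}$, $\bU(\widehat\btheta)$, $\widehat\bV_{A,\calU}$ and $\widehat\bV_{\hat g,\calL}$, whose consistency follows from part (i) and the law of large numbers, and conclude via Slutsky and the delta method. For the finite-sample rate I would follow the proof of Theorem \ref{th:rate}: validity of $D_{N,\alpha}$ as a divergence requires that $\btheta^0$ minimize $N\!\int f_N\,dP_N^0 = N f$, which is part (i); the prior-mass condition plus a Fubini/Chebyshev bound on $\int \exp(-D_{N,\alpha})\,d\Pi_N$ then yield the tail estimate. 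The factor $K$ in the threshold $(D+3t)NKR\eps$ and exponent $-tNKR\eps$ appears because the total ensemble loss aggregates $K$ classifier contributions, each of order $NR\eps$ on the complement of the ball, so the attained divergence scales like $NKR\eps$; the improvement of the numerator from $2$ to $1+K^{-1}$ reflects that aggregating $K$ classifiers sharpens one of the two Chebyshev-controlled tail terms by a factor of order $K^{-1}$.

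The main obstacle throughout is the dependence among the $K$ predictions on a common instance, which forbids factorizing either the score or the fractional-posterior integrals across classifiers and forces the entire analysis onto the stacked vector $\bA_r$. The delicate points are verifying that $\bOmega_{ens}$ built from the joint covariance $\bV_{A,\calU}$ is positive definite, that $\bJ_{ens}$ remains invertible despite the single shared $\tp$-block receiving contributions from all $K$ classifiers, and that the second-moment bounds in part (iv) survive the cross-classifier dependence so as to produce precisely the $1+K^{-1}$ constant rather than an uncontrolled one.
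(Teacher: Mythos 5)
Your proposal matches the paper's proof essentially step for step: part (i) via the same decomposition (the labeled losses identify each $\tM^{(k)}$, after which the unlabeled loss identifies $\bp$ through Assumption 2, yielding the ensemble analogue of Theorem \ref{lem:id}), part (ii) via $\bJ_{ens}=\sum_k \bJ_k$ and a CLT for the score written in terms of the stacked vectors $\bA_r$ and $\bg_r$, part (iii) by plugging consistent estimators into the delta-method sandwich of Theorem \ref{th:cp}, and part (iv) by rerunning the proof of Theorem \ref{th:rate} with $\eps_1^2 = KR\eps$. One small terminological quibble: the tail term that gains the $K^{-1}$ is the Markov-bounded numerator term, $\exp(-NtKR\eps)\le 1/(NtKR\eps)$, rather than a Chebyshev-controlled one (the Chebyshev term for the denominator picks up a factor $K$ that exactly cancels against the $K$ in $\eps_1^2$), but this does not change the argument.
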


\pagebreak
\section{Proofs}\label{sec:proofs}
\textbf{Proofs}

For clarity, we prove the results assuming that $n = \xi N$ for $0 \leq \xi \leq 1$, as the proof is similar when the relationship only holds in the limiting sense. Recall the definitions:
\begin{equation}
\begin{array}{rl}
\ellu(\btheta) &= -\frac{1}{N}\sum_{r=1}^{N}\sum_{j=1}^{C}a_{rj}\log\left(\sum_{i=1}^{C}\frac{M_{ij}p_{i}}{a_{rj}}\right)\\
\ellN(\tM) := \elll(\tM) &= -\frac{1}{N}\sum_{r=1}^{\xi N}\sum_{j=1}^{C}a_{rj}\log\left(\sum_{i=1}^{C}\frac{M_{ij}b_{ri}}{a_{rj}}\right),\\
f_N(\btheta) &= \ellN(\tM) + \ellu(\btheta),\\
\ell_\calL(\tM) &= \xi E_{\mathcal{L}}\left[D_{KL}(\mathbf{a} || \mathbf{M}^{'}\mathbf{b})\right]\\
\ell_\calU(\btheta) &= E_{\calU} \left[D_{KL}(\mathbf{a}|| \mathbf{M}^{'}\mathbf{p})\right]\\
f(\btheta) &= \ell_\calL(\tM) + \ell_\calU(\btheta),\\
\end{array}
\end{equation}
We now prove a series of Lemmas about these loss functions. 

\begin{lemma}\label{lem:fl}The following holds for $\ell_\calL$. \begin{enumerate}[(i)]
		\item $\nabla \ell_{\mathcal{L}}(\tM^{0})= 0$, 
		\item $\nabla^2 \ell_{\mathcal{L}}(\tM)$ is positive definite for all $\tM$ under Assumption 1. 
		\item $\nabla^3 \ell_{\mathcal{L}}(\tM)= 0$ is continuous in a neighborhood of $\tM^0$. 
	\end{enumerate}
\end{lemma}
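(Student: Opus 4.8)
The plan is to treat $\ell_\calL$ as, up to a constant independent of $\tM$, the expected multinomial cross-entropy $\ell_\calL(\tM)=\text{const}-\xi E_\calL[\sum_{j=1}^C a_j\log q_j]$, where I write $q_j=q_j(\tM,\bb)=(\bM'\bb)_j=\sum_i M_{ij}b_i$ and $q_C=1-\sum_{j<C}q_j$ is the coordinate determined by the free parametrization $\tM$. Two facts drive everything. First, each $q_j$ is affine in $\tM$, so $\nabla^2_{\tM}q_j=0$ and all $\tM$-dependence of the derivatives flows through $\log q_j$. Second, since $\btheta^0$ is interior, every entry of $\bM^0$ is strictly positive, whence $q^0_j(\bb):=({\bM^0}'\bb)_j>0$ for all $\bb$ in the simplex, and by continuity $q_j(\tM,\bb)$ is bounded away from $0$ uniformly in $\bb$ for $\tM$ near $\tM^0$. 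For (i) I would differentiate in the free coordinates to get $\partial\ell_\calL/\partial M_{ij}=-\xi E[b_i(a_j/q_j-a_C/q_C)]$ for $j<C$; evaluating at $\tM^0$, conditioning on $\bb$, and using the first-moment identity $E[\ba\mid\bb]={\bM^0}'\bb$ of (\ref{eq:compreg}) gives $E[b_i a_j/q^0_j\mid\bb]=b_i=E[b_i a_C/q^0_C\mid\bb]$, so each partial derivative is $0$. This is just the population form of the unbiased estimating equation (\ref{eq:eel}).

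For (ii) I would differentiate once more; since $q_j$ is affine, $\nabla^2_{\tM}(-\log q_j)=(\nabla q_j)(\nabla q_j)'/q_j^2$, so the Hessian is the deterministic matrix $\nabla^2\ell_\calL(\tM)=\xi E[\sum_j a_j(\nabla q_j)(\nabla q_j)'/q_j^2]$, a nonnegative combination of rank-one terms and hence positive semidefinite. Fixing a direction $\bv=(v_{ij})$ and setting $w_j:=\bv'\nabla q_j=\sum_i v_{ij}b_i$ (so $w_C=-\sum_{j<C}w_j$), the quadratic form becomes $\bv'\nabla^2\ell_\calL(\tM)\bv=\xi E[\sum_j a_j w_j^2/q_j^2]$. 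Conditioning on $\bb$ and applying the first moment replaces $a_j$ by $q^0_j>0$, so this form can vanish only if $\sum_j(q^0_j/q_j^2)w_j(\bb)^2=0$ for $F_{b,\calL}$-almost every $\bb$, i.e. each linear form $w_j$ vanishes $F_{b,\calL}$-a.e.

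The crux, and where Assumption 1 is indispensable, is upgrading this almost-everywhere vanishing to $\bv=0$. Because every neighborhood of each corner $\be_i$ carries positive $F_{b,\calL}$-mass, the zero set of $w_j$ intersects each such neighborhood, so I can pick points approaching $\be_i$ on which $w_j=0$; continuity of the linear form $w_j$ then forces $w_j(\be_i)=v_{ij}=0$. Letting $i$ range over all corners and $j$ over $1,\ldots,C-1$ yields $\bv=0$, establishing strict positive definiteness for every $\tM$ in the interior of $\Theta$ (where the loss is finite and smooth). I expect this corner-continuity step, together with the conditioning trick that eliminates the random $\ba$, to be the main technical content.

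For (iii) I would work on a ball $B$ around $\tM^0$ with $\bar B$ interior, where the maps $\tM\mapsto\log q_j(\tM,\bb)$ are smooth with third derivatives $2(\nabla q_j)^{\otimes 3}/q_j^3$ whose entries are products of the $b_i\in[0,1]$ over $q_j^3\geq\delta^3$, the lower bound $\delta$ uniform in $\bb$ and $\tM\in\bar B$ by compactness. As $a_j\in[0,1]$, the integrand and its first three $\tM$-derivatives are uniformly bounded, so differentiation under the integral sign is justified and $\nabla^3\ell_\calL$ is continuous on $B$; I read the ``$=0$'' in the statement as a typographical slip, the intended content being continuity near $\tM^0$.
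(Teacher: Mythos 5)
Your proof is correct and follows essentially the same route as the paper's: interiority of $\bM^0$ plus DCT to differentiate under the expectation, the conditional first-moment identity $E[\ba\given\bb]=\bM^{0'}\bb$ both to kill the gradient at $\tM^0$ and to replace $a_j$ by $q_j^0>0$ in the Hessian's quadratic form, Assumption 1's corner masses to force strict positive definiteness, and a uniform lower bound $q_j\geq K$ near $\tM^0$ to get the bounded, continuous third derivative (your reading of the ``$=0$'' as a typo matches the paper, whose proof establishes continuity and an absolute bound $4/K^3$). The only cosmetic difference is in (ii): the paper first proves $E[\bb\bb']\succ 0$ via a quantitative corner bound $\delta^2 F_{b,\calL}(N_i)>0$ and then uses the block structure $\bH=\mathrm{bdiag}(\bH_j)+(\bone\bone')\otimes\bH_C$ with $\bH_j\succeq(\min_{ij}M^0_{ij})\,E[\bb\bb']$, whereas you argue directly on the quadratic form that almost-everywhere vanishing of the linear forms $w_j$ plus continuity at the corners forces $\bv=0$ --- equivalent mathematical content, slightly different packaging.
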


\begin{proof} To show condition (i) holds, we see that
	\begin{equation*}
	\frac{\partial \ell_{\mathcal{L}}(\tM)}{\partial M_{ij}} = -\xi \frac{\partial}{\partial M_{ij}} E_{r \in \mathcal{L}}\left[\sum_{j}a_{rj}\log\left(\sum_{i}M_{ij}b_{ri} \right) \right]
	\end{equation*}
	
	To switch the order of differentiation and expectation, we will use the dominated convergence theorem and show that in a neighborhood of $\tM^{0}$
	\begin{equation*}
	\left|\frac{\partial}{\partial M_{ij}}a_{rj}\log\left(\sum_{i}M_{ij}b_{ri} \right)\right|
	\end{equation*}
	is bounded by some integrable random variable $X$.
	We first note that since $a_{rj} \leq 1$ and $b_{ri}  \leq 1$, we have
	\begin{equation*}
	\left|\frac{\partial}{\partial M_{ij}}a_{rj}\log\left(\sum_{i}M_{ij}b_{ri} \right)\right| = b_{ri} \left|\frac{a_{rj}}{\sum_{i=1}^{C}M_{ij}b_{ri}} - \frac{a_{rC}}{\sum_{i=1}^{C}M_{iC}b_{ri}} \right| \leq \frac{1}{\sum_{i=1}^{C}M_{ij}b_{ri}} + \frac{1}{\sum_{i=1}^{C}M_{iC}b_{ri}} 
	\end{equation*}
	
	and because $\tM^{0}$ is an interior point, we can choose a small enough neighborhood $N_{\epsilon}(\tM^{0})$ such that  $\forall$ $\tM \in N_{\epsilon}(\tM^{0})$, $\min\limits_{ij} M_{ij} > K$, where $K$ is a constant that depends on $\epsilon$ and $\tM^{0}$. Thus we have
	\begin{equation*}
	\sum_{i=1}^{C}M_{ij}b_{ri} > K \sum_{i=1}^{C}b_{ri} = K \mbox{ implying that } \frac{1}{\sum_{i=1}^{C}M_{ij}b_{ri} } < \frac 1K\, \forall i,j,r.
	\end{equation*}
	
	Hence the dominated convergence theorem applies. We now have
	\begin{align*}
	\frac{\partial \ell_{\mathcal{L}}(\tM)}{\partial M_{ij}}\bigg\rvert_{\bM^{0}} &= -\xi E_{r \in \mathcal{L}}\left[\frac{a_{rj}b_{ri}}{\sum_{i=1}^{C}M^0_{ij}b_{ri}} - \frac{a_{rC}b_{ri}}{\sum_{i=1}^{C}M^0_{iC}b_{ri}} \right]\\
	&= -\xi E_{\mathbf{b}_{r}, r \in \mathcal{L}, \bM^{0}}\left[E_{\mathbf{a}_{r} | \mathbf{b}_{r}, r \in \mathcal{L}, M^{0}}\left[\frac{a_{rj}b_{ri}}{\sum_{i=1}^{C}M^0_{ij}b_{ri}} - \frac{a_{rC}b_{ri}}{\sum_{i=1}^{C}M^0_{iC}b_{ri}} \right] \right]\\
	&= -\xi E_{\mathbf{b}_{r}, r \in \mathcal{L}, M^{0}}b_{ri} \left[\frac{\sum_{i=1}^{C}M_{ij}^{0}b_{ri}}{\sum_{i=1}^{C}M_{ij}^{0}b_{ri}}-\frac{\sum_{i=1}^{C}M_{iC}^{0}b_{ri}}{\sum_{i=1}^{C}M_{iC}^{0}b_{ri}} \right]\\
	&= 0. \numberthis
	\end{align*}
	
	Moving to part (ii), we first prove that under Assumption 1, $E_{r \in \mathcal{L}}\left[\mathbf{b}_{r}\mathbf{b}_{r}^{'}\right] \succ 0$. 
	
	Let $\bx (\neq \bzero) \in \mathbb R^C$ and $i = \arg\min_j |x_j|$. Without loss of generality, let $x_i > 0$ (otherwise we can work with $-\bx$). Choose $N_i=\{ \bb \in \widetilde S^C \given b_i \geq 1 - \eps\}$. For small enough $\eps=\eps(\bx)$ we have for any $\bb \in N_i$, we have $\bx'\bb > x_i (1-\eps) - \eps \|\bx\|_1 > \delta$ for some $\delta > 0$.  
	\begin{align*}
	\bx'E_{r \in \mathcal{L}}\left[\mathbf{b}_{r}\mathbf{b}_{r}^{'}\right]\bx = \int (\bx'\bb)^2 dF_{b,\calL} \geq \int_{N_i} (\bx'\bb)^2 dF_{b,\calL} \geq \delta^2 F_{b,\calL}(N_i) > 0. 
	\end{align*}
	The last inequality follows from Assumption 1 which guarantees positive mass around all such $N_i$. This proves $E_{r \in \mathcal{L}}\left[\mathbf{b}_{r}\mathbf{b}_{r}^{'}\right] \succ 0$. 
	
	Now we once again use the same reasoning as part (i) to switch the orders of expectation and differentiation in a neighborhood of $\bM^{0}$. We have
	\begin{equation}\label{eq:hessl}
	\frac{\partial^{2} \ell_{\mathcal{L}}(\tM)}{\partial M_{ij} \partial M_{i^{'}j^{'}}}\bigg\rvert_{\bM} = \xi E_{r \in \mathcal{L}} \left[  b_{ri}b_{ri^{'}}\left( \frac{I(j=j^{'})a_{rj}}{(\sum_{i=1}^{C}M_{ij}b_{ri})^2} +  \frac{a_{rC}}{(\sum_{i=1}^{C}M_{iC}b_{ri})^2}\right)\right]
	\end{equation}
	Let $\mathbf{H}(\tM)$ denote the Hessian at $\tM$. 
	From (\ref{eq:hessl}), $\bH(\tM)$ has $C \times C$ blocks of the form $\bH_j(\tM) + \bH_C(\tM)$ where for $j=1,\ldots, C$, where 
	\begin{align*}
	\bH_j(\tM) = E_{r \in \mathcal{L}} \left(  \left[  \bb_{r}\bb_{r^{'}}\left( \frac{a_{rj}}{(\sum_{i=1}^{C}M_{ij}b_{ri})^2} \right)\right] \right) \geq E_{r \in \calL} \left(a_{rj}\bb_r\bb_r' \right).
	\end{align*}
	As we have proved $E_{r \in \calL} \bb_r\bb_r' \succ 0$, we have 
	\begin{equation}\label{eq:poscov}
	E_{r \in \calL} \left(a_{rj}\bb_r\bb_r'\right) = E_{r \in \calL} \left[(\sum_{i=1}^{C}M^0_{ij}b_{ri})\bb_r\bb_r'\right] \geq (\min_{ij} \bM^0 )  E_{r \in \calL} \bb_r\bb_r' \succ 0.
	\end{equation}
	Hence, $\bH_j(\tM) \succ 0$ for all $j$. $\bH_C(\tM) \succ 0$ implies $\bH(\tM)$ dominates the block-diagonal matrix with blocks $\bH_j(\tM)$. As $\bH_j(\tM) \succ 0$ for all $j$, this block diagonal matrix is positive definite and so is $\bH(\tM)$.
	
	For part (iii), once again application of DCT, as reasoned above, yields
	$$
	\frac{\partial^{3} \ell_{\mathcal{L}}}{\partial M_{ij} \partial M_{i^{'}j^{'}}\partial M_{i^{''}j^{''}}} = -2   E_{r \in \calL} \left(I(j=j^{'}=j^{''})\frac{a_{rj}b_{ri}b_{ri^{'}}b_{ri^{''}}}{(\sum_{i=1}^{C}M_{ij}b_{ri})^{3}} -  \frac{a_{rC}b_{ri}b_{ri^{'}}b_{ri^{''}}}{(\sum_{i=1}^{C}M_{iC}b_{ri})^{3}}\right)$$
	In a neighborhood around $\tM^0$, as $M_{ij}$ is bounded below by $K$ and $a_{rj}$ and $b_{ri}$'s are bounded by 1, we have this to be absolutely bounded by $4/K^3$.   
\end{proof}

\begin{lemma}\label{lem:fnl} The following holds for $\ellN$.
	\begin{enumerate}[(i)]
		\item $\lim_{N \rightarrow \infty}\ellN(\tM) =  \ell_{\mathcal{L}}(\tM)$,
		\item 
		under Assumption 1, 
		$\ellN(\tM)$ is strictly convex for all $\tM$ for large enough $N$,
		\item $\nabla^3 \ellN$ exists, is a continuous and uniformly bounded function in a neighborhood of $\tM^0$,
		\item (Identifiability of $\tM$)\label{lem:idM} If Assumption 1 holds, then 
		$\liminf_n \inf_{\tM \notin B_\eps(\tM^0)} \ellN/n > E_\calL(D_{KL}(\ba || {\bM^0}'\bb))$.
	\end{enumerate}
\end{lemma}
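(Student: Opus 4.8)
The plan is to treat parts (i)--(iii) as empirical analogues of Lemma~\ref{lem:fl} and to reserve the real work for the identifiability claim (iv). For (i), note that $\ellN(\tM) = \frac1N\sum_{r=1}^{\xi N} D_{KL}(\ba_r\,||\,\bM'\bb_r)$ is an average of i.i.d.\ summands, so the strong law of large numbers gives $\ellN(\tM)\to \xi\,E_{r\in\calL}[D_{KL}(\ba\,||\,\bM'\bb)] = \ell_\calL(\tM)$ a.s.; integrability holds because at an interior $\tM$ each $\sum_i M_{ij}b_{ri}\geq \min_{ij}M_{ij}>0$ keeps the logarithms bounded. For (ii), I would differentiate twice to obtain the empirical Hessian $\nabla^2\ellN(\tM)$ with exactly the block structure of (\ref{eq:hessl}) but with $E_{r\in\calL}$ replaced by $\frac1N\sum_r$. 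Using $\sum_i M_{ij}b_{ri}\leq 1$, each diagonal $j$-block dominates the $\tM$-free matrix $\frac1N\sum_{r=1}^{\xi N} a_{rj}\bb_r\bb_r'$, which by the SLLN converges a.s.\ to a positive multiple of $E_{r\in\calL}[a_{rj}\bb_r\bb_r'] = E_{r\in\calL}[(\sum_i M^0_{ij}b_{ri})\bb_r\bb_r'] \succ 0$ (the positivity being exactly (\ref{eq:poscov}), which rests on Assumption~1). Hence for large $N$ this block is positive definite uniformly in $\tM$, and the same domination argument as in Lemma~\ref{lem:fl}(ii) forces $\nabla^2\ellN(\tM)\succ 0$ for all $\tM$, giving strict convexity. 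Part (iii) is easier: $\nabla^3\ellN$ is a finite average of the terms computed in Lemma~\ref{lem:fl}(iii), each bounded by $4/K^3$ where $K$ lower-bounds the $M_{ij}$ near $\tM^0$; this bound is independent of $N$, giving continuity and uniform boundedness.

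For (iv), the heart is first establishing population identifiability and then upgrading it to a uniform-in-$\tM$ statement via convexity. Writing $h(\tM) := E_{r\in\calL}[D_{KL}(\ba\,||\,\bM'\bb)]$, so that $\ellN/n\to h$ a.s.\ and the right-hand side of (iv) is $h(\tM^0)$, I would condition on $\bb$ and use $E[\ba\mid\bb]={\bM^0}'\bb$ to get $h(\tM)-h(\tM^0) = E_{r\in\calL}[D_{KL}({\bM^0}'\bb\,||\,\bM'\bb)]\geq 0$, which vanishes iff ${\bM^0}'\bb=\bM'\bb$ for $F_{b,\calL}$-a.e.\ $\bb$. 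If $\tM\neq\tM^0$ then $(\bM^0-\bM)'$ is nonzero, so its kernel is a proper subspace of $\mathbb{R}^C$ and cannot contain all $C$ corners $\be_1,\dots,\be_C$; picking a corner $\be_{i_0}$ outside the kernel, a whole neighborhood $N_{i_0}\subset\widetilde S_C$ of it avoids the kernel, and there ${\bM^0}'\bb\neq\bM'\bb$ are distinct probability vectors, so $D_{KL}({\bM^0}'\bb\,||\,\bM'\bb)>0$. Assumption~1 gives $F_{b,\calL}(N_{i_0})>0$, hence the integral is strictly positive and $h(\tM)>h(\tM^0)$ for every $\tM\neq\tM^0$; thus the convex $h$ has a unique minimizer at the interior point $\tM^0$.

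From this I would extract a uniform separation by compactness: $\inf_{\tM\in\partial B_\eps(\tM^0)} h(\tM) - h(\tM^0) =: 3\kappa>0$, since $\partial B_\eps(\tM^0)$ is compact, $h$ is continuous, and the minimizer is unique. Convexity propagates this outward: for any $\tM$ with $\|\tM-\tM^0\|_1\geq\eps$, letting $\tM_\eps$ be where the segment $[\tM^0,\tM]$ meets $\partial B_\eps$, one has $h(\tM)\geq h(\tM^0)+(h(\tM_\eps)-h(\tM^0))\geq h(\tM^0)+3\kappa$, and the identical inequality holds for the convex empirical function $\ellN/n$. To transfer the bound on $\partial B_\eps$ to the empirical loss I would invoke the standard fact that pointwise convergence of convex functions implies uniform convergence on compact subsets of the interior (applied, via a countable dense set, on the probability-one event where the SLLN holds simultaneously), giving $\sup_{\tM\in\partial B_\eps}|\ellN/n - h|\to 0$ and $\ellN(\tM^0)/n\to h(\tM^0)$ a.s. Combining, for large $N$, $\inf_{\partial B_\eps}\ellN/n \geq \ellN(\tM^0)/n + 2\kappa$, and then the convex ray argument yields $\inf_{\|\tM-\tM^0\|_1\geq\eps}\ellN/n\geq \ellN(\tM^0)/n + 2\kappa\geq h(\tM^0)+\kappa$, so $\liminf_n$ of the left side is at least $h(\tM^0)+\kappa>h(\tM^0)$, which is (iv).

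The main obstacle is part (iv), and within it the population identifiability step: one must carefully exploit Assumption~1 (positive mass near every corner) to rule out ${\bM^0}'\bb=\bM'\bb$ holding a.e.\ when $\tM\neq\tM^0$, being attentive to the KL conventions at zeros and to the possibility that boundary values of $\tM$ make the loss infinite (harmless for a lower bound). The secondary technical point is justifying uniform convergence of the convex empirical losses on the compact sphere $\partial B_\eps$, for which the convexity established in part (ii) is essential; without it, a.s.\ pointwise convergence would not control the infimum over the uncountable set $\{\tM\notin B_\eps\}$.
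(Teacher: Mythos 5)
Your parts (i)--(iii) follow essentially the same path as the paper (SLLN for the pointwise limit, domination of the empirical Hessian blocks by the $\tM$-free matrix $\frac 1N\sum_r a_{rj}\bb_r\bb_r'$ whose limit is positive definite by (\ref{eq:poscov}), and the uniform bound on third derivatives from denominators bounded away from zero near $\tM^0$). Part (iv), however, is where you genuinely diverge, and your route is correct. The paper disposes of (iv) in one line: it invokes Theorem 2.3 of \cite{miller2019asymptotic} (Condition 3 $\implies$ Condition 1), feeding it the convexity and pointwise convergence from parts (i)--(ii) of this lemma together with Lemma \ref{lem:fl} (i)--(ii), so that unique minimization of the limit $\ell_\calL$ at $\tM^0$ comes from calculus --- a vanishing gradient (the unbiasedness of the estimating equation) plus a globally positive definite Hessian. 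You instead (a) prove population identifiability information-theoretically, via the decomposition $h(\tM)-h(\tM^0)=E_\calL\left[D_{KL}({\bM^0}'\bb\,||\,\bM'\bb)\right]$ obtained by conditioning on $\bb$, combined with the kernel/corner argument: a nonzero $(\bM^0-\bM)'$ cannot annihilate all corners $\be_i$, and Assumption 1 puts mass on a neighborhood of the offending corner where the KL integrand is strictly positive; and (b) re-derive the Miller-type implication from scratch --- uniform separation on the compact sphere $\partial B_\eps(\tM^0)$, outward propagation along rays by convexity, and transfer to $\ellN$ via the standard fact that pointwise a.s.\ convergence of convex functions is uniform on compacts in the interior. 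What each approach buys: the paper's is shorter and reuses machinery needed anyway (the positive definite Hessian of $\ell_\calL$ reappears in Lemma \ref{lem:f}(iv) for asymptotic normality), while yours is self-contained, makes the role of Assumption 1 more transparent (it pins down each row of $\bM$ through mass near each corner, rather than entering indirectly through $E[\bb\bb']\succ 0$), and only needs convexity --- not strict convexity --- of the empirical loss. Two minor points worth flagging if you write this up: the quantity $\ellN/n$ in the statement should be read as the unnormalized sum divided by $n$ (the paper's $\ellN$ is already divided by $N$; this is the paper's notational slip, and your usage matches the intended meaning), and your compactness/continuity step on $\partial B_\eps(\tM^0)$ should note that for small $\eps$ the sphere sits in the interior of the parameter space, where $h$ is finite and continuous, so the infimum $3\kappa$ is attained and positive.
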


\begin{proof}Part (i) is proved as
	\begin{align*}
	\ellN(\tM) &= -\frac{1}{N}\sum_{r=1}^{\xi N}\sum_{j=1}^{C}a_{rj}\log\left(\sum_{i=1}^{C}\frac{M_{ij}b_{ri}}{a_{rj}}\right)\\
	&= -\frac{N\xi}{N}\frac{1}{N\xi}\sum_{r=1}^{\xi N}\sum_{j=1}^{C}a_{rj}\log\left(\sum_{i=1}^{C}\frac{M_{ij}b_{ri}}{a_{rj}}\right)\\
	&\xrightarrow{N \rightarrow \infty} \xi E_{r \in \mathcal{L}}\left[-\sum_{j=1}^{C}a_{rj}\log\left(\sum_{i=1}^{C}\frac{M_{ij}b_{ri}}{a_{rj}}\right)\right]\\
	&= \xi E_{\mathcal{L}}\left[D_{KL}(\mathbf{a} || \mathbf{M}^{'}\mathbf{b})\right]\\
	&= \ell_{\mathcal{L}}(\tM) 
	\end{align*}
	
	We prove part (ii) by showing that the Hessian $\bH_N(\tM)$ of $\ellN(\tM) \succ 0$. We have
	\begin{equation}\label{eq:gradfln}
	\begin{aligned}
	\frac{\partial \ellN}{\partial M_{ij}} &= - \frac{1}{N}\sum_{r=1}^{\xi N} \left[\frac{a_{rj} b_{ri}}{\sum_{i=1}^{C}M_{ij}b_{ri}} - \frac{a_{rC} b_{ri}}{\sum_{i=1}^{C}M_{iC}b_{ri}} \right]\\
	\frac{\partial^{2} \ellN}{\partial M_{ij} \partial M_{i^{'}j^{'}}} &= I(j=j^{'}) \frac{1}{N}\sum_{r=1}^{\xi N} \frac{a_{rj}b_{ri}b_{ri^{'}}}{(\sum_{i=1}^{C}M_{ij}b_{ri})^{2}} + \frac{1}{N}\sum_{r=1}^{\xi N} \frac{a_{rC}b_{ri}b_{ri^{'}}}{(\sum_{i=1}^{C}M_{iC}b_{ri})^{2}}\\
	\frac{\partial^{3} \ellN}{\partial M_{ij} \partial M_{i^{'}j^{'}}\partial M_{i^{''}j^{''}}} &= -I(j=j^{'}=j^{''}) \frac{2}{N}\sum_{r=1}^{\xi N} \frac{a_{rj}b_{ri}b_{ri^{'}}b_{ri^{''}}}{(\sum_{i=1}^{C}M_{ij}b_{ri})^{3}} + \frac{2}{N}\sum_{r=1}^{\xi N} \frac{a_{rC}b_{ri}b_{ri^{'}}b_{ri^{'i}}}{(\sum_{i=1}^{C}M_{iC}b_{ri})^{3}}
	\end{aligned}
	\end{equation}
	
	Hence $\bH_N(\tM)$ has $C \times C$ blocks of the form $\bH_{N,j}(\tM)+\bH_{N,C}(\tM)$ for $j=1,\ldots,C-1$ where 
	\begin{align*}
	\bH_{N,j}(\tM) = \frac{1}{N}\sum_{r=1}^{\xi N} \frac{a_{rj}\bb_{r}\bb_{r^{'}}}{(\sum_{i=1}^{C}M_{ij}b_{ri})^{2}} \geq \frac{1}{N}\sum_{r=1}^{\xi N} a_{rj}\bb_{r}\bb_{r^{'}}. 
	\end{align*}
	We have shown in (\ref{eq:poscov}) that $E_{r \in \calL}\; a_{rj}\bb_{r}\bb_{r^{'}} \succ 0$. Hence, there exists a set $\calA$ with $P(\calA) = 1$ such that $\frac{1}{N}\sum_{r=1}^{\xi N} a_{rj}\bb_{r}\bb_{r^{'}} \succ 0$ on $\calA$ for large enough $N$. On $\calA$, we thus have $\bH_{N,j}(\tM) \succ 0$ for $j=1,\ldots,C$ and all $\tM$. This in turn implies $\bH_N(\tM) = \mbox{bdiag}(\bH_{N,j}(\tM)) + \left((\bone\bone') \otimes \bH_{N,C}(\tM)\right) \succ 0$ on $\calA$. So $\ellN(\tM)$ is strictly convex.

	Finally, for part (iii), we note from (\ref{eq:gradfln}) that $\nabla^3 \ellN$ is continuous in a neighborhood of $\tM^0$. We also note that the denominators involve terms of the form are $(\sum_{i=1}^{C}M_{iC}b_{ri})^3$. As argued in the proof of Lemma \ref{lem:fl}, these are uniformly (free of $\tM$ and $b_r$) bounded away from zero in a neighborhood around $\tM^0$. Hence, $\nabla^3 \ellN$ is an uniformly bounded function in a neighborhood around $\tM^0$. 
	
	For part (iv), Lemma \ref{lem:fl} and Lemma \ref{lem:fnl} (parts  (i) and (ii)) are sufficient to establish this using Theorem 2.3 (Condition 3 $\implies$ Condition 1) of \cite{miller2019asymptotic}.
\end{proof}



\begin{lemma}\label{lem:fu} The following holds for $\ell_\calU$. 
	\begin{enumerate}[(i)]
		\item $\ell_\calU(\btheta) \geq \ell_\calU(\btheta^0)$ for all $\btheta \in \Theta$.
		\item $\ell_\calU(\tM,\tp)$ is twice continuously differentiable in a neighborhood around $\theta^0$. 
	\end{enumerate}
\end{lemma}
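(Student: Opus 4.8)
The plan is to reduce both parts to elementary properties of the Kullback--Leibler divergence, exploiting that $\ell_\calU$ depends on the data only through the first moment $E_\calU[\ba_r]$. For part (i), I would first expand the divergence as $D_{KL}(\ba_r \,||\, \bM'\bp) = \sum_j a_{rj}\log a_{rj} - \sum_j a_{rj}\log(\bM'\bp)_j$ and take the expectation $E_\calU$. Since $a_{rj}\in[0,1]$, both $a_{rj}$ and $a_{rj}\log a_{rj}$ are bounded, so the expectation is finite and the term $E_\calU[\sum_j a_{rj}\log a_{rj}]$ is a constant $C_0$ free of $\btheta$. By linearity together with the marginal first-moment condition (\ref{eq:margmean}) evaluated at the truth, $E_\calU[a_{rj}] = ({\bM^0}'\bp^0)_j =: q^0_j$, so that $\ell_\calU(\btheta) = C_0 - \sum_j q^0_j \log(\bM'\bp)_j$. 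I would then note that $\bq := \bM'\bp$ is itself a probability vector, since its entries sum to $\bone'\bM'\bp = \bp'\bM\bone = \bp'\bone = 1$ because $\bM$ is row-stochastic ($\bM\bone = \bone$); likewise $\bq^0 := {\bM^0}'\bp^0$ has positive entries (as $\btheta^0$ interior forces $\bM^0,\bp^0$ strictly positive), so $\ell_\calU(\btheta^0)$ is finite. Subtracting the value at $\btheta^0$ then gives
\[
\ell_\calU(\btheta) - \ell_\calU(\btheta^0) = \sum_j q^0_j\log\frac{q^0_j}{(\bM'\bp)_j} = D_{KL}(\bq^0 \,||\, \bM'\bp) \geq 0
\]
by Gibbs' inequality (non-negativity of the KL divergence), which is exactly part (i). Equality holds iff $\bM'\bp = \bq^0$, foreshadowing the non-identifiability of $\btheta$ from $\ell_\calU$ alone.

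For part (ii), I would reuse the same reduction $\ell_\calU(\btheta) = C_0 - \sum_j q^0_j\log(\sum_i M_{ij}p_i)$, which is now a \emph{deterministic} function of $(\bM,\bp)$: the expectation over $\ba$ has been absorbed into the constant $C_0$ and the fixed coefficients $q^0_j$, so no interchange of differentiation and integration is needed. Because $\btheta^0$ is an interior point of $\Theta = \calS_{C-1}^C \otimes \calS_{C-1}$, every entry of $\bM^0$ and $\bp^0$ is strictly positive, whence $(\sum_i M^0_{ij}p^0_i) = q^0_j > 0$ for each $j$. By continuity there is a neighborhood of $\btheta^0$ on which $\sum_i M_{ij}p_i > 0$ for all $j$; there each $\log(\sum_i M_{ij}p_i)$ is $C^\infty$ in $(\bM,\bp)$. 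Since $\bM$ and $\bp$ are affine (hence $C^\infty$) functions of the free parameters $(\tM,\tp)$, the composition is twice continuously differentiable in a neighborhood of $\btheta^0$, proving part (ii).

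The main obstacle --- such as it is --- is recognizing the clean reduction to $D_{KL}(\bq^0 \,||\, \bM'\bp)$: the key moves are using linearity of expectation with the first-moment condition to replace $E_\calU[a_{rj}]$ by $q^0_j$, and then checking that $\bM'\bp$ is a genuine probability vector so that Gibbs' inequality applies verbatim. For differentiability, the only point requiring care is the strict positivity of the arguments of the logarithms in a neighborhood, which is guaranteed precisely by $\btheta^0$ lying in the interior of the parameter space.
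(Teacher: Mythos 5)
Your proof is correct, and part (i) is essentially the paper's argument in more explicit form: the paper writes $\ell_\calU(\bM,\bp)=\ell_\calU(\bI,\bM'\bp) \geq \inf_{\bq}\ell_\calU(\bI,\bq)$ and invokes the fact that $\bq \mapsto E_\calU[D_{KL}(\ba\,||\,\bq)]$ is minimized at $\bq = E_\calU(\ba)={\bM^0}'\bp^0$, which is precisely the cross-entropy (Gibbs' inequality) fact you prove by exhibiting the difference as $D_{KL}(\bq^0\,||\,\bM'\bp)\geq 0$; your version has the small added benefit of displaying the equality case. Where you genuinely depart from the paper is part (ii). The paper differentiates under the expectation sign, justifying the interchange by dominated convergence exactly as in Lemma \ref{lem:fl}, and then checks that $\nabla^3\ell_\calU$ is bounded and continuous near $\btheta^0$ because the numerators involve only $a_{rj}\leq 1$ and the denominators $\sum_i M_{ij}p_i$ stay above a constant $K>0$. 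You avoid any interchange of limits: since the integrand of $\ell_\calU$ depends on $\ba$ only linearly (the logarithms involve $\bM'\bp$, not $\ba$), linearity of expectation collapses $\ell_\calU$ to the deterministic closed form $C_0-\sum_j q^0_j\log\left(\sum_i M_{ij}p_i\right)$, whose $C^\infty$ smoothness near the interior point $\btheta^0$ is immediate. Your route is more elementary and in fact yields more (smoothness of all orders, hence continuity and local boundedness of $\nabla^3\ell_\calU$ on compact sub-neighborhoods, which is what Lemma \ref{lem:f} and the asymptotic normality argument actually consume). Note that this collapse is special to $\ell_\calU$: for $\ell_\calL$ the variable $\bb_r$ sits inside the logarithm, so no such reduction exists and the paper's DCT machinery is unavoidable there --- the paper simply reuses that machinery for $\ell_\calU$ rather than exploiting the linear structure as you do. One pedantic point worth a sentence in your write-up: when $(\bM'\bp)_j=0$ for some $j$ with $q^0_j>0$, both $\ell_\calU(\btheta)$ and $D_{KL}(\bq^0\,||\,\bM'\bp)$ equal $+\infty$, and since $\ell_\calU(\btheta^0)$ is finite the claimed inequality still holds; your appeal to the usual KL conventions covers this, but it deserves to be stated explicitly rather than left implicit in the subtraction step.
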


\begin{proof} For part (i), we have $\ell_\calU(\btheta)=\ell_\calU(\bM,\bp)=\ell_\calU(\bI,\bM'\bp) \geq \inf_\bq \ell_\calU(\bI,\bq)$. Now, $\ell_\calU(\bI,\bq) = E_{r \in \calU}\; d_{KL}(\ba_r \| \bq)$ is minimized as $\bq = E_{r \in \calL}(\ba_r) = \bM^{0'}\bp^0$. Hence, $\ell_\calU(\btheta) \geq \ell_\calU(\bI,\bM^{0'}\bp^0) = \ell_\calU(\bM^0,\bp^0) = \ell_\calU(\btheta^0)$. 
	
	For part (ii), similar to Lemma \ref{lem:fl} part (iii) we can differentiate within the expectation signs, and the resulting third derivative will have numerators as functions of $a_{rj}$'s which are bounded by 1, and denominators as functions of $\sum_i M_{ij}p_i$ which is bounded from below by $K$ in a neighborhood around $\btheta^0$. Hence, $\nabla^3 \ell_\calU$ will be bounded and continuous in the neighborhood. 
\end{proof}

\begin{lemma}\label{lem:fnu} The following holds for $\ellu$:
	\begin{enumerate}[(i)]
		\item $\lim_N \ellu(\btheta)/N = \ell_\calU(\btheta) = E_\calU(D_{KL} (\ba || {\bM}'\bp))$.
		\item $\nabla^3 \ellu(\btheta)/N$ exists and is uniformly bounded in a neighborhood around $\btheta^0$. 
		\item (Weak identifiability of $\tM,\tp$). $\lim_N \inf_{\btheta \in \Theta} \ellu(\btheta)/N$ exists and equals  $\ell_\calU(\btheta^0) = E_\calU(D_{KL} (\ba || {\bM^0}'\bp^0))$.
	\end{enumerate}
\end{lemma}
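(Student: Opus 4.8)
All three parts parallel results already obtained for the labeled loss in Lemmas~\ref{lem:fl}--\ref{lem:fnl} and for the population loss $\ell_\calU$ in Lemma~\ref{lem:fu}, so the plan is to dispose of (i) and (ii) quickly and reserve the effort for the weak-identifiability claim (iii). For (i), I would note that $\calU$ is a random sample, so the $\ba_r$, $r\in\calU$, are i.i.d.\ and $\ellu(\btheta)=\frac1N\sum_{r\in\calU}D_{KL}(\ba_r || \bM'\bp)$. At a fixed interior $\btheta$ the quantity $\sum_i M_{ij}p_i$ is bounded away from $0$, so each summand is bounded (using $x\log x$ bounded on $[0,1]$), and the strong law of large numbers gives $\ellu(\btheta)\to E_\calU[D_{KL}(\ba || \bM'\bp)]=\ell_\calU(\btheta)$ almost surely, exactly as in Lemma~\ref{lem:fnl}(i). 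For (ii), only the term $-\frac1N\sum_{r}\sum_j a_{rj}\log(\sum_i M_{ij}p_i)$ depends on $\btheta$; differentiating up to third order in the free coordinates $(\tM,\tp)$---handling the reparametrization $M_{iC}=1-\sum_{j<C}M_{ij}$, $p_C=1-\sum_{i<C}p_i$ by the chain rule---yields terms with numerators built from $a_{rj},M_{ij},p_i$ (all $\le1$) and denominators equal to powers of $\sum_i M_{ij}p_i$ of degree at most three. Since $\sum_i M_{ij}p_i\ge K>0$ uniformly near the interior point $\btheta^0$ (the same bound used in Lemma~\ref{lem:fl}(iii)), $\nabla^3\ellu$ exists and is uniformly bounded there.

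The substance is in (iii), and the key idea is to compute the empirical infimum in closed form rather than attempt a uniform law of large numbers, which would be fragile because $D_{KL}$ is unbounded near $\partial\Theta$. I would first invoke the invariance $\ell_\calU(\bM,\bp)=\ell_\calU(\bI,\bM'\bp)$ noted after~(\ref{eq:eeu}); because $\bM=\bI$ is an admissible (corner) point of $\Theta$ and $\bM'\bp$ sweeps out the whole simplex $\widetilde S_C$, this collapses the minimization to
\[
\inf_{\btheta\in\Theta}\ellu(\btheta)=\inf_{\bq\in\widetilde S_C}\frac1N\sum_{r\in\calU}D_{KL}(\ba_r || \bq)=\inf_{\bq}\Big(\tfrac1N\sum_{r}\sum_j a_{rj}\log a_{rj}-\sum_j\bar a_{Nj}\log q_j\Big),
\]
where $\bar a_{Nj}=\frac1N\sum_{r} a_{rj}$. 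By Gibbs' inequality the cross-entropy $-\sum_j\bar a_{Nj}\log q_j$ is minimized over the simplex at $\bq=\bar{\ba}_N$, so the empirical infimum equals $\frac1N\sum_{r}D_{KL}(\ba_r || \bar{\ba}_N)$. This is precisely the finite-sample analogue of the argument used at the population level in Lemma~\ref{lem:fu}(i).

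I would then finish by two applications of the strong law: $\bar{\ba}_N\to E_\calU[\ba]={\bM^0}'\bp^0$ almost surely, and $\frac1N\sum_{r}\sum_j a_{rj}\log a_{rj}\to E_\calU[\sum_j a_j\log a_j]$ (finite since $x\log x$ is bounded on $[0,1]$). Because ${\bM^0}'\bp^0$ is an interior point of the simplex, its coordinates are bounded away from $0$, so $\bar a_{Nj}$ is eventually positive and $\log\bar a_{Nj}$ is continuous at the limit; passing to the limit gives
\[
\lim_N\inf_{\btheta\in\Theta}\ellu(\btheta)=E_\calU\Big[\sum_j a_j\log a_j\Big]-\sum_j({\bM^0}'\bp^0)_j\log({\bM^0}'\bp^0)_j=E_\calU[D_{KL}(\ba || {\bM^0}'\bp^0)]=\ell_\calU(\btheta^0).
\]
The word ``weak'' is apt: $\btheta^0$ is only \emph{a} minimizer, since every $\btheta$ with $\bM'\bp={\bM^0}'\bp^0$ attains the same value, which is exactly why Assumption~2 and the labeled loss $\ellN$ are needed downstream to isolate $\btheta^0$.

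I expect the main obstacle to be the interchange of the infimum and the limit in (iii): the naive route via uniform convergence of $\ellu$ to $\ell_\calU$ breaks down on the boundary where the loss diverges, so the whole argument hinges on first solving the finite-$N$ minimization explicitly (the minimizer is the empirical mean $\bar{\ba}_N$) and only afterwards sending $N\to\infty$. The only remaining delicacy is the degenerate case $\bar a_{Nj}=0$, which is handled by the observation that $\bar a_{Nj}\to({\bM^0}'\bp^0)_j>0$ forces all coordinates away from $0$ for large $N$.
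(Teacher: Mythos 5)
Your proposal is correct and follows essentially the same route as the paper: parts (i) and (ii) by the law of large numbers and the bounded-denominator argument near the interior point, and part (iii) by using the invariance $\ellu(\bM,\bp)=\ellu(\bI,\bM'\bp)$ to collapse the infimum over $\Theta$ to an infimum over $\bq$ in the simplex, solving it exactly at the empirical mean $\bar{\ba}_N$ via Gibbs' inequality, and then passing to the limit. Your explicit remarks on why the finite-$N$ minimization must precede the limit and on the degenerate case $\bar a_{Nj}=0$ are sound refinements of points the paper leaves implicit.
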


\begin{proof}
	Part (i) follows similar to part (i) of Lemma \ref{lem:fnl}. 
	Proof of part (ii) is also similar to proof of Lemma \ref{lem:fnl} part (iii) as the denominators of the third derivative will involve $(\sum_i M_{ij}p_i)^3$ which is bounded away from zero in a neighborhood of $\bM^0$. 
	
	For part (iii), we first note that $
	\inf_{\tM, \tp}\ellu(\tM, \tp) = \inf_{\tM, \tp}\ellu(\widetilde \bI, \widetilde{\bM'\bp}) =  \inf_{\tilde{\bq}}\tilde{\ell}_{\mathcal{U}, N}(\tilde{\bq})
	$
	where 
	\begin{align*}
	\tilde{\ell}_{\mathcal{U}, N}(\tilde{\bq}) =  \ellu(\tilde{\mathbf{I}}, \tilde{\bq})
	= -\frac{1}{N}\sum_{r=1}^{N}\sum_{j=1}^{C}a_{rj}\frac{\log(q_{j})}{a_{rj}}. \numberthis
	\end{align*}
	
	Clearly, $\tilde{\ell}_{\mathcal{U}, N}(\tilde{\bq})$ is minimized at $\hat{\tilde{\bq}}$ where $\hat{\tilde{q}}_{j} = \frac{1}{N}\sum_{r=1}^{N}a_{rj}$. Then
	\begin{align*}
	\inf_{\btheta \in \Theta}\ellu(\tM, \tp) &= \tilde{\ell}_{\mathcal{U}, N}(\hat{\tilde{\bq}})\\
	&= -\sum_{j=1}^{C}\log(\hat{\tilde{q}}_{j})\hat{\tilde{q}}_{j} + \frac{1}{N}\sum_{r=1}^{N}\sum_{j=1}^{C}a_{rj}\log(a_{rj})\\
	&\xrightarrow{N \rightarrow \infty}-\sum_{j=1}^{C}\left[\log(\sum_{i=1}^{C}M_{ij}^{0}p_{i}^{0})\sum_{i=1}^{C}M_{ij}^{0}p_{i}^{0} - E_{r \in \mathcal{U}}\left[a_{rj}\log(a_{rj}) \right] \right]\\
	&= -\sum_{j=1}^{C} E_{r \in \mathcal{U}}\left[a_{rj}\log\left(\frac{\sum_{i=1}^{C}M_{ij}^{0}p_{i}^{0}}{a_{rj}} \right) \right]\\
	&= E_{\calU} D_{KL}(\mathbf{a} || \bM^{0'}\bp^0).
	\end{align*}
\end{proof}

\noindent We now return to the loss function for the full data, $f_{N}$, to show 
\begin{lemma}\label{lem:f} The following holds for $f_N$ and $f$.
	\begin{enumerate}[(i)]
		\item $\lim_N  f_N(\tM, \tp) = f(\tM, \tp)$.
		\item There exists 
		$\bOmega:=\bOmega(\btheta^0) \succ 0$ and $ \bOmega^{-1/2}\nabla f_N(\btheta^0)/\sqrt N \to_d N(0,\bI)$ 
		\item $f_N$ and $f$ are twice continuously differentiable and the third derivative of $f_N$ is uniformly bounded around any small neigbhborhood of $\btheta^0$. 
		\item Under Assumptions 1 and 2, $f(\btheta) > f(\btheta^0)$ for all $\btheta \neq \btheta^0$. If $\bJ(\btheta)=\nabla^2 f(\btheta)$ then $\bJ := \bJ(\btheta^0) \succ 0$. 
	\end{enumerate}
\end{lemma}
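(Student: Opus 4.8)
The plan is to treat the four parts separately, leaning on the single-loss lemmas already established. Parts (i) and (iii) are essentially bookkeeping: since $f_N = \ellN + \ellu$ and $f = \ell_\calL + \ell_\calU$, part (i) follows by adding the pointwise limits from Lemma \ref{lem:fnl}(i) and Lemma \ref{lem:fnu}(i). For part (iii), twice-continuous differentiability of $f$ and $f_N$ near $\btheta^0$ is inherited from Lemma \ref{lem:fl}, Lemma \ref{lem:fu}(ii) and Lemma \ref{lem:fnu}(ii), while the uniform third-derivative bound for $f_N$ is the sum of the uniform bounds in Lemma \ref{lem:fnl}(iii) and Lemma \ref{lem:fnu}(ii); note that $\ellN$ depends only on the $\tM$ coordinates, so there are no mixed third derivatives across the two pieces to control.

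For part (iv), I would first establish the strict global inequality $f(\btheta) > f(\btheta^0)$ for $\btheta \neq \btheta^0$ by a case analysis on $f = \ell_\calL + \ell_\calU$. By Lemma \ref{lem:fl}(i)--(ii), $\ell_\calL$ is strictly convex with vanishing gradient at $\tM^0$, so $\tM^0$ is its unique minimizer and $\ell_\calL(\tM) > \ell_\calL(\tM^0)$ whenever $\tM \neq \tM^0$; combined with $\ell_\calU(\btheta) \geq \ell_\calU(\btheta^0)$ from Lemma \ref{lem:fu}(i), this settles the case $\tM \neq \tM^0$. When $\tM = \tM^0$ but $\tp \neq \tp^0$, the $\ell_\calL$ terms cancel and I need $\ell_\calU(\bM^0,\bp) > \ell_\calU(\bM^0,\bp^0)$. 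Writing $\ell_\calU(\bM^0,\bp) = E_\calU D_{KL}(\ba \,\|\, \bM^{0\prime}\bp)$ and using that $\bq \mapsto E_\calU D_{KL}(\ba\,\|\,\bq)$ is uniquely minimized at $\bq = E_\calU(\ba) = \bM^{0\prime}\bp^0$ (Gibbs' inequality), Assumption 2 makes $\bp \mapsto \bM^{0\prime}\bp$ injective, so $\bp \neq \bp^0$ forces $\bM^{0\prime}\bp \neq \bM^{0\prime}\bp^0$ and hence strict inequality.

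For $\bJ := \nabla^2 f(\btheta^0) \succ 0$, I would run a block argument on directions $(\bu,\bw)$ in the $(\tM,\tp)$ coordinates. Since $\ell_\calL$ depends only on $\tM$, $\bJ$ equals $\nabla^2\ell_\calU(\btheta^0)$ plus the matrix embedding $\bH_\calL := \nabla^2\ell_\calL(\tM^0)$ in the $\tM$-block, where $\bH_\calL \succ 0$ (Lemma \ref{lem:fl}(ii)) and $\nabla^2\ell_\calU(\btheta^0) \succeq 0$ (as $\btheta^0$ minimizes $\ell_\calU$ by Lemma \ref{lem:fu}(i)). If $\bu \neq \bzero$ the $\bH_\calL$ contribution is strictly positive; if $\bu = \bzero$ and $\bw \neq \bzero$, I reduce to showing the $\tp\tp$-block of $\nabla^2\ell_\calU(\btheta^0)$ is positive definite, which holds because the $\bq$-Hessian of $E_\calU D_{KL}(\ba\,\|\,\bq)$ at $\bq = \bM^{0\prime}\bp^0$ is positive definite (a diagonal-plus-rank-one form) and is pulled back through the invertible (Assumption 2) linear map $\tp \mapsto \widetilde{\bM^{0\prime}\bp}$.

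Part (ii) is the substantive step and the main obstacle. I would write $\sqrt N\,\nabla f_N(\btheta^0) = N^{-1/2}\sum_{r\in\calU}\bs_r^\calU + N^{-1/2}\sum_{r\in\calL}\bs_r^\calL$, where $\bs_r^\calU$ and $\bs_r^\calL$ are the per-instance gradients of the unlabeled and labeled summands at $\btheta^0$, the latter having zero entries in the $\tp$ coordinates. The estimating-equation identities (\ref{eq:eel}) and (\ref{eq:eeu}), together with Lemma \ref{lem:fl}(i), guarantee these scores are mean-zero at $\btheta^0$, so the multivariate CLT applies within each i.i.d. group; independence of $\calL$ and $\calU$ then yields $\sqrt N\,\nabla f_N(\btheta^0) \to_d N(\bzero,\bOmega)$ with $\bOmega = \mathrm{Cov}(\bs^\calU) + \xi\,\mathrm{Cov}(\bs^\calL)$, after which the claim follows by standardizing with $\bOmega^{-1/2}$. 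The hard part is proving $\bOmega \succ 0$: since both summands are positive semidefinite, $\bv^\prime\bOmega\bv = 0$ forces $\bv^\prime\mathrm{Cov}(\bs^\calL)\bv = 0$ and $\bv^\prime\mathrm{Cov}(\bs^\calU)\bv = 0$ simultaneously. I would mirror the $\bJ$ argument, invoking Assumption 1 (via $E_\calL[\bb_r\bb_r^\prime]\succ 0$, as in Lemma \ref{lem:fl}(ii)) to make $\mathrm{Cov}(\bs^\calL)$ positive definite on the $\tM$ coordinates, forcing $\bv_{\tM}=\bzero$, and then Assumption 2 to make the $\tp$-block of $\mathrm{Cov}(\bs^\calU)$ positive definite, forcing $\bv_{\tp}=\bzero$. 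Cleanly disentangling the two data sources and attaching exactly the right assumption to each coordinate block is where the real care is needed.
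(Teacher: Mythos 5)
Your proposal tracks the paper's own proof closely in overall structure: parts (i) and (iii) are handled exactly as in the paper (summing the corresponding parts of Lemmas \ref{lem:fl}, \ref{lem:fnl}, \ref{lem:fu}, \ref{lem:fnu}); the strict inequality in part (iv) is the paper's argument verbatim (equality forces $\tM=\tM^0$ by strict minimization of $\ell_\calL$, then unique minimization of $\bq \mapsto E_\calU D_{KL}(\ba\,\|\,\bq)$ plus Assumption 2 forces $\bp=\bp^0$); and part (ii) uses the same decomposition of $\nabla f_N(\btheta^0)$ into the two independent i.i.d.\ score sums, yielding exactly the paper's $\bOmega$ in (\ref{eq:omega}). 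On one point you are more careful than the paper: for $\bJ \succ 0$ the paper only says that $\btheta^0$ is the global minimum and $f$ is twice differentiable, which by itself delivers only $\bJ \succeq 0$; your block argument (strict positivity of $\nabla^2\ell_\calL(\tM^0)$ in the $\tM$ directions, plus positive definiteness of the $\tp\tp$-block of $\nabla^2\ell_\calU(\btheta^0)$ obtained by pulling the $\bq$-Hessian back through the injective map $\tp \mapsto \bM^{0'}\bp$) supplies the missing computation, and it is the same computation the paper performs inside the proof of Theorem \ref{lem:id}, where $\bU E_\calU \bD \bU' \succ 0$ is established from Assumption 2.

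The genuine gap is your argument for $\bOmega \succ 0$. You claim Assumption 1, via $E_\calL[\bb_r\bb_r'] \succ 0$ ``as in Lemma \ref{lem:fl}(ii),'' makes $\mathrm{Cov}(\bs_r^\calL)$ positive definite on the $\tM$ coordinates. The analogy does not transfer: the Hessian of $\ell_\calL$ is an expectation of second derivatives whose positivity uses only the conditional mean $E[\ba_r \given \bb_r]=\bM^{0'}\bb_r$, whereas the score covariance is the variance of first derivatives and therefore depends on the conditional \emph{second} moment of $\ba_r$ given $\bb_r$, which the M-free first-moment assumption leaves completely unconstrained. Concretely, the labeled score at $\btheta^0$ has components $b_{ri}\bigl(a_{rj}/(\bM^{0'}\bb_r)_j - a_{rC}/(\bM^{0'}\bb_r)_C\bigr)$, which have conditional mean zero given $\bb_r$; if the classifier is perfectly calibrated, i.e.\ $\ba_r = \bM^{0'}\bb_r$ almost surely (fully consistent with (\ref{eq:compreg}) and with Assumptions 1--2), then every labeled score vanishes identically and $\bV_{g,\calL}=\bO$. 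In that case the $\tM\tM$-block of $\bOmega$ reduces to $\bU^{0'}_{\bM}\bV_{A,\calU}\bU^{0}_{\bM}$, whose rank is at most $\rank{\bU^{0}_{\bM}} = C-1 < C(C-1)$ regardless of the unlabeled distribution, so $\bOmega$ is singular. Hence $\bOmega \succ 0$ cannot be derived from Assumptions 1--2 alone; it is an additional non-degeneracy condition on the second moments of the classifier output. To be fair, the paper's own proof of part (ii) never verifies positive definiteness of $\bOmega$ either---it only derives the form (\ref{eq:omega}) and the CLT---so your attempt goes beyond the paper here, but the specific route you propose would fail.
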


\begin{proof} Part (i) is proved directly from parts (i) of Lemmas \ref{lem:fnl} and \ref{lem:fnu}. 
	
	For part (ii), for any vector $\bx$, denoting $1/\bx =(1/x_1,\ldots,1/x_C)'$, we have 
	\begin{equation}\label{eq:gr}
	\begin{aligned}
	\frac 1 {\sqrt {N}} \frac{\partial \ellN}{\partial M_{ij}}\Big|_{\bM^0} =& \frac {\sqrt \xi}{\sqrt{\xi N}} \sum_{r=1}^{\xi N} b_{ri} \left(\frac {a_{rj}}{\sum_i M^0_{ij}b_{ri}} - \frac {a_{rC}}{\sum_i M^0_{iC}b_{ri}}  \right)\\
	\implies \frac 1 {\sqrt {N}} \frac{\partial \ellN}{\partial \tM}\Big|_{\bM^0} =& \frac {\sqrt \xi}{\sqrt {\xi N}} \sum_{r=1}^{\xi N} \left(\bb_r \otimes \bD_r \right)\ba_r \mbox{ where } \bD_r=\left[diag(1/\bq_{r,1:C-1});-1/\bq_{r,C}\bone_{C-1 \times 1}\right] \mbox{ with } \bq_r = \bM^{0'}\bb_r\\
	& \to_d N(0,\xi \bV_{g,\calL}) \mbox{ where } \bV_{g,\calL} = Cov(\bg_{r})_{r\in \calL} \mbox{ with } \bg_r = (\bb_r \otimes \bD_r)\ba_r.
	\end{aligned}
	\end{equation}
	
	Letting $\bq=\bM'\bp$ and $\bV_{A,\calU}=Cov(\ba_{r})_{r \in \calU}$, we have
	\begin{align*}
	\frac 1 {\sqrt {N}} \frac{\partial \ellu}{\partial M_{ij}}\Big|_{\btheta^0} =& \frac 1{\sqrt N} \sum_{r=1}^N p^0_i \left(\frac {a_{rj}}{\sum_i M^0_{ij}p^0_i} - \frac {a_{rC}}{\sum_i M^0_{iC}p^0_i}  \right)\\
	=& \frac 1{\sqrt N} \sum_{r=1}^N p^0_i \left(\frac {a_{rj}}{ q^0_{j}} - \frac {a_{rC}}{ q^0_{C}}  \right) \\
	=& \frac 1{\sqrt N} \sum_{r=1}^N \bu_{M_{ij}}^{0'}\ba_r \mbox{ where } \bu_{M_{ij}}^{0}=p^0_i * (0,\ldots,0,1/q_j^0,0,\ldots,0,-1/q_C^0)'\\
	& \to_d N(0,\bu_{M_{ij}}^{0'} \bV_{A,\calU} \bu^0_{M_{ij}}) 
	\end{align*}
	
	Let $\otimes$ denote the Kronecker product. We have, $\frac 1 {\sqrt {N}} \nabla_{\bM^0} \ell_\calU \to_d N(0,\bU^{0'}_{\bM} \bV_{A,\calU} \bU^0_{\bM})$ where 
	\begin{equation}\label{eq:um}
	\begin{aligned}
	\bU^{0'}_{\bM}&=(\bu^0_{M_{11}},\ldots,\bu^0_{M_{1,C-1}},\bu^0_{M_{21}},\ldots,\bu^0_{M_{C,C-1}})'\\
	&= \bp^0 \otimes \bD \mbox{ where } \bD_{C-1 \times C}=\left[diag(1/\bq^0_{1:C-1})_{C-1 \times C-1}; - 1/q^0_C \bone_{C-1 \times 1} \right]. 
	\end{aligned}   
	\end{equation}
	Similarly, we have 
	\begin{align*}
	\frac 1 {\sqrt {N}}  \frac{\partial \ellu}{\partial p_{i}}\Big|_{\btheta^0} =& \frac 1{\sqrt N} \sum_{r=1}^N \sum_{j=1}^C a_{rj} \left(\frac {M^0_{ij}-M^0_{Cj}}{\sum_i M^0_{ij}p^0_i}\right) \\
	=& \frac 1{\sqrt N} \sum_{r=1}^N \bu^{0'}_{p_i}\ba_r \mbox{ where } \bu^{0}_{p_i}=\left(\frac {M^0_{i1}-M^0_{C1}}{ q^0_1}, \frac {M^0_{i2}-M^0_{C2}}{ q^0_2}, \ldots, \frac {M^0_{i,C}-M^0_{C,C}}{ q^0_C}\right)'\\
	& \to_d N(0,\bu_{p_{i}}^{0'} \bV_{A,\calU} \bu^0_{p_{i}})
	\end{align*}
	Consequently, letting $\oslash$ denote the elementwise division of matrices, we have $\frac 1 {\sqrt {N}} \nabla_{\bp^0} \ell_\calU \to_d N(0,\bU^{0'}_{\bp} \bV_{A,\calU} \bU^0_{\bp})$ where 
	\begin{equation}\label{eq:up}
	\begin{aligned}
	\bU^{0'}_{\bp}&=(\bu^0_{p_{1}},\ldots,\bu^0_{p_{C-1}})'\\
	&= (\bM^0_{1:C-1,1:C} - \bone_{C-1 \times 1} \otimes \bM^{0'}_{C*}) \oslash (\bone_{C-1 \times 1} \otimes \bq^{0'})
	\end{aligned}
	\end{equation}
	
	Combining, all this we we have $f_N/\sqrt N \to_d N(0,\Omega(\btheta^0))$ where  
	\begin{equation}\label{eq:omega}
	\bOmega(\btheta^0) = \left(
	\begin{array}{c}
	\bU^{0'}_{\bM}  \\
	\bU^{0'}_{\bp}
	\end{array}\right) \bV_{A,\calU} (\bU^{0}_{\bM}, \bU^{0}_{\bp}) + \xi \left(\begin{array}{cc}
	\bV_{g,\calL}  & \bO \\
	\bO   & \bO
	\end{array}
	\right).
	\end{equation}
	
	Part (iii) is immediately proved due to Lemmas \ref{lem:fl}(iii), \ref{lem:fnl}(iii), \ref{lem:fu}(ii) and \ref{lem:fnu}(ii). 
	
	For part (iv), we have from Lemma \ref{lem:fl} (i) and (ii) that $\ell_\calL(\tM) > \ell_\calL(\tM^0)$ for any $\tM \neq \tM^0$, and from \ref{lem:fu} (i) that $\ell_\calU(\btheta) \geq \ell_\calU(\btheta^0)$ for any $\btheta \neq \btheta^0$. Combining, we have $f(\btheta) \geq f(\btheta^0)$ for all $\btheta \neq \btheta^0$.
	
	To prove sharp inequality, let $\btheta^1=(\tM^1,\bp^1)$ be such that $f(\btheta^1) = f(\btheta^0)$. Since $\ell_\calL(\tM^1) > \ell_\calL(\tM^0)$ for $\tM^1\neq \tM^0$, to have equality, $\tM^1=\tM^0$. 
	
	Now let $\tilde \ell_\calU(\bq)=\ell_\calU(\bI,\bq)=E_{r \in \calU} d_{KL}(\ba_r || \bq)$. We know $\ell_\calU$ is a strictly convex function in $\bq$ minimized at $\bq^0 = \bM^{0'}\bp$. 
	Since, $f(\btheta^1) = f(\btheta^0)$ and $\ell_\calL(\tM^1) = \ell_\calL(\tM^0)$, we must have $\tilde\ell_\calU(\bM^{1'}\bp^1)=\tilde\ell_\calU(\bM^{0'}\bp^0) = \inf_\bq\; \tilde \ell_\calU(\bq)$ implying $\bM^{1'}\bp^1=\bM^{0'}\bp^0$ and consequently $\bp^1=\bp^0$ due to Assumption 2. 
	
	This proves that $\btheta^0$ is the global minima of $f$. As $f$ has already been proved to be twice differentiable in part (ii), we have $\bJ \succ 0$. 
\end{proof}


\begin{proof}[Proof of Theorem \ref{lem:id}]
	Let $B_\eps(\btheta^0)$ be the $\ell_1$ ball of radius $\eps$ around $\btheta^0$. For part (i), we 
	first note that
	\begin{equation*}
	\{||\btheta - \btheta^{0}||_1 > \epsilon\} \subseteq \{||\tM - \tM^{0}||_1 > h \} \cup \{||\tp - \tp^{0}||_1  > \epsilon/2, ||\tM - \tM^{0}||_1 <h \}
	\end{equation*}
	where $h < \eps/2$ is a fixed, but suitably small constant which we will specify later. 
	We begin with the fact that
	\begin{equation}\label{eq:rhsterms}
	\begin{aligned}
	\text{lim inf}_{N} \text{ inf}_{\btheta \notin B_{\epsilon}(\btheta^{0})}f_{N}(\btheta) \geq \min \{&\text{lim inf}_{N} \text{ inf}_{\{\btheta:||\tM - \tM^{0}||_1 > h\}}f_{N}(\tM, \tp),\\ &\text{lim inf}_{N} \text{ inf}_{\btheta:||\tp - \tp^{0}||_1 >\epsilon/2, ||\tM - \tM^{0}||_1 <h}f_{N}(\tM, \tp) \} \end{aligned}
	\end{equation}
	
	and will show that each of the two terms in the right hand side of (\ref{eq:rhsterms}) is greater than $f(\tM^{0}, \tp^{0})$. Using Lemma \ref{lem:fnl}(iv) and Lemma \ref{lem:fnu}(iii) 
	we immediately have
	$\liminf_{N} \inf_{\{\btheta:||\tM - \tM^{0}||_1 > h\}}f_{n}(\tM, \tp) > f(\tM^{0}, \tp^{0})$.
	
	Focusing on the other term, we note that 
	\begin{equation}
	f_{N}(\tM, \tp) - f(\tM^{0}, \tp^{0}) = \ellN(\tM) - \ell_{\mathcal{L}}(\tM^{0}) + \ellu(\tM, \tp) - \ellu(\tM^{0}, \tp) + \ellu(\tM^{0}, \tp)  - \ell_{\mathcal{U}}(\tM^{0}, \tp^{0}).
	\end{equation}\label{eq:appendixineq}
	Letting $A = \{||\tp - \tp^{0}||_1 > \epsilon/2, ||\tM - \tM^{0}||_1 <h \}$  and using (\ref{eq:appendixineq}) we have
	\begin{equation}\label{eq:appendixineq2}
	\begin{aligned}
	\text{lim inf}_{N}\text{ inf}_{A}\;f_{N}(\tM, \tp) - f(\tM^{0}, \tp^{0}) > &-\text{lim sup}_{N}\text{ sup}_{A}|\ellN(\tM^{0}) - \ellN(\tM)|\\
	&-\text{lim sup}_{N}|\ellN(\tM^{0}) - \ell_{\mathcal{L}}(\tM^{0})|\\
	&-\text{lim sup}_{N}\text{sup}_{A}|\ellu(\tM, \tp) - \ellu(\tM^{0}, \tp)|\\
	&+\text{lim inf}_{N}\text{ inf}_{\{||\tp -\tp^{0}||_1 > \epsilon / 2\}}\ellu(\tM^{0}, \tp) - \ell_{\mathcal{U}}(\tM^{0}, \tp^{0})
	\end{aligned}
	\end{equation}
	Note that the second term in (\ref{eq:appendixineq2}) is 0 as $\ellN(\tM^{0}) \rightarrow \ell_{\mathcal{L}}(\tM^{0})$. Focusing on the first term, we use the mean value theorem to have:
	\begin{align*}
	\text{lim sup}_{N}\text{ sup}_{A}|\ellN(\tM^{0}) - \ell_{\mathcal{L}}(\tM^{0})| &\leq \text{lim sup}_{N}\text{ sup}_{\{||\tM - \tM^{0}||_1 < h\}}\left|\max\limits_{i,j}\frac{\partial \ellu(\tM)}{\partial M_{ij}} \right|\times \text{ sup}_{\{||\tM - \tM^{0}||_1 < h\}} ||\tM - \tM^{0}||_{1}\\
	&\leq Ch \times \text{lim sup}_{N}\text{sup}_{\{||\tM - \tM^{0}||_1 < h\}}\max\limits_{i,j}\frac{1}{N} \sum_{r=1}^{N\xi} \left(\frac{a_{rj}b_{ri}}{\sum_{i=1}^{C}M_{ij}b_{ri}} + \frac{a_{rC}b_{ri}}{\sum_{i=1}^{C}M_{iC}b_{ri}}\right) 
	\end{align*}
	
	Since $\tM^{0}$ is an interior point, for small enough $\epsilon$, we have $\forall i, j$, $M_{ij} \geq K(\epsilon)$, and thus
	$ \text{sup}_{\{||\tM - \tM^{0}||_1 < h \}}\frac{1}{\sum_{i=1}^{C}M_{ij}b_{ri}} \leq \frac{1}{K(\epsilon)}. $
	which further implies
	\begin{align*}
	\text{limsup}_{N}\text{sup}_{A}|\ellN(\tM^{0}) - \ell_{\mathcal{L}}(\tM^{0})| \leq \text{limsup}_{N} \frac{Ch}{K(\epsilon)}\sum_{i,j}\frac{1}{N}\sum_{r=1}^{N\xi}a_{rj}b_{ri} \leq \frac{C^3h}{K(\epsilon)} \;.
	\end{align*}
	
	Using the same logic as above, we also have
	\begin{equation*}
	\text{limsup}_{N}\text{sup}_{A}|\ellu(\tM, \tp) - \ellu(\tM^{0}, \tp)| \leq \frac{C^3h}{K(\epsilon)} \;.
	\end{equation*}
	
	Combining this, we have 
	\begin{equation}
	\text{lim inf}_{N}\text{ inf}_{A}f_{N}(\tM, \tp) - f(\tM^{0}, \tp^{0}) > \frac{-2C^3h}{K(\epsilon)} + \text{lim inf}_{N}\text{ inf}_{\{||\tp -\tp^{0}||_1 > \epsilon / 2\}}\ellu(\tM^{0}, \tp) - \ell_{\mathcal{U}}(\tM^{0}, \tp^{0})
	\end{equation}
	
	We define $\ellu^{*}(\tp) = \ellu(\tM^{0}, \tp)$ which as $N \rightarrow \infty$ goes to $\ell_{\mathcal{U}}^{*}(\tp)=\ell_\calU(\tM^0,\tp)$. 
	First we will show that $\ellu^{*}(\tp)$ is convex. We have
	\begin{equation*}
	\ellu^{*} (\tp)= -\frac{1}{N}\sum_{r=1}^{N}\sum_{j=1}^{C}a_{rj}\log\left( \frac{\sum_{i=1}^{C}M_{ij}^{0}p_{i}}{a_{rj}}\right)
	\end{equation*}
	which implies
	\begin{equation*}
	\frac{\partial^{2}\ellu^{*} (\tp)}{\partial p_{i}\partial p_{i^{'}}} = \sum_{j=1}^{C}\hat{\tilde{q}}_{j}\frac{(M_{ij}^{0} - M_{Cj}^{0})(M_{i^{'}j}^{0} - M_{Cj}^{0})}{(\sum_{i=1}^{C}M_{ij}^{0}p_{i})^2}
	\end{equation*}
	
	Now letting
	\begin{align*}
	\mathbf{D} = \text{diag}(d_{j}) \mbox{ where } d_{j} &= \frac{\hat{\tilde{q}}_{j}}{(\sum_{i=1}^{C}M_{ij}^{0}p_{i})^2}, \\
	\mathbf{U} = (\mathbf{u}_{1}, \ldots, \mathbf{u}_{C}) \mbox{ where } \mathbf{u}_{j} &= \bM_{1:(C-1), j}^{0} - M_{Cj}^{0} \bone_{C-1}
	\end{align*}
	we have $
	\nabla^{2}\ellu^{*} (\tp) = \sum_{j=1}^{C}d_{j}\mathbf{u}_{j}\mathbf{u}_{j^{'}} = \mathbf{U}\mathbf{D}\mathbf{U}^{'} $. As $\hat{\tilde{q}}_{j} = \frac 1N \sum_{r=1}^N a_{rj} \to E_\calU(a_{rj}) = \sum_i M^{0}_{ij}p^0_i > 0$, there exists a set $\calA$ with $P(\calA)=1$, such that on $\calA$, $\hat{\tilde{q}}_{j} > 0$ for large enough $N$. 
	Now note that the rows of $\mathbf{U}$ are linear combinations of rows of $\bM^{0}$:
	\begin{equation*}
	\mathbf{U} = \begin{pmatrix}
	e_{1}^{'} - e_{C}^{'}\\
	e_{2}^{'} - e_{C}^{'}\\
	\vdots\\
	e_{C-1}^{'} - e_{C}^{'}\\
	\end{pmatrix}\bM^{0}
	\end{equation*}
	And thus by assumption 2, that $\bM^{0}$ is full rank, $U$ is also full row-rank and hence on $\calA$, $\ellu^{*} (\tp)$ is convex.
	
	Next, we now look at the properties of $\nabla \ell_{\mathcal{U}}^{*}(\tp^{0})$ and $\nabla^{2} \ell_{\mathcal{U}}^{*}(\tp^{0})$.
	We have
	\begin{align*}
	\ell_{\mathcal{U}}^{*}(\tp)&= -\sum_{j=1}^{C}\left[E_{r \in \mathcal{U}}\left[\hat{\tilde{q}}_{j}\log(\sum_{i=1}^{C}M_{ij}^{0}p_{i})- a_{rj}\log(a_{rj}) \right] \right]\\
	&= -\sum_{j=1}^{C}\left[(\sum_{i=1}^{C}M_{ij}^{0}p_{i}^{0})\log(\sum_{i=1}^{C}M_{ij}^{0}p_{i})-E_{r \in \mathcal{U}}\left[a_{rj}\log(a_{rj})\right] \right] \numberthis
	\end{align*}
	and thus
	\begin{align*}
	\frac{\partial \ell_{\mathcal{U}}^{*}(\tp)}{\partial p_{i}}\bigg\rvert_{\tp = \tp^{0}} &= -\sum_{j=1}^{C}\frac{(\sum_{i=1}^{C}M_{ij}^{0}p_{i}^{0}) (M_{ij}^{0}-M_{Cj}^{0})}{(\sum_{i=1}^{C}M_{ij}^{0}p_{i}^{0})}\\
	&= -\sum_{j=1}^{C}M_{ij}^{0} + \sum_{j=1}^{C}M_{Cj}^{0}\\
	&= 0 \numberthis
	\end{align*}
	
	And finally looking at $\nabla^{2} \ell_{\mathcal{U}}^{*}(\tp^{0})$ we have
	\begin{equation}
	\nabla^{2} \ell_{\mathcal{U}}^{*}(\tp^{0})\bigg\rvert_{\tp = \tp^{0}} = \mathbf{U}E_{\mathcal{U}}\mathbf{D}\mathbf{U}^{'} \succ 0
	\end{equation}
	
	and thus by Theorem 2.3 of \citep{miller2019asymptotic} we have
	
	\begin{equation}
	\liminf_{N}\inf_{\{||\tp -\tp^{0}||_1 > \epsilon / 2\}}\ellu(\tM^{0}, \tp) - \ell_{\mathcal{U}}(\tM^{0}, \tp^{0}) > \delta
	\end{equation}
	
	for some $\delta > 0$. We now return to the constant $h$ and note that by choosing $h < \frac{\delta K(\epsilon)}{2C^{3}}$, we have
	
	\begin{equation}\label{eq:idp}
	\liminf_{N}\inf_{A}f_{N}(\tM, \tp) - f(\tM^{0}, \tp^{0}) > 0
	\end{equation}
	
	Thus we have proved that both terms in the right hand side of (\ref{eq:rhsterms}) are greater than $0$. Using $\kappa$ to be the minimum of the two terms, 
	we have
	proved the first statement of the Lemma. 
	To prove the second part of the Lemma, let $G_N=\inf_{\|\btheta-\btheta^0\| \geq \eps}f_{N}(\tM, \tp) - f_N(\tM^{0}, \tp^{0})$. Note that $\liminf_N I(G_N \geq \kappa) = I(\liminf_N G_N \geq \kappa)$. We then have
	\begin{align*}
	\liminf_N P(\inf_{\|\btheta-\btheta^0\| \geq \eps}\, f_{N}(\tM, \tp) - f_N(\tM^{0}, \tp^{0}) \geq \kappa) =& \liminf_N E(I(G_N \geq \kappa)) \\
	\geq& E (\liminf_N I(G_N \geq \kappa))\\
	=& E (I(\liminf_N G_N \geq \kappa)) =1.
	\end{align*}
	Here the first inequality is from Fatou's lemma, and the last equality comes from the fact that using the first statement of this Lemma we can say 
	\begin{align*}
	\liminf_N \inf_{\|\btheta-\btheta^0\| \geq \eps}f_{N}(\tM, \tp) - f_N(\tM^{0}, \tp^{0}) &\geq \lim \inf_N \inf_{\|\btheta-\btheta^0\| \geq \eps}f_{N}(\tM, \tp) - f(\tM^{0}, \tp^{0}) \\
	& \qquad + \liminf_N f(\tM^{0}, \tp^{0}) - f_{N}(\tM^0, \tp^0)  \geq \kappa + 0.
	\end{align*}
\end{proof}

\begin{proof}[Proof of Theorem \ref{th:post}]
	Lemma \ref{lem:f}(i) proves pointwise limit of $f_N$, Theorem \ref{lem:id} part (i) proves the identifiability. Hence, the theorem is proved by applying Condition 1 of Theorem 2.3 of \cite{miller2019asymptotic}. 
\end{proof}

\begin{proof}[Proof of Theorem \ref{th:norm}] The posterior mean
	equals $\arg \min_{\bzeta} \int \rho(\btheta-\bzeta)\nu(\btheta)$ where $\nu$ is the Gibbs posterior, and
	$\rho(\bx)=\|\bx\|^2$. We already know that $\btheta^0$ is an interior point of the parameter space, and that $f_N$ satisfies the identifiability condition of Theorem \ref{lem:id} part (ii). Hence, $\btheta^0$, $\rho$ and $f_N$ respectively satisfy conditions 1, 2 and 3 of \cite{chernozhukov2003mcmc}. 
	
	Now, from Lemma \ref{lem:f} (ii) we have $\nabla^3 (f_N - f)$ is uniformly bounded (by some $K$) in any small enough neighborhood $B_\delta(\btheta^0)$. Hence, on $B_\delta(\btheta^0)$, $|\nabla^2 (f_N - f)(\btheta)| \leq |\nabla^2 (f_N - f)(\btheta^0)| + K \delta$. 
	Using $\delta=\eps/(2K)$ we have
	\begin{align*}
	P(\sup_{\btheta \in B_\delta(\btheta^0)} |\nabla^2 (f_N - f)(\btheta)| > \eps) \leq P( |\nabla^2 (f_N - f)(\btheta^0)| > \eps/2) \to 0. 
	\end{align*}
	
	This proves that $f_N$ satisfies Lemma 2 (and hence Condition 4) of \cite{chernozhukov2003mcmc}. The asymptotic normality now follows from Theorem 2 of \cite{chernozhukov2003mcmc}.
\end{proof}

\begin{proof}[Proof of Theorem \ref{th:cp}] We have already proved asymptotic normality of $\sqrt{N}g(\hat\btheta)$ (and hence consistency of $g(\hat\btheta)$) in Theorem \ref{th:norm}. Hence, we only need to prove $\widehat\bJ \to_p \bJ$ and $\widehat\bOmega \to_p \bOmega$ to ensure asymptotically valid coverage of the intervals by Theorem 4 of \cite{chernozhukov2003mcmc}. 
	From Theorem \ref{th:norm}, we have $\hat\btheta \to_p \btheta^0$, hence $\widehat \bq = \widehat\bM'\widehat\bp \to_p \bq^0$. As $\bM^0$ is interior, $q^0_j$ is bounded away from $0$ for all $j$, hence, $1/\hat q_j \to_p q^0_j$. So we have, $$\widehat{\bU^{0'}_{\bM}}= \widehat\bp \otimes \left[diag(1/\widehat \bq_{1:C-1})_{C-1 \times C-1}; - 1/\hat q_C \bone_{C-1 \times 1} \right] \to_p \bU^{0'}_{\bM}.$$
	Similarly, we have $\widehat{\bU^{0'}_{\bp}} \to_p \bU^{0'}_{\bp}$. As $\ba_r$'s for $r \in \calU$ are iid with covariance $\bV_{\ba,\calU}$, we immediately have the sample covariance $\widehat{\bV_{\ba,\calU}} \to_p \bV_{\ba,\calU}$. Hence the first term in the expression of $\widehat\bOmega$ goes in probability to the corresponding term of $\bOmega$. 
	
	Next, we need to show $\widehat{\bV_{\hat g,\calL}} \to_p \bV_{g,\calL}$. As $g_r=(\bb_r \otimes \bD_r)\ba_r$ are iid for $r \in \calL$, their sample covariance $\widehat{\bV_{g,\calL}} \to_p \bV_{g,\calL}$. Hence, it is enough to show $\widehat{\bV_{\hat g,\calL}}- \widehat{\bV_{g,\calL}} \to_p \bO$. Recall that both these matrices are of dimension $C(C-1)$ with rows and columns indexed by the pairs $ij$, for $i=1,\ldots,C$, $j=1\ldots,C-1$. Let $x_r$ and $\hat x_r$ denote the respective components of $g_r$ and $\hat g_r$ corresponding to $M_{ij}$. From (\ref{eq:gr}), 
	$x_r = b_{ri} (\frac {a_{rj}}{\sum_i M^0_{ij}b_{ri}} - \frac {a_{rC}}{\sum_i M^0_{iC}b_{ri}})$. So $|x_r|$ is bounded by $2/K$ where $0 < K=\min_{ij} M^0_{ij}$. Similarly, $|\hat x_r|$ is bounded by $2/\min_{ij} \widehat M_{ij}$. Now, letting $\bar x$ and $\bar{\hat x}$ respectively to be the sample means of $x_r$ and $\hat x_r$ for $r 
	\in \calL$, we have 
	\begin{align*}
	|\widehat{\bV_{\hat g,\calL}}- \widehat{\bV_{g,\calL}}|_{ij,ij} &= |(\frac 1{\xi N} \sum_{r=1}^{\xi N} (x_r^2) - \bar x^2) - (\frac 1{\xi N} \sum_{r=1}^{\xi N} (\hat x_r^2) - \bar{\hat x}^2)|  \\
	&\leqslant (4/K + 4/\min_{ij} \widehat M_{ij}) \max_{r \in \calL} |x_r - \hat x_r|\\
	& \leqslant (4/K + 4/\min_{ij}\widehat M_{ij})  \max_{r \in \calL} b_{ri} \left( a_{rj}\big|\frac {1}{\sum_i M^0_{ij}b_{ri}} - \frac {1}{\sum_i \widehat M_{ij}b_{ri}} \big| + a_{rC}\big|\frac {1}{\sum_i M^0_{iC}b_{ri}} - \frac {1}{\sum_i \widehat M_{iC}b_{ri}}  \big|\right)\\
	&\leqslant (4/K + 4/\min_{ij}\widehat M_{ij}) \frac {4 \max_{i} |M^0_{ij} - \widehat M_{ij}| + 4\max_{i} |M^0_{iC} - \widehat M_{iC}|}{K\min_{ij}\widehat M_{ij}}.
	\end{align*}
	Here the last inequality follows from the fact that $a_{rj}$'s and $b_{ri}$'s are not greater than 1, and that $|\sum_i (\widehat M_{ij} - M^0_{ij})b_{ri}| \leq \max_i |M_{ij} - M^0_{ij}| \sum_i b_{ri} = \max_i |M_{ij} - M^0_{ij}|$ as $\sum_i b_{ri}=1$. As $\widehat M_{ij} \to_p M^0_{ij}$, and there are only $C(C-1)$ such terms, we have $\min_{ij} \widehat M_{ij} \to_p \min_{ij} M^0_{ij}=K > 0$, and $\max_{i} |M^0_{ij} - \widehat M_{ij}| \to_p 0$ for all $j$, proving $|\widehat{\bV_{\hat g,\calL}}- \widehat{\bV_{g,\calL}}|_{ij,ij} \to_p 0$. Repeating this for all the other entries of the matrices, proves $|\widehat{\bV_{\hat g,\calL}}- \widehat{\bV_{g,\calL}}| \to_p \bO$, and hence $\widehat\bOmega \to_p \bOmega$. 
	
	Next to show that $\widehat \bJ \to_p \bJ$, note from (\ref{eq:gradfln}) that 
	$\frac{\partial^{2} \ellN}{\partial M_{ij} \partial M_{i^{'}j^{'}}} = I(j=j^{'}) \frac{1}{N}\sum_{r=1}^{\xi N} \frac{a_{rj}b_{ri}b_{ri^{'}}}{(\sum_{i=1}^{C}M_{ij}b_{ri})^{2}} + \frac{1}{N}\sum_{r=1}^{\xi N} \frac{a_{rC}b_{ri}b_{ri^{'}}}{(\sum_{i=1}^{C}M_{iC}b_{ri})^{2}}$.
	Once again noting $\min_{ij} \widehat M_{ij} \leq \sum_i \widehat M_{ij}b_{ri} \leq 1$, and $K \leq \sum_i M^0_{ij}b_{ri} \leq 1$ we have
	\begin{align*}
	|\widehat J_{M_{ij},M_{i'j'}} - J_{M_{ij},M_{i'j'}}| \leqslant & \frac{1}{N}\sum_{r=1}^{\xi N} a_{rj}b_{ri}b_{ri^{'}} \big|\frac{1}{(\sum_{i=1}^{C}\widehat M_{ij}b_{ri})^{2}}-\frac{1}{(\sum_{i=1}^{C}M^0_{ij}b_{ri})^{2}}\big| \\
	& \qquad + \frac{1}{N}\sum_{r=1}^{\xi N}  a_{rC}b_{ri}b_{ri^{'}} \big|\frac{1}{(\sum_{i=1}^{C}\widehat M_{iC}b_{ri})^{2}} -  \frac{1}{(\sum_{i=1}^{C}M^0_{iC}b_{ri})^{2}}|\\
	& \leqslant \xi \frac{2}{(K\min_{ij} \widehat M_{ij})^2} \left(\max_i |\widehat M_{ij} - M^0_{ij}| + \max_i |\widehat M_{iC} - M^0_{iC}| \right) \to_p 0.
	\end{align*}
	We can prove, the same way, the other entries of $\widehat \bJ$ goes to the corresponding entries of $\widehat \bJ$ as the denominator of all the terms in the Hessian involve either $\sum_i \widehat M_{ij}b_{ri}$ or $\sum_i \widehat M_{ij}\hat p_{i}$ both of which are bounded from below by $\min_{ij} \widehat M_{ij}$. 
\end{proof}


%
%
%

\begin{proof}[Proof of Theorem \ref{th:rate}] We will use $R$ as an universal constant whose value may change from line to line. Let $\eps_1=\sqrt{R\eps}$, and define  $B_{N,\eps_1}(\btheta^0)=\left\{\btheta \,\given \int \log \left(\frac{\tilde p_N(\btheta^0)}{\tilde p_N(\btheta)}\right) dP_N^0 < N\eps_1^2, \int \log^2\left(\frac{\tilde p_N(\btheta^0)}{\tilde p_N(\btheta)}\right) dP_N^0 < N\eps_1^2 \right\}$.
	
	Now, 
	\begin{align*}
	\int \log \left(\frac{\tilde p_N(\btheta^0)}{\tilde p_N(\btheta)}\right) dP_N^0 = N \int (f_N(\btheta) - f_N(\btheta^0)) dP_N^0 = N (f(\btheta) - f(\btheta^0)) 
	\end{align*}
	By Lemma \ref{lem:f}, $f(\btheta)$ is continuously differentiable around a neighborhood of $\btheta^0$. Hence, using Lipschitz continuity, 
	$f(\btheta) - f(\btheta^0) \leq R\|\btheta^0 - \btheta\|$ for some $R > 0$, and we have on $B_\eps(\btheta^0)$,  
	$$ \int \log \left(\frac{\tilde p_N(\btheta^0)}{\tilde p_N(\btheta)}\right) dP_N^0 < NR\eps = N\eps_1^2.$$
	
	Next we look at the squared pseudo-KL divergence,
	\begin{align*}
	\int \log^2 \left(\frac{\tilde p_N(\btheta^0)}{\tilde p_N(\btheta)}\right) dP_N^0 &= N E_\calU \left[ \sum_{j=1}^{C}a_{j}\log\left(\frac {\sum_{i=1}^{C} M^0_{ij}p^0_{i}}{\sum_{i=1}^{C} M_{ij}p_{i}} \right)\right]^2 + \xi N E_\calL\left[\sum_{j=1}^{C}a_{j}\log\left(\frac {\sum_{i=1}^{C} M^0_{ij}b_{i}}{\sum_{i=1}^{C} M_{ij}b_{i}} \right)\right]^2\\
	&\leqslant C N E_\calU  \sum_{j=1}^{C}a^2_{j}\log^2\left(\frac {\sum_{i=1}^{C} M^0_{ij}p^0_{i}}{\sum_{i=1}^{C} M_{ij}p_{i}} \right) + \xi C N E_\calL\sum_{j=1}^{C}a^2_{j}\log^2\left(\frac {\sum_{i=1}^{C} M^0_{ij}b_{i}}{\sum_{i=1}^{C} M_{ij}b_{i}} \right)
	\end{align*}
	
	Here the last inequality applies $(\sum_{i=1}^C u_i)^2 \leq C\sum_{i=1}^C u_i^2$. 
	For any $\bx \in \widetilde S^C$, we have on $B_\eps(\btheta^0)$, $|\sum_{i=1}^{C} M_{ij}x_{i} - \sum_{i=1}^{C} M^0_{ij}x_{i}| \leq \eps$. 
	Also, as $\bM^0$ is an interior point, for small enough $\eps$ we have  on $B_\eps(\btheta^0)$, 
	$0 < K \leqslant \sum_{i=1}^{C} M_{ij}x_{i}$ for all $\bx \in \widetilde S^C$. Hence, using Lipschitz continuity of $\partial \log^2$ in any compact interval bounded away from $0$, and that $E(a_j^2) \leq 1$, we have $\log^2\left(\frac {\sum_{i=1}^{C} M^0_{ij}b_{i}}{\sum_{i=1}^{C} M_{ij}b_{i}} \right) < R\eps$ for all $\bb$ on $B_\eps(\btheta^0)$. Similarly, as $\bp^0$ is also interior, and on $B_\eps(\btheta^0)$, $|\sum_i p_ix_i - p^0_i x_i| \leq \eps$ for all $\bx \in \widetilde S^C$, we will have 
	$$\log^2\left(\frac {\sum_{i=1}^{C} M^0_{ij}p^0_{i}}{\sum_{i=1}^{C} M_{ij}p_{i}} \right) \leq 2 \log^2\left(\frac {\sum_{i=1}^{C} M^0_{ij}p^0_{i}}{\sum_{i=1}^{C} M_{ij}p^0_{i}} \right) + 2\log^2\left(\frac {\sum_{i=1}^{C} M_{ij}p^0_{i}}{\sum_{i=1}^{C} M_{ij}p_{i}} \right) \leq R\eps.$$
	
	Combining, we have 
	$$ \int \log^2 \left(\frac{\tilde p_N(\btheta^0)}{\tilde p_N(\btheta)}\right) dP_N^0 < NR\eps.$$
	
	As $P_{\Pi_N}(B_\eps(\btheta^0)) \geq \exp(-NR\eps)$, we thus have $P_{\Pi_N}(B_{N,\eps_1}(\btheta^0)) \geq \exp(-NR\eps) = \exp(-N\eps_1^2)$. The rest of the proof follows the ideas in \cite[proof of Theorem 3.1]{bhattacharya2019bayesian}. Let $U_n = \{\btheta \given D_{N,\alpha}(\btheta,\btheta^0) \geqslant (D+3t)N\eps_1^2\}$.
	We have
	\begin{align*}
	P_{\nu_N}\left(D_{N,\alpha}(\btheta,\btheta^0) \geqslant (D+3t)N\eps_1^2\right) = \frac{\int_{U_n} \exp(-\alpha Nf_N(\btheta))d\Pi_N}{\int_{\Theta} \exp(-\alpha Nf_N(\btheta))d\Pi_N} \leqslant \frac{\int_{U_n} \exp(\alpha Nf_N(\btheta^0)-\alpha Nf_N(\btheta))d\Pi_N}{\int_{B_{N,\eps_1}(\btheta^0)} \exp(\alpha Nf_N(\btheta^0)-\alpha Nf_N(\btheta))d\Pi_N}.
	\end{align*}
	
	
	We first consider the numerator. We have 
	\begin{align*}
	E^N_0 \int_{U_N} \exp(\alpha Nf_N(\btheta^0)-\alpha Nf_N(\btheta))d\Pi_N &= \int \int_{U_N} \exp(\alpha Nf_N(\btheta^0)-\alpha Nf_N(\btheta))d\Pi_N dP^0_N\\
	&= \int_{U_N} \int  \exp(\alpha Nf_N(\btheta^0)-\alpha Nf_N(\btheta)) dP^0_N d\Pi_N \\
	&= \int_{U_N} \exp(-  D_{N,\alpha}(\btheta,\btheta^0)) d\Pi_N \\ &\leqslant \exp(-(D+3t))N\eps_1^2. 
	\end{align*}
	Application of Markov inequality yields
	\begin{align*}
	P_N^0&\left(\int_{U_N} \exp(\alpha Nf_N(\btheta^0)-\alpha Nf_N(\btheta))d\Pi_N  > \exp(-(D+2t)N\eps_1^2)\right) \\
	& \qquad \leqslant \exp((D+2t)N\eps_1^2) E^N_0\left(\int_{U_N} \exp(\alpha Nf_N(\btheta^0)-\alpha Nf_N(\btheta))d\Pi_N\right) \\
	& \qquad \leqslant \exp(-Nt\eps_1^2). 
	\end{align*}
	For the denominator, as we have established $P_{\Pi_N}(B_{N,\eps_1}(\btheta^0)) \geq \exp(-N\eps_1^2)$, we can directly use the result of \cite[proof of Theorem 3.1]{bhattacharya2019bayesian} to have 
	\begin{align*}
	P_N^0&\left(\int_{B_{N,\eps_1}(\btheta^0)} \exp(\alpha Nf_N(\btheta^0)-\alpha Nf_N(\btheta))d\Pi_N  \leqslant \exp(-\alpha(D+t)N\eps_1^2) \right) \leq \frac 1{N(D+t-1)^2\eps^2_1}.
	\end{align*}

	Combining the probabilities for the numerator and denominator, we have 
	\begin{align*}
	P_{\nu_N}\left(D_{N,\alpha}(\btheta,\btheta^0) \geqslant (D+3t)N\eps_1^2\right) \leq \exp(-(D+2t)N\eps_1^2) \times \exp(\alpha(D+t)N\eps_1^2) \leq \exp(-tN\eps_1^2)
	\end{align*}
	with $P_N^0$-probability $\geq 1 - \frac 1{N(D-1+t)^2\eps_1^2} - \exp(-Nt\eps_1^2) \geq 1 - \frac 1{N(D-1+t)^2\eps_1^2} - \frac 1{Nt\eps_1^2} \geq 1 - \frac 2{NR\eps\min\{{(D-1+t)^2,t}\}}$. 
\end{proof}

\begin{proof}[Proof of Corollary \ref{cor:paramrate}] Let $\Pi(\bp,\bM)$ denote the independent Dirichlet prior for $\btheta=(\bp,\bM)$, i.e., 
$$ \Pi(\bp,\bM) = \mbox{Dirichlet}(\bp \given \balpha_p) \times \prod_{i=1}^C \mbox{Dirichlet}(\bM_{i*} \given \balpha_{Mi}).$$
We first verify that this prior satisfies the prior mass condition of Theorem \ref{th:rate} that $\Pi(B_{\eps_N}(\btheta^0)) > \exp(-NR\eps_N)$ for the choice of $\eps_N=\log N / N$. Let $\eps_N'=\eps_N/(C+1)$. When $\|\bp - \bp^0\|_1 < \eps_N'$ and $\|\bM_{i*} - \bM^0_{i*}\|_1 < \eps_N'$ for all $i=1,\ldots,C$, we have $\|\btheta - \btheta^0\|_1 \leq \eps_N$. Hence, using the independence of the priors, we have
$$
\begin{aligned}
\Pi(\btheta \in B_{\eps_N}(\btheta^0)) &\geq \Pi(\bp \in B_{\eps_N'}(\bp^0)) \times \prod_{i=1}^C \Pi(\bM_{i*} \in B_{\eps_N'}(\bM_{i*}^0))\\
&\geq R_1^{C+1} \exp\left(-(C+1)R_2 \log \left(\frac 1{\eps_N'}\right)\right)
\end{aligned}
$$
Here, the last inequality is directly taken from the concentration bound of the Dirichlet distribution derived in Lemma 6.1 of \cite{ghosal2000convergence}. $R_1$ and $R_2$ are universal constants depending solely on the prior hyper-parameters $\balpha_p$ and $\balpha_{Mi}$.
Letting $R_2^*=(C+1)R_2$ and $R_1^* = R_1^{C+1}/\exp(R_2^*\log(C+1))$ we have the prior mass bounded from below by 
$$R_1^* \exp\left(-R_2^*\log\left(\frac 1{\eps_N}\right)\right).$$ 
Using a value of $R$ in Theorem \ref{th:rate} which is greater than $R_2^*$, as $\log N$ dominates $\log \log N$, we have for large enough $N$,
$$
\begin{aligned}
\log (R_1^*) + R \log N \geq& R_2^* (\log N - \log \log N)\\
\implies \log (R_1^*) + NR\eps_N \geq& R_2^* \log (1 / \eps_N) \\
\implies R_1^* \exp(-R_2^*\log(\frac 1{\eps_N})) \geq& \exp(-NR\eps_N).
\end{aligned}$$
This proves the prior mass condition of Theorem \ref{th:rate} with $\eps_N= \log N/N$. 

Letting $\Xi_N = P_{\nu_N}\left(\frac{D_{N,\alpha}(\btheta,\btheta^0)}N \geqslant 
M\frac{\log N}N \right)$ where $M=(D+3t)R$, we have from Theorem \ref{th:rate},
$$P^0_N \left(\Xi_N > \delta_N \right) < \eta_N$$ for large enough $N$ where  $\delta_N = \exp(-NR\eps_N) = \exp(-R\log N) \to 0$ and $\eta_N = \frac 1{NR\eps_NM'} = \frac 1{M' \log N} \to 0$ (with $M'=\min\{(D-1+t)^2,t\}/2$). This proves the Corollary. 
\end{proof}

\begin{proof}[Proof of Corollary \ref{cor:ens}]
	\noindent\textbf{Part (i).} Let $f_{N,k}$ denote the loss function for the $k^{th}$ classifier, and $f_{,k}$ denote its corresponding limit. Then $f_N=\sum_{k=1}^K f_{N,k}$ and $f=\sum_{k=1}^K f_{,k}$. We can write $\{\btheta \given \|\btheta-\btheta^0\| > \eps \} \subset \cup_{k=0}^K \calA_k$ where $\calA_k=\{\btheta \given \|\tM^{(k)} - \tM^{(k0)} \| > h\}$ for $k=1,\ldots,K$ and $\calA_0=\{\btheta \given \|\tM^{(k)} - \tM^{(k0)} \| < h\, \forall k=1,\ldots,K , \|\btheta-\btheta^0\| > \eps \}$. So we have 
	$$ \liminf_{N}\inf_{\|\btheta-\btheta^0\| > \eps} f_{N}(\btheta) - f(\btheta^{0}) \geq \min_{k \in \{0,1,\ldots,K\}} \liminf_{N}\inf_{\calA_k} f_{N}(\btheta)- f(\btheta^{0})$$. 
	For $k \geq 1$, on $\calA_k$, we have from Lemma \ref{lem:fnl}(iv), $\liminf_{N}\inf_{\calA_k} f_{N,k}(\btheta)- f_{,k}(\btheta^{0}) > \kappa_{k}$ for some $\kappa_k >0$. So, 
	\begin{align*}
	\liminf_{N}\inf_{\calA_k} f_{N}(\btheta)- f(\btheta^{0}) &\geq \liminf_{N}\inf_{\calA_k} f_{N,k}(\btheta)- f_{,k}(\btheta^{0}) + \sum_{k' \neq k} \liminf_{N}\inf_{\calA_k} f_{N,k'}(\btheta)- f_{,k'}(\btheta^{0})\\
	& \geq \liminf_{N}\inf_{\calA_k} f_{N,k}(\btheta)- f_{,k}(\btheta^{0}) + \sum_{k' \neq k} \liminf_{N}\inf_{\bTheta} f_{N,k'}(\btheta)- f_{,k'}(\btheta^{0}) \\
	& \geq \kappa_k.
	\end{align*}
	
	On $\calA_0$, as $\|\tM^{(k)} - \tM^{(k0)} \| < h$  for all $k=1,\ldots,K$, for small enough $h$, we have $\|\tp - \tp^0\| > \eps/2$. We thus have 
	\begin{align*}
	\liminf_{N}\inf_{\calA_0} f_{N}(\btheta)- f(\btheta^{0}) &\geq \sum_{k=1}^K \liminf_{N}\inf_{\calA_0} f_{N,k}(\btheta)- f_{,k}(\btheta^{0}) \\
	&\geq \sum_{k=1}^K \liminf_{N}\inf_{\{\|\tM^{(k)} - \tM^{(k0)} \| < h,\|\tp - \tp^0\| > \eps/2\}} f_{N,k}(\btheta)- f_{,k}(\btheta^{0}) \\
	&\geq K\delta \mbox{ for some } \delta > 0 \mbox{ and } h < \mbox{ some } h_0.
	\end{align*}
	Here the last inequality comes from the proof of Theorem \ref{lem:id}. Combining, we have 
	\begin{equation}\label{eq:idens}
	\liminf_{N}\inf_{\|\btheta-\btheta^0\| > \eps} f_{N}(\btheta) - f(\btheta^{0}) \geq  \min\{\kappa_1,\ldots,\kappa_K,K\delta\} > 0.
	\end{equation}
	We have thus established the analogue of the identifiability result of Theorem \ref{lem:id}(i) for ensemble GBQL. The posterior consistency result for part (i) of this corollary now follows similar to the proof of Theorem \ref{th:post}.
	
	\noindent\textbf{Part (ii).} Using (\ref{eq:idens}) we can prove the analogue of Theorem \ref{lem:id} (ii) for ensemble GBQL in the same way Theorem \ref{lem:id} (i) was used to prove Theorem \ref{lem:id} (ii). Next, we calcuate the $\bOmega$ and $\bJ$ matrices for asymptotic normality result. As $f_{N} = \sum_{k=1}^K f_{N,k}$, it is immediate that $\bJ_{ens}=\nabla_{\btheta^0}^2 f_N = \sum_{k=1}^K \nabla_{\bM^{(k0)},\bp^0}^2 f_{N,k} = \sum_{k=1}^K \bJ_i$. 
	
	Let $\ell_{\calU,N,k}$ denote the loss-function for $\calU$ for the $k^{th}$ classifier and $\ell_{\calU,N} = \sum_{k=1}^K \ell_{\calU,N,k}$. Similarly, define $\ell_{\calL,\xi N, k}$ and $\ell_{\calL,\xi N}$. Then following the proof of Lemma \ref{lem:f}, we have $\partial_{\tM^{(10)},\ldots,\tM^{(K0)}} \ell_{\calL,\xi N} \to_d N(0,\xi \bV_{g,\calL})$. 
	
	As $\bM^{(k)}$ only appears in $\ell_{\calU,N}$ through the loss $\ell_{\calL,N,k}$, we have  $ \frac 1 {\sqrt {N}} \frac{\partial \ell_{\calU,N}}{\partial M^{(k)}_{ij}}=\frac 1 {\sqrt {N}} \frac{\partial \ell_{\calU,N,k}}{\partial M^{(k)}_{ij}}$ and consequently we have, $\frac 1 {\sqrt {N}} \nabla_{\bM^{(k0)}} \ell_{\calU,N} \to_d N(0,\bU^{0'}_{\bM^{(k)}} \bV_{A,k,\calU} \bU^0_{\bM^{(k)}})$ where $\bU^{0'}_{\bM^{(k)}}$ is similar to (\ref{eq:um}) using $\bM^{(k)}$ and $\bp$, and $\bV_{A,k,\calU}=$Cov$_{r \in \calU}(\ba_r^k)$. 
	Finally, we have 
	\begin{align*}
	\frac 1 {\sqrt {N}} \frac{\partial \ell_{\calU,N}}{\partial p_{i}} =& \frac 1{\sqrt N} \sum_{r=1}^N \sum_{j=1}^C \sum_{k=1}^K a^k_{rj} \left(\frac {M^{(k0)}_{ij}-M^{(k0)}_{Cj}}{\sum_i M^{(k0)}_{ij}p^0_i}\right) = (\bu^{10'}_{p_i}, \ldots, \bu^{K0'}_{p_i})' \bA_r.
	\end{align*}
	Consequently, letting $\bU^{k0'}_{\bp}$ denote the matrix similar to (\ref{eq:up}) using $\bM^{k0}$ and $\bp$, and defining $\bU^{0'}_{\bp}=(\bU^{10'}_{\bp},\ldots,\bU^{K0'}_{\bp})$
	we have, $\frac 1 {\sqrt {N}} \nabla_{\bp^0} \ell_\calU \to_d N(0,\bU^{0'}_{\bp} \bV_{A,\calU} \bU^0_{\bp})$.
	
	Combining, all this we we have $f_N/\sqrt N \to_d N(\bzero,\bOmega_{ens})$ where  
	\begin{equation*}
	\bOmega_{ens} = \bU' \bV_{A,\calU}\bU + \xi \left(\begin{array}{cc}
	\bV_{g,\calL}  & \bO \\
	\bO   & \bO
	\end{array}
	\right), \mbox{ where } \bU'=\left(
	\begin{array}{cccc}
	\bU^{0'}_{\bM^{(1)}} & \bO & \cdots & \bO \\
	\bO & \bU^{0'}_{\bM^{(2)}} & \bO & \cdots \\
	\bO & \cdots & \bO & \bU^{0'}_{\bM^{(K)}}  \\
	\multicolumn{4}{c}{\bU^{0'}_{\bp}}
	\end{array}\right).
	\end{equation*}
	
	\textbf{Part (iii).} For valid asymptotic coverage of the confidence intervals, one only needs to prove consistency of $\widehat \bJ_{ens}$ and $\widehat \bOmega_{ens}$. The proof is exactly identical to that of Theorem \ref{th:cp} and is skipped. 
	
	\textbf{Part (iv).} Let $\tilde p_N(\btheta) = \exp(-Nf_N(\btheta))$ and $\eps_1=\sqrt{KR\eps}$. As $f_N$ is sum of $K$ loss functions, one for each classifier, following the proof and notation of Theorem \ref{th:rate}, we immediately have the following
	\begin{equation*}
	\begin{aligned}
	\int \log \left(\frac{\tilde p_N(\btheta^0)}{\tilde p_N(\btheta)}\right) dP_N^0  < NKR\eps = N\eps_1^2, \int \log^2 \left(\frac{\tilde p_N(\btheta^0)}{\tilde p_N(\btheta)}\right) dP_N^0 & < N\eps_1^2\\
	P_N^0\left(\int_{U_N} \exp(\alpha Nf_N(\btheta^0)-\alpha Nf_N(\btheta))d\Pi_N  > \exp(-(D+2t)N\eps_1^2)\right) 
	& \leqslant \exp(-Nt\eps_1^2)\\
	P_N^0\left(\int_{B_{N,\eps_1}(\btheta^0)} \exp(\alpha Nf_N(\btheta^0)-\alpha Nf_N(\btheta))d\Pi_N  \leqslant \exp(-\alpha(D+t)N\eps_1^2) \right) & \leq \frac K{N(D+t-1)\eps^2_1}.
	\end{aligned}
	\end{equation*}
	
	Combining the probabilities for the numerator and denominator, we have 
	\begin{align*}
	P_{\nu_N}\left(D_{N,\alpha}(\btheta,\btheta^0) \geqslant (D+3t)N\eps_1^2\right) \leq \exp(-(D+2t)N\eps_1^2) \times \exp(\alpha(D+t)N\eps_1^2) \leq \exp(-tN\eps_1^2)
	\end{align*}
	with $P_N^0$-probability $\geq 1 - \frac K{N(D-1+t)^2\eps_1^2} - \exp(-Nt\eps_1^2) \geq 1 - \frac 1{NR\eps(D-1+t)^2} - \frac 1{NKRt\eps} \geq 1 - \frac {1+K^{-1}}{NR\eps\min\{{(D-1+t)^2,t}\}}$. 
\end{proof}

\begin{proof}[Proof of Corollary \ref{cor:coarse}] From Section \ref{sec:factor}, 
	\begin{align*}
	\tilde{f}_{N}(\tM, \tp) - f_{N}(\tM, \tp) = &-\frac{1}{N}\sum_{r=1}^{N}\sum_{j=1}^{C}\left( T_{N} \ceil[\Big]{\frac{a_{rj}}{T_{N}}} - a_{rj} \right)\log\left(\sum_{i=1}^{C}M_{ij}p_{i}\right)\\
	&-\frac{1}{N}\sum_{r=1}^{\xi N}\sum_{j=1}^{C} \left( T_{N} \ceil[\Big]{\frac{a_{rj}}{T_{N}}} - a_{rj} \right) \log\left(\sum_{i=1}^{C}M_{ij}b_{ri}\right)
	\end{align*}
	and because $\log\left(\sum_{i=1}^{C}M_{ij}p_{i}\right) < 0$ and $\ceil{Tx}/T - x > 0$, we have
	$ \tilde{f}_{N}(\tM, \tp) \geq f_{N}(\tM, \tp)$
	which along with Theorem \ref{lem:id} part(i) shows that
	\begin{equation}\label{eq:idcoarse}
	\liminf_{N}\inf_{\btheta \notin B_{\epsilon}(\btheta^{0})} \tilde{f}_{N}(\tM, \tp) > f(\tM^{0}, \tp^{0}).
	\end{equation}
	Since $\tilde f_N \to f$ as $N,T_N \to \infty$, this proves identifiability of using $\tilde f_N$ analogous to Theorem \ref{lem:id} part (i) for $f_N$. The posterior consistency of part (i) of this Corollary is now immediate like Theorem \ref{th:post}.
	
	Equation (\ref{eq:idcoarse}) also leads to the analogue of Theorem \ref{lem:id} part (ii) for $\tilde f_N$. This in turn proves the asymptotic normality. As the pointwise limit of $\tilde f_N$ is $f$, same as that of $f_N$. The matrix $\bJ$ remains the same. The matrix $\bOmega_N$ is the variance of $\nabla_{\btheta^0} \tilde f_N$. Note that as we are using the approximate function $\tilde f_N$, we will no longer have $E \nabla_{\btheta^0} \tilde f_N = 0$. However, using the same bounds of $0 < \ceil{Tx}/T -x < 1/T$, we will have $E \nabla_{\btheta^0} \tilde f_N = O(1/T_N)$ which suffices as $T_N \to \infty$. This proves part (ii). 
	
	Part (iii) follows immediately by replacing $\ba_r$ with $\ceil{T_N \ba_r}/T_N$ in the expression of $\widehat \bOmega$. 
\end{proof}

\begin{proof}[Proof of Theorem \ref{th:coarserate}] Define $f_{N,T}(\btheta)$ as the rounded and coarsened version of the loss function, $\tilde p_{N,T}(\btheta)=\exp(- Nf_{N,T}(\btheta))$ and $B_{N,T,\eps_1}(\btheta^0)=\left\{\btheta \,\given \int \log \left(\frac{\tilde p_{N,T}(\btheta^0)}{\tilde p_{N,T}(\btheta)}\right) dP_N^0 < N\eps_1^2, \int \log^2\left(\frac{\tilde p_{N,T}(\btheta^0)}{\tilde p_{N,T}(\btheta)}\right) dP_N^0 < N\eps_1^2 \right\}$. Then following the proof of Theorem \ref{th:rate}, we have
	\begin{align*}
	P_{\nu_{N,T}}\left(D_{N,\alpha}(\btheta,\btheta^0) \geqslant (D+3t)N\eps_1^2\right) \leqslant \frac{\int_{U_n} \exp(\alpha Nf_{N,T}(\btheta^0)-\alpha Nf_{N,T}(\btheta))d\Pi_N}{\int_{B_{N,T,\eps_1}(\btheta^0)} \exp(\alpha Nf_{N,T}(\btheta^0)-\alpha Nf_{N,T}(\btheta))d\Pi_N}.
	\end{align*}
	Let $X$ denote the numerator and $Y$ be the denominator. Then using Fubini's Theorem, 
	\begin{align*}
	E_N^0 X = \int_{U_n} \int \exp(\alpha Nf_N(\btheta^0)-\alpha Nf_N(\btheta)) \frac{\exp(\alpha Nf_{N}(\btheta)-\alpha Nf_{N,T}(\btheta))}{\exp(\alpha Nf_N(\btheta^0)-\alpha Nf_{N,T}(\btheta^0))} dP_N^0\,d\Pi_N. 
	\end{align*}
	
	Now, $0 < \frac {\ceil{Ta_{rj}}}T - a_{rj}$ and $\sum_i M_{ij}p_i \leq 1$. Hence,
	\begin{align*}
	\exp(\alpha Nf_{N}(\btheta)-\alpha Nf_{N,T}(\btheta)) = \prod_{r=1}^N \prod_{j=1}^C (\sum_i M_{ij}p_i)^{\alpha\left(\frac {\ceil{Ta_{rj}}}T - a_{rj}\right)} \prod_{r=1}^{\xi N} \prod_{j=1}^C (\sum_i M_{ij}b_{ri})^{\alpha\left(\frac {\ceil{Ta_{rj}}}T - a_{rj}\right)} \leq 1. 
	\end{align*}
	
	On the other hand, $\frac {\ceil{Ta_{rj}}}T - a_{rj} < \frac 1T$, implying, with $K=\min_{i,j} M^0_{ij} \in (0,1)$, we have 
	\begin{align*}
	\exp(\alpha Nf_{N}(\btheta^0)-\alpha Nf_{N,T}(\btheta^0)) &= \prod_{r=1}^N \prod_{j=1}^C (\sum_i M^0_{ij}p^0_i)^{\alpha\left(\frac {\ceil{Ta_{rj}}}T - a_{rj}\right)} \prod_{r=1}^{\xi N} \prod_{j=1}^C (\sum_i M^0_{ij}b_{ri})^{\alpha\left(\frac {\ceil{Ta_{rj}}}T - a_{rj}\right)} \\
	& \geq \left(K^{C(1+\xi)}\right)^\frac{\alpha N}{T}. 
	\end{align*}
	Letting $R_0=K^{-C(1+\xi)} > 1$ we have $E_N^0 X \leq R_0^\frac{\alpha N}{T} E^N_0\left(\int_{U_N} \exp(\alpha Nf_N(\btheta^0)-\alpha Nf_N(\btheta))d\Pi_N\right)$. Then following the proof of Theorem \ref{th:rate}, we have 
	$ P_N^0\left(X  > \exp(-(D+2t)N\eps_1^2)\right) \leqslant \exp((D+2t)N\eps_1^2) E^N_0 X  \leqslant R_0^\frac{\alpha N}{T} \exp(-Nt\eps_1^2)$. 
	
	Now for the denominator $Y$, note that the only difference between $\tilde p_{N,T}$ and $\tilde p_N$ is that the $a_{rj}$'s are replaced by $\frac {\ceil{Ta_{rj}}}T$. We have shown in Theorem \ref{th:rate} that on $B_\eps(\btheta^0)$, both $\int \log \left(\frac{\tilde p_N(\btheta^0)}{\tilde p_N(\btheta)}\right) dP_N^0$ and $\int \log^2\left(\frac{\tilde p_N(\btheta^0)}{\tilde p_N(\btheta)}\right) dP_N^0$ were less than $N\eps_1^2$. To prove this, the only property of $a_{rj}$'s used were that they were uniformly bounded by $1$. The same holds for $\frac {\ceil{Ta_{rj}}}T$ with the uniform bound $1 + 1/T \leq 2$. Hence, one can exactly replicate that part of the proof of Theorem \ref{th:rate} to show 
	$P_{\Pi_N}(B_\eps(\btheta^0)) \geq \exp(-NR\eps)$, implies $P_{\Pi_{N,T}}(B_{N,\eps_1}(\btheta^0)) \geq \exp(-NR\eps) = \exp(-N\eps_1^2)$ and consequently
	\begin{align*}
	P_N^0&\left(Y  \leqslant \exp(-(D+t)N\eps_1^2) \right) \leq \frac 1{N(D+t-1)\eps^2_1}.
	\end{align*} 
	Combining, as in Theorem \ref{th:rate}, we have 
	\begin{align*}
	P_{\nu_{N,T}}\left(D_{N,\alpha}(\btheta,\btheta^0) \geqslant (D+3t)N\eps_1^2\right) \leq \exp(-(D+2t)N\eps_1^2) \times \exp(\alpha(D+t)N\eps_1^2) \leq \exp(-tN\eps_1^2)
	\end{align*}
	with $P_N^0$-probability $\geq 1 - \frac 1{N(D-1+t)^2\eps_1^2} - R_0^\frac{\alpha N}{T}\exp(-Nt\eps_1^2) \geq 1 - \frac {1+R_0^\frac{\alpha N}{T}}{NR\eps\min\{{(D-1+t)^2,t}\}}$.
\end{proof} 

\pagebreak
\section{Additional simulation studies}\label{sec:addsim}
\subsection{Comparison of Methods for Calculating Credible Intervals}\label{sec:simcp}

While the focus of the GBQL method is the point estimate of $\bp$, we here compare methods for interval estimates for $\bp$, using average coverage probability over many replicate simulations. For these simulations, we use the same 4 choices of $\bp$ specified in (\ref{eq:simp}) and use 

\[\mathbf{M} = \begin{bmatrix}
0.65&0.25&0.02&0.02&0.02 \\
0.06&0.25&0.65&0.02&0.02 \\
0.1&0.1&0.6&0.1&0.1 \\
0.02&0.04&0.04&0.7&0.2 \\
0.02&0.3&0.03&0.05&0.6 \\
\end{bmatrix},\]

so that neither $\bp$ nor the rows of $\bM$ are on the boundary of the unit simplex. We then compare the coverage of the 95\% interval estimates for $\bp$ as obtained by taking a 95\% percentile-based credible interval based on direct samples from the Gibbs posterior, and the delta-method style interval estimates proposed in Theorem \ref{th:cp} around the Gibbs posterior means which have the asymptotic guarantee of well-calibrated coverage. 
Figure \ref{fig:coverage} shows that the credible intervals were always too conservative producing near 100\% coverage probability for every parameter under every scenario. The delta method approach was better calibrated for most parameters across scenarios. There is slight drop in coverage for 2 parameters in scenario 3 which has a true parameter value close to the simplex boundary. We also look at the mean widths of the two sets of interval estimates in Figure \ref{fig:coverage_width}. We see that the credible intervals are uniformly wider than the delta-method intervals  with the difference being more prominent for parameters whose true values are away from the boundary. 
\begin{figure}[H]
	\centering \includegraphics[width=\linewidth]{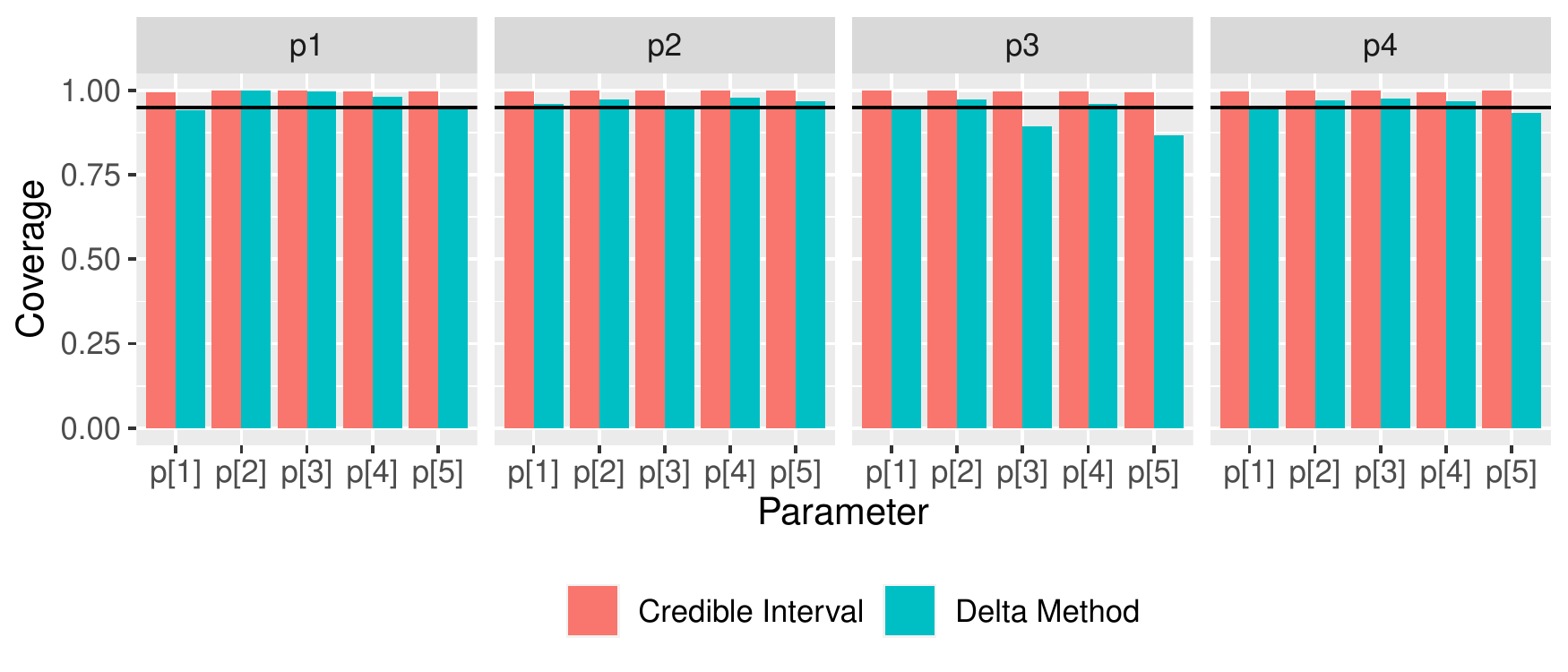}
	\caption{Mean coverage probability of Interval estimates}\label{fig:coverage}
\end{figure}

\begin{figure}[H]
	\centering 
	\includegraphics[width=\linewidth]{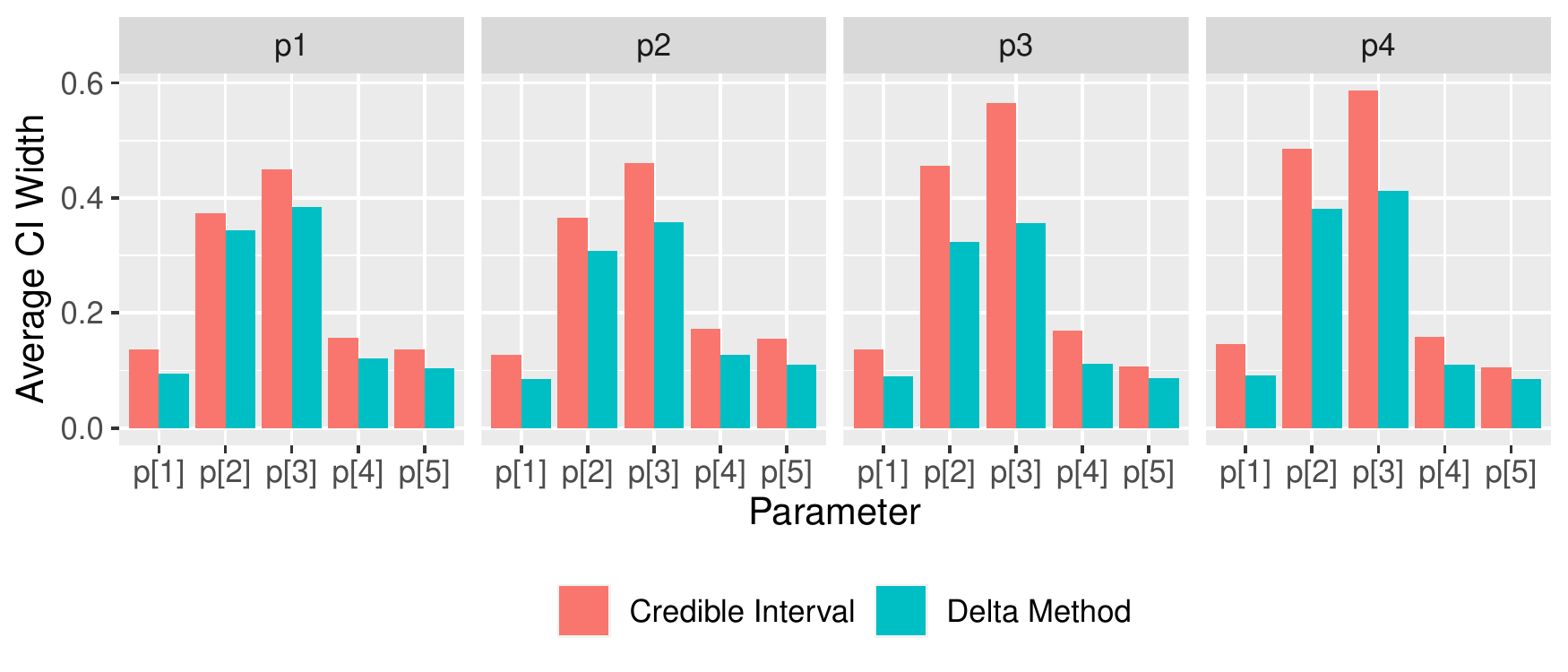}
	\caption{Mean width of Interval estimates}\label{fig:coverage_width}
\end{figure}

\subsection{Approximation of Coarsened Posterior to Full Posterior}\label{sec:simstan}
We empirically assess the accuracy of our conditional-coarsening based  Gibbs sampler compared to directly sampling from the coarsened posterior $\nu_{coarse}$ or the actual (uncoarsened) Gibbs posterior $\nu$ using RStan (Hamiltonian Monte Carlo). 
%
%
We use the same 4 parameter settings of Section \ref{simulations} and $500$ datasets for each setting and plot the average CCNAA across these replicates for our parameter of interest $\bp$. The results are presented in 
Figure \ref{fig:gbql_vs_stan_cnaa} show that each method gives approximately the same CCNAA. 

\begin{figure}[H]
	\centering \includegraphics[width=\linewidth]{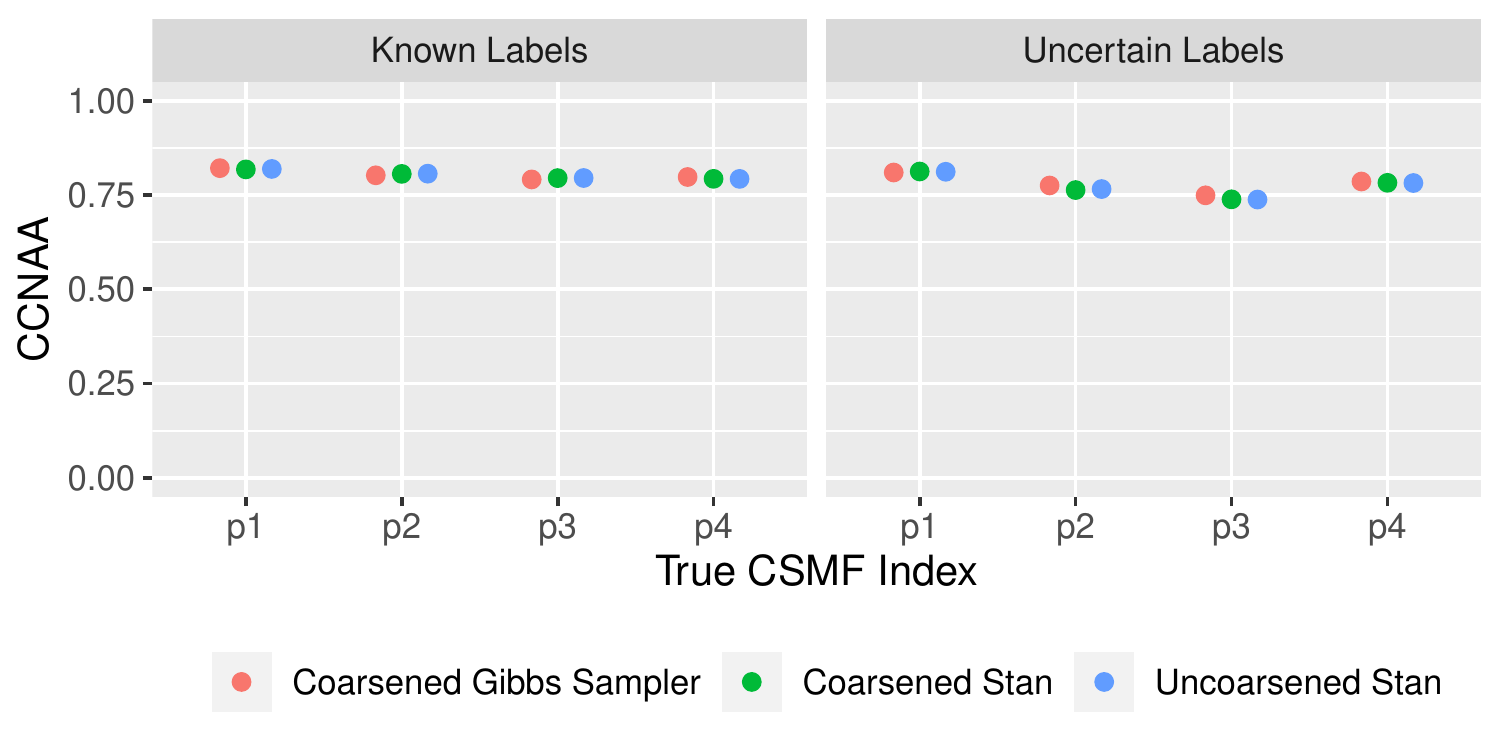}
	\caption{CCNAA of the different sampling methods, for each of the four values of $\bp$ and across known and uncertain labels for $\calL$}\label{fig:gbql_vs_stan_cnaa}
\end{figure}

Next we looked at MCMC convergence for each sampling method. For each method, we took 24,000 posterior samples ($3$ chains and $8,000$ samples per chain) and calculated the Gelman-Rubin diagnostic $\hat R$ for each component of $\bp$. Figure \ref{fig:gbql_vs_stan_rhat} shows the average $\hat{R}$ across $500$ replicate datasets for each index of $\bp$ is below 1.05 for each of the sampling methods, indicating each method generally shows convergence. 

\begin{figure}[H]
	\centering \includegraphics[width=\linewidth]{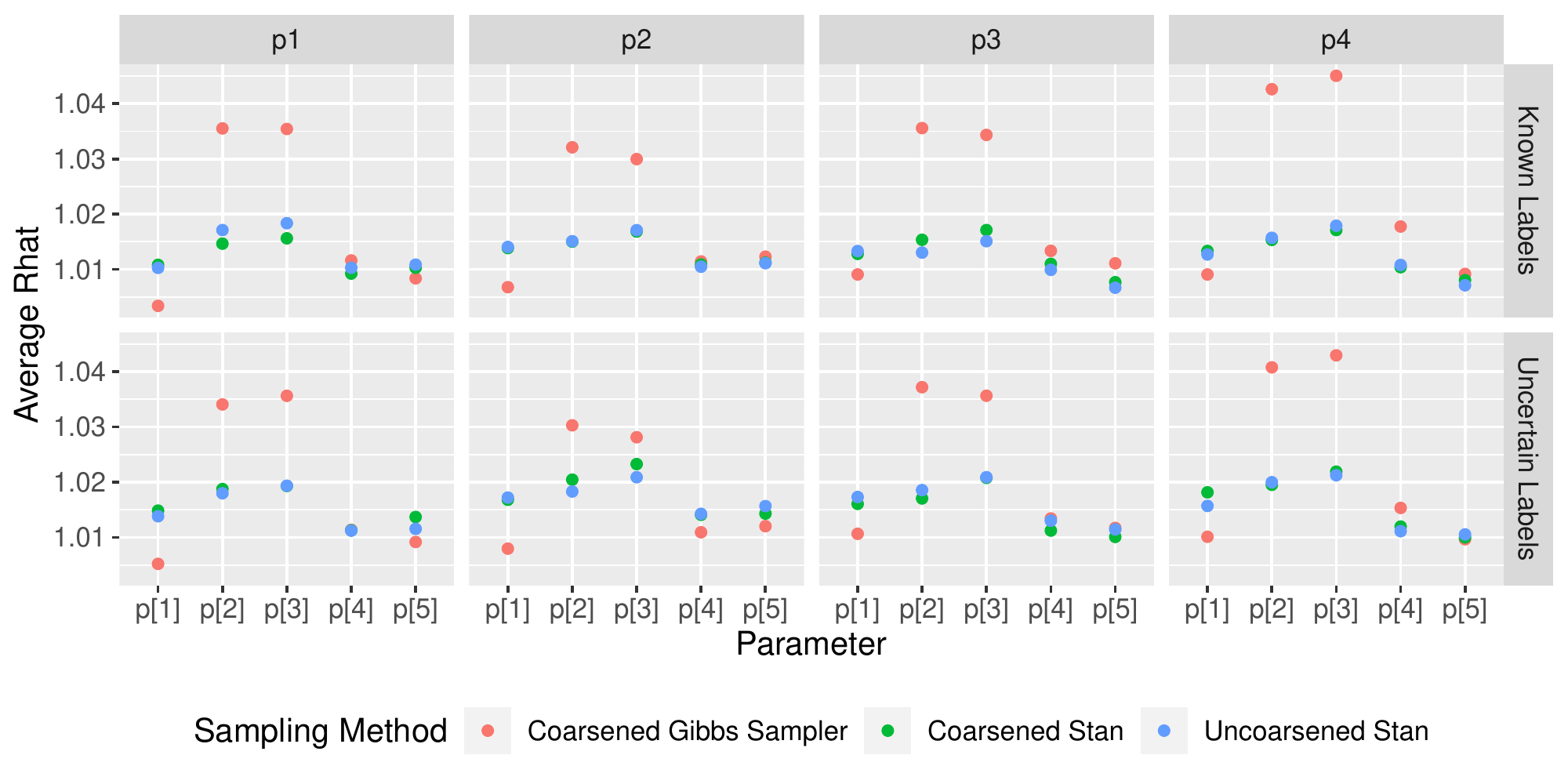}
	\caption{The Gelman-Rubin diagnostic $\hat R$ for each component of $\bp$, across the four different true values for $\bp$ for the different sampling methods.}\label{fig:gbql_vs_stan_rhat}
\end{figure}

Fianlly, we looked at the computational aspects of the 3 sampling methods using run times. Figure \ref{fig:gbql_vs_stan_timing} shows that STAN using Hamiltonian Monte Carlo (HMC) requires dramatically more time than the Gibbs sampler to obtain posterior samples.  The efficiency of the conditional-coarsening based Gibbs sampler is more noticeable (almost 7-8 times faster) for the case with known labels as compared to uncertain labels (3-4 times faster). This is due to the Gibbs Sampler sampling the discrete latent variables for subjects in $\calL$ and $\calU$, while these latent variables are marginalized out in the HMC.

\begin{figure}[H]
	\centering \includegraphics[width=\linewidth]{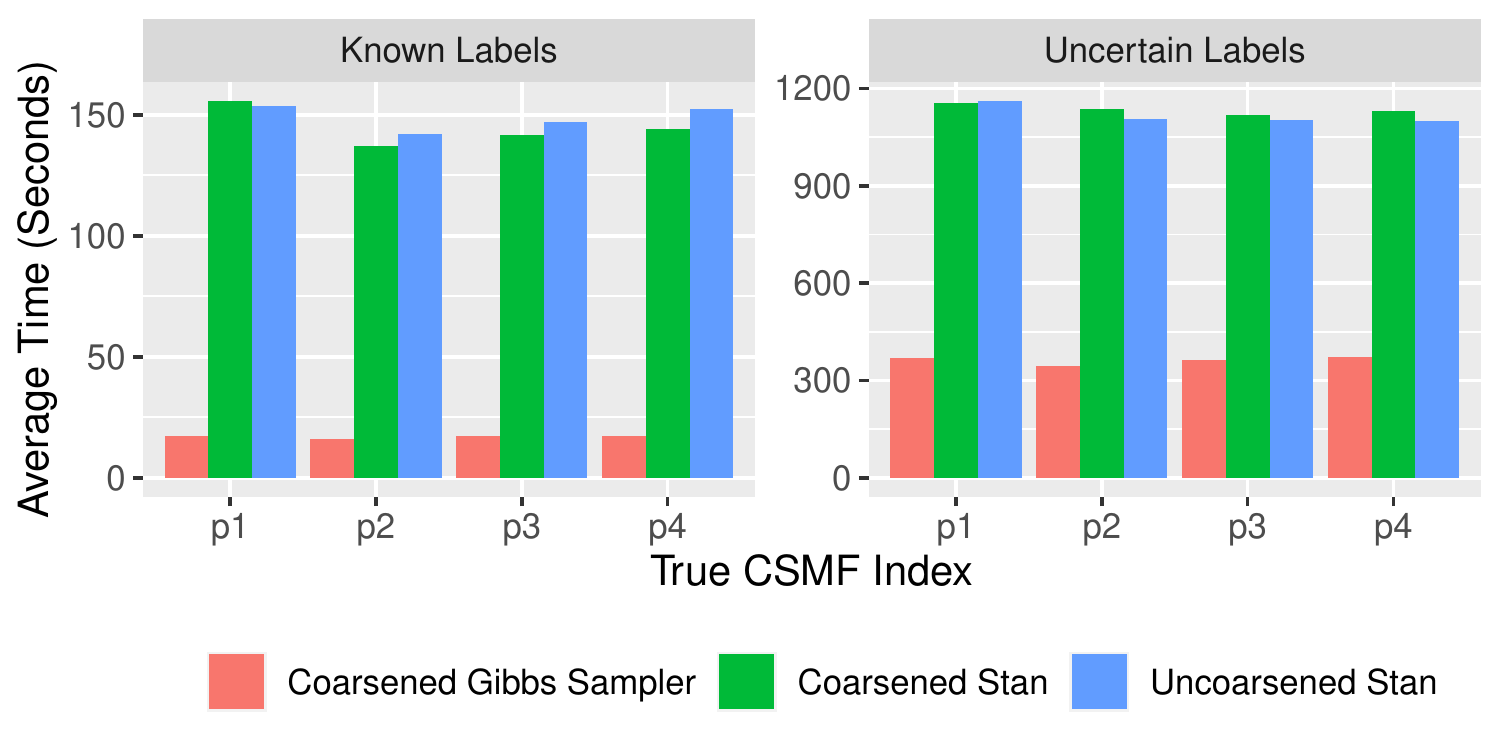}
	\caption{Timing of different posterior sampling methods, across different true values for $\bp$. For each method, we took 24,000 posterior samples (3 chains and 8,000 samples per chain)}\label{fig:gbql_vs_stan_timing}
\end{figure}

In summary, these results show that the GBQL conditional coarsening Gibbs sampler generates posterior estimates nearly indistinguishable from estimates based on direct samples from the coarsened posterior or original GBQL Gibbs posterior, while being substantially faster. 
\subsection{Sensitivity of the Coarsened Gibbs Sampler to the Level of Coarsening}\label{sec:simfactor}

While the previous simulations all used a coarsening factor of $T=100$ and produced accurate estimates of $\bp$, we evaluate the sensitivity of the Gibbs Sampler to different values of $T$. A larger $T$ ensures closer approximation by $\nu_{coarse}$ of the actual Gibbs posterior $\nu$ for GBQL. However, it also increases the number of pseudo-data simulations in the sampler. So conceptually, there is a accuracy-computation trade-off in the choice of $T$. 

In practice, however, the added computation for using larger $T$ is often negligible. To explain with an example, consider an individual with $\ba_{r} = (.1, .9, 0, 0, 0)$. For $T=10$, we will create $10$ pseudo-data $d_{rt}$ according to the model (\ref{eq:round}) such $\sum_{t=1}^{10} I(d_{rt} = 2) = 9$, while for $T=1000$, we create $1000$ pseuo-data such that  $\sum_{t=1}^{1000} I(d_{rt} = 2) = 900$. However, it is clear from the sampler steps in Section \ref{a-gibbs-sampler-for-the-posterior-belief-distribution} that for all the $90$ or $900$ choice of $t$ for which $d_{rt}=2$ the full conditional of $z_{rt} | d_{rt} = 2$ will be the same multinomial distribution. Hence, these $90$ or $900$ can be sampled at once using calls to {\em rmultinom} in R with different sample size. Thus, changing the value of $T$ simply changes how many  multinomial samples to draw. Table \ref{table:multinomtime} shows the median time to take samples from a five-dimensional multinomial distribution  in R is essentially the same for different sample sizes, showing that the sampler scales efficiently with increasing values of $T$ (unless using large $C$ or very large $T$).

\begin{table}[ht]
	\begin{center}
		\begin{tabular}{|l|l|}
			\hline
			\begin{tabular}[c]{@{}l@{}}Sample Size\end{tabular} & \begin{tabular}[c]{@{}l@{}}Median Time\\ (microseconds)\end{tabular} \\ \hline
			1                                                           & 3.45                                                                 \\ \hline
			10                                                          & 3.54                                                                 \\ \hline
			100                                                         & 3.63                                                                 \\ \hline
			1,000                                                       & 3.75                                                                 \\ \hline
			10,000                                                      & 3.66                                                                 \\ \hline
		\end{tabular}
		\caption{The median time in microseconds for R to take samples of various sizes from a uniform multinomial distribution, with 5 categories}\label{table:multinomtime}
	\end{center}
\end{table}

Next, we conduct an actual comparison of accuracy and run times for the Gibbs sampler for different choices of $T$. Figure \ref{fig:coarsening_timing} shows the average time to obtain 24,000 samples from the posterior (when $\bp$ = $\bp_{1}$) is similar across 5 different choices of $T$ of increasing magnitude. The run times to obtain the posterior samples is similar across all values of $T$. 
\begin{figure}[H]
	\centering \includegraphics[width=\linewidth]{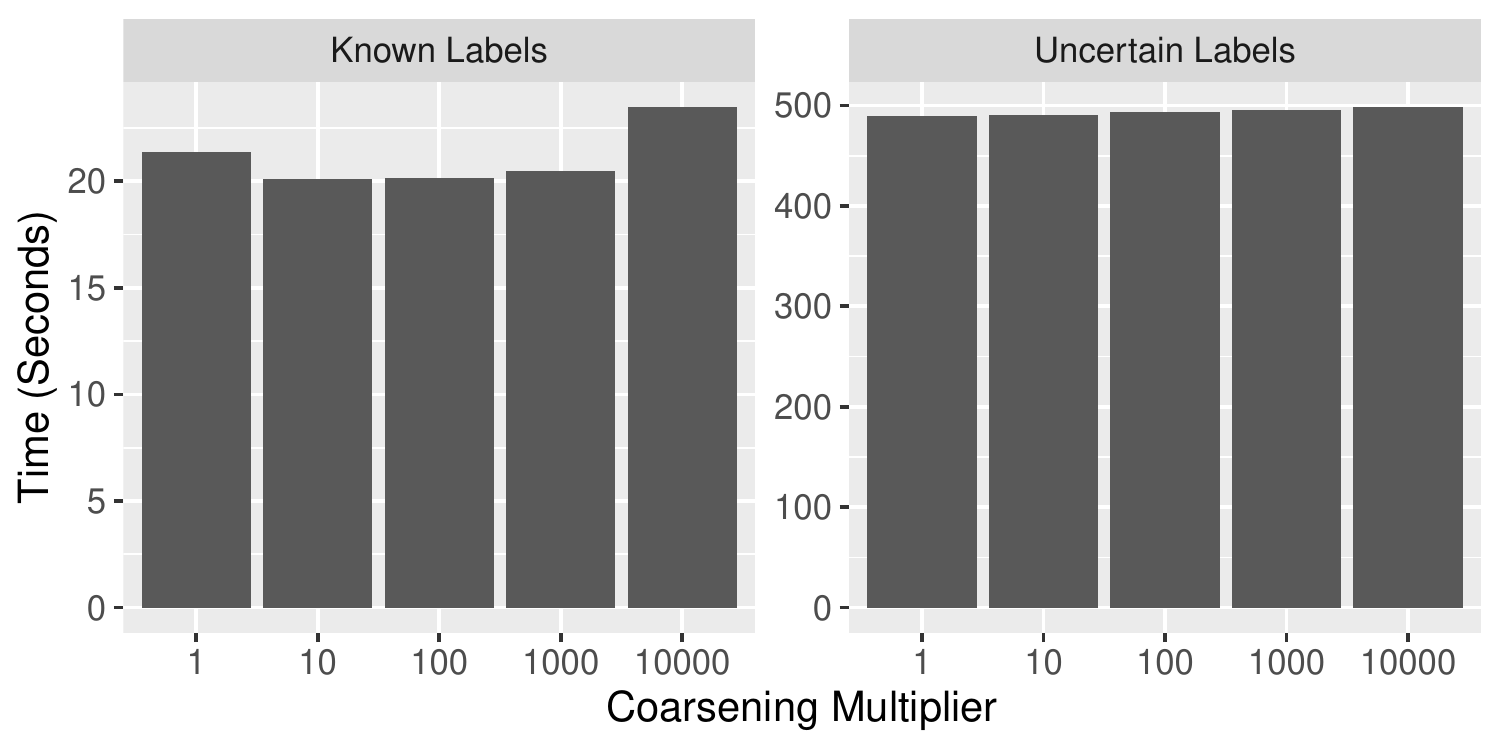}
	\caption{The average time to sample from the posterior, for different values of the coarsening multiplier $T$, when $\bp$ = $\bp_{1}$ (similar results are seen for the other three values for $\bp$).}\label{fig:coarsening_timing}
\end{figure}

\begin{figure}[]
	\centering \includegraphics[width=\linewidth]{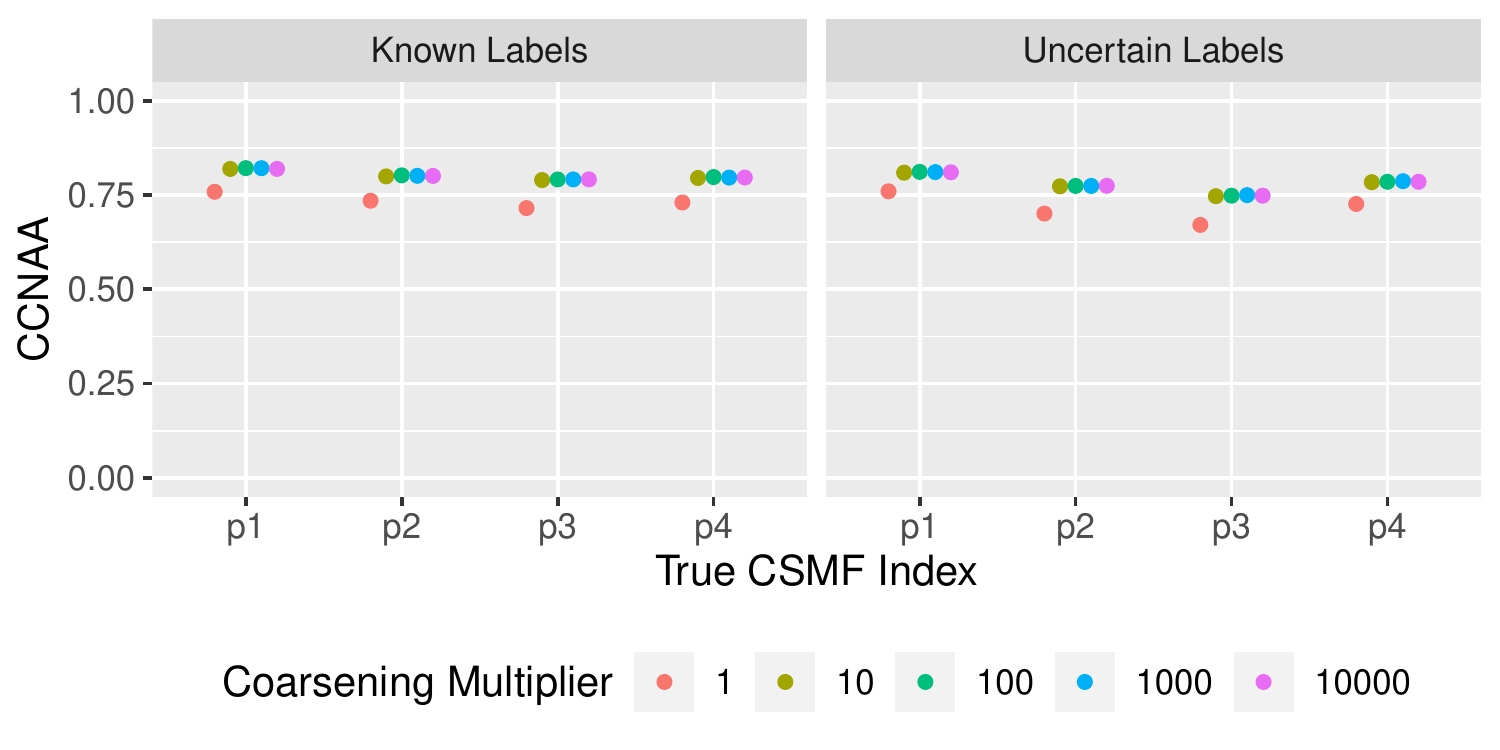}
	\caption{The CCNAA across different values of the coarsening multiplier $T$.}\label{fig:coarsening_ccnaa}
\end{figure}

Figure \ref{fig:coarsening_ccnaa} looks at the accuracy in estimating $\bp$ for the 5 choices of $T$. We see that setting $T = 1$ produces the least accurate estimate of $\bp$, while the results are similar for all the other $4$ values of $T$.

\subsection{Priors for Sparse Misclassification rates}\label{sec:simpara}
We compare the GBQL model with uninformative Dirichlet priors and no enforced sparsity, versus a sparse model discussed in Section \ref{sec:para} where the zero-values of $\bM$ used in Section~\ref{simulations} are correctly set equal to 0. To undersstand the impact on the size of the labeled data $\calL$ on estimation of $\bM$ for the two methods, we use  two choices of $n=|\calL|$. Figure \ref{fig:sparse_ccnaa} shows that when $|\calL|$ is $50$, i.e., only around $10$ cases per cause, the sparse model outperforms the full model, leading to a higher CCNAA. When $|\calL|$ grows to $300$, i.e., around $60$ cases per cause, the performance of the full model is almost indistinguishable, and the methods have a similar CCNAA across all 4 choices of  $\bp$. 

\begin{figure}[H]
	\centering \includegraphics[width=\linewidth]{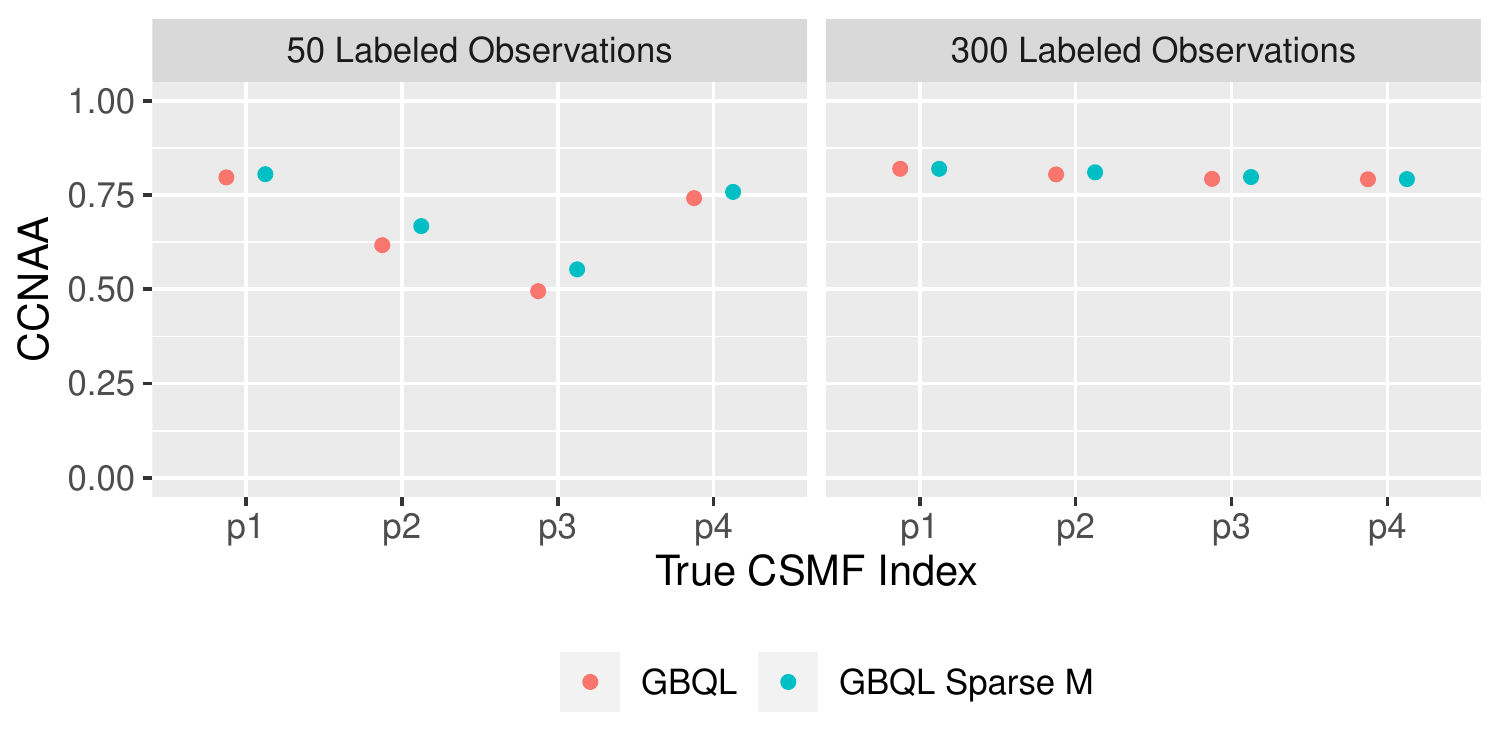}
	\caption{CCNAA of the GBQL model with uninformative priors, versus the GBQL model with sparsity enforced through setting entries of $\bM$ to zero. The two columns show the results for $|\calL|$ = 50 and 300, from left to right, respectively}\label{fig:sparse_ccnaa}
\end{figure}
\end{document}